\documentclass[conference]{IEEEtran}

\usepackage{mathtools,amsthm,amssymb} 
\usepackage{aliascnt} 
\usepackage{algorithm}

\usepackage{booktabs}
\usepackage{array}
\usepackage{multirow}
\usepackage{enumitem} 
\usepackage{xcolor}
\usepackage{tcolorbox}
\usepackage[hidelinks]{hyperref} 
\usepackage{cleveref}
\usepackage{needspace}
\usepackage{fancyhdr}
\usepackage{tikz} 
\usetikzlibrary{shapes,arrows,arrows.meta,positioning,calc,fit,shapes.geometric}

\usepackage{etoolbox}

\usepackage{balance}    
\usepackage{needspace}  
\usepackage{etoolbox}   
\preto\section{\needspace{6\baselineskip}}
\preto\subsection{\needspace{5\baselineskip}}
\preto\subsubsection{\needspace{6\baselineskip}}

\theoremstyle{definition}
\newtheorem{theorem}{Theorem}[section]   
\newaliascnt{lemma}{theorem}
\newtheorem{lemma}[lemma]{Lemma}
\aliascntresetthe{lemma}
\newaliascnt{proposition}{theorem}
\newtheorem{proposition}[proposition]{Proposition}
\aliascntresetthe{proposition}
\newaliascnt{corollary}{theorem}

\aliascntresetthe{corollary}
\newaliascnt{observation}{theorem}

\aliascntresetthe{observation}
\newaliascnt{definition}{theorem}
\newtheorem{definition}[definition]{Definition}
\aliascntresetthe{definition}
\newaliascnt{remark}{theorem}
\newtheorem{remark}[remark]{Remark}
\aliascntresetthe{remark}
\newaliascnt{note}{theorem}

\aliascntresetthe{note}
\newaliascnt{assumption}{theorem}

\aliascntresetthe{assumption}
\newaliascnt{condition}{theorem}
\newtheorem{condition}[condition]{Condition}
\aliascntresetthe{condition}
\newaliascnt{example}{theorem}
\newtheorem{example}[example]{Example}
\aliascntresetthe{example}

\crefname{theorem}{Theorem}{Theorems}
\Crefname{theorem}{Theorem}{Theorems}
\crefname{lemma}{Lemma}{Lemmas}
\Crefname{lemma}{Lemma}{Lemmas}
\crefname{proposition}{Proposition}{Propositions}
\Crefname{proposition}{Proposition}{Propositions}
\crefname{corollary}{Corollary}{Corollaries}
\Crefname{corollary}{Corollary}{Corollaries}
\crefname{observation}{Observation}{Observations}
\Crefname{observation}{Observation}{Observations}
\crefname{definition}{Definition}{Definitions}
\Crefname{definition}{Definition}{Definitions}
\crefname{remark}{Remark}{Remarks}
\Crefname{remark}{Remark}{Remarks}
\crefname{example}{Example}{Examples}
\Crefname{example}{Example}{Examples}
\crefname{note}{Note}{Notes}
\Crefname{note}{Note}{Notes}
\crefname{assumption}{Assumption}{Assumptions}
\Crefname{assumption}{Assumption}{Assumptions}
\crefname{condition}{Condition}{Conditions}
\Crefname{condition}{Condition}{Conditions}
\crefname{algorithm}{Algorithm}{Algorithms}
\Crefname{algorithm}{Algorithm}{Algorithms}
\crefname{table}{Table}{Tables}
\Crefname{table}{Table}{Tables}
\crefname{figure}{Figure}{Figures}
\Crefname{figure}{Figure}{Figures}
\crefname{appendix}{Appendix}{Appendices}
\Crefname{appendix}{Appendix}{Appendices}

\begin{document}
\bstctlcite{IEEEexample:BSTcontrol}

\title{Certificate-Governed CRT Sparse FFT: Verified Global-Label Candidate Construction under Explicit Decoding Models}

\author{
  Aaron R. Flouro and Shawn P. Chadwick, PhD. \\
  SparseTech, Iowa, USA \\
  research@sparse-tech.com
}

\maketitle

\begin{abstract}
We present a deterministic CRT-based sparse Fourier architecture for exact recovery in a noiseless, on-grid, at-most-$k$-sparse model with an engineered transform length. The length $N$ is selected at design time as an exact product of small pairwise-coprime stage primes, so each stage transform is sized by the sparsity rather than by the ambient dimension. At each stage, three co-prime-increment time shifts provide a one-sided singleton-consistency test: every genuine singleton passes, but a passing bin is only a candidate-singleton. Each candidate-singleton is submitted to a total label-emission rule that emits at most one global frequency label, and the per-stage label sets are intersected to form a candidate set $\mathcal{G}$ of deterministically bounded size $O(k)$ for every input in the engineered exact-product family.

The sparse-attempt cost depends on the declared decoding model. In a comparison real-RAM model with exact real arithmetic and exact $\arg(\cdot)$, but without unit-cost root-index reads, candidate construction uses $O(k \log N)$ samples and $O(k \log^2 N / \log k)$ arithmetic. In the stronger root-index-oracle model, with no precomputed root table, the arithmetic tightens to $O(k \log N)$. A verified sparse output adds a consecutive-sample residual verifier, costing $O(k^2)$ arithmetic in the oracle model. Neither model is bit-level, and no bit-complexity claim is made.

Correctness is verifier-gated rather than screen-gated. All-stage genuine-singleton survival of a nonempty support is sufficient for the candidate set to contain the true support and for the sparse path to complete; without that condition, the sparse attempt can omit true tones or admit structured phantom labels. The final verifier reads $k + |\mathcal{G}|$ consecutive residual samples: a PASS proves the sparse output exact, while any sparse-path failure routes to a dense FFT. Thus the hybrid algorithm returns the exact spectrum for every input in the noiseless, on-grid, at-most-$k$-sparse model, with $O(N \log N)$ worst-case runtime and no randomized sampling or hashing.
\end{abstract}

\begin{IEEEkeywords}
Sparse FFT, Verified Recovery, Chinese Remainder Theorem, Global-Label Engine, Deterministic Algorithms
\end{IEEEkeywords}

\section{Introduction}

Sparse Fast Fourier Transform (sFFT) methods trade among randomness, algebraic structure, transform-length constraints, noise assumptions, and the form of their recovery guarantees. This paper studies a narrow operating model: known very-low sparsity, exact on-grid support, noiseless observations, and design-time selection of the transform length $N$.

The architecture is a single chain. An engineered transform length, built as the exact product of small pairwise-coprime stage primes, gives every stage a transform sized by the sparsity rather than by $N$. Co-prime-increment time shifts then apply a one-sided screen that admits provisional candidate-singletons without proving singleton status, each candidate-singleton is submitted to the total emission rule, which emits at most one global frequency label, and the per-stage label sets are intersected rather than multiplied.

A sound verifier discharges the surviving candidates, and any failure routes to a dense fallback. The verifier's PASS on its window of $k + |\mathcal{G}|$ consecutive residual samples, where $\mathcal{G}$ is the constructed candidate set, is the acceptance \emph{certificate} of the title: the sparse output is emitted only under that certificate, and a PASS implies the emitted spectrum is exact (\Cref{thm:verifier-soundness}). Within that operating model the core result is a conditional sparse-candidate architecture: candidate construction runs in $O(k \log^2 N / \log k)$ in a comparison model without unit-cost decoding reads, tightening to $O(k \log N)$ in an explicitly stated exact label-decoding cost model (\Cref{thm:cost-comparison-model,thm:sparse-path-complexity}), while a standard dense FFT supplies an $O(N \log N)$ fallback whenever the verified sparse path does not complete.

\subsection{Relationship to Prior Work}

\noindent The multi-stage CRT framework adopted here builds on the closed-form noise-tolerant CRT line developed by Xia and collaborators, beginning with the undersampled multi-frequency estimation analysis of \cite{xia2000estimation}, the closed-form noise-tolerant CRT formulation of \cite{wang2010closed}, the improved-performance variant of \cite{xiao2014robust}, the multi-stage noise-tolerant CRT of \cite{xiao2015multistage}, and the truly sub-Nyquist treatment of \cite{xiao2017frequency}. The present work differs from this lineage by (i) targeting exact on-grid noiseless recovery rather than noise-tolerant frequency estimation in additive noise, (ii) integrating a per-stage one-sided co-prime shift phase-consistency screen (with an arithmetic-shift Hankel/Prony rank test available as an optional deterministic escalator) and a deterministic dense FFT fallback, and (iii) operating at very low sparsity (\Cref{rem:engineering-regime}) under an explicit prime-availability hypothesis $L \le \pi([k,ck])$.

\emph{Randomized hashing and subsampling.} For a broad survey of sublinear sparse-FFT methods we refer to Gilbert, Indyk, Iwen \& Schmidt~\cite{gilbert2014sfftsurvey}. Probabilistic approaches, notably Hassanieh, Indyk, Katabi \& Price~\cite{hassanieh2012nearly,hassanieh2012simple}, use randomized subsampling and permutation hashing to achieve sublinear $O(k \log N)$ complexity with high-probability correctness. Indyk, Kapralov, and Price~\cite{ikp2014soda} achieved nearly sample-optimal $O(k \log N)$ complexity, and Kapralov~\cite{kapralov2016stoc} extended the lineage to any constant dimension with nearly-optimal sample complexity. A CRT-based variant, Pawar and Ramchandran's FFAST~\cite{pawar2018ffast}, exploits coprime-factor aliasing on signals of length $N = k \cdot 2^t$ to reach $O(k \log k)$ runtime with sample-optimal $O(k)$ subsampling. These methods have probabilistic ($1-\delta$) correctness with limited worst-case guarantees, and FFAST requires a structural constraint on $N$ that rules out arbitrary DFT lengths.

\emph{Deterministic algebraic and structured-support methods.} The Fourier-learning line traces to Kushilevitz \& Mansour~\cite{kushilevitz1993learning}, whose KM algorithm recovers sparse Fourier spectra over the Boolean cube and seeded the sparse-Fourier program. Iwen~\cite{iwen2013improved} introduced algebraic hashing using coprime modular decimation. Plonka \& Wannenwetsch~\cite{plonka2016numalg,plonka2017jcam} gave deterministic sublinear sparse FFTs for vectors whose support lies in a short interval of length $m < N$: $O(m \log m)$ arithmetic for exactly measured data, and, for real non-negative vectors, $O(m \log m \log(N/m))$ arithmetic on $O(m \log(N/m))$ Fourier samples with no a priori knowledge of the support length. Bittens, Zhang \& Iwen~\cite{bittens2017acha} gave deterministic sublinear-time algorithms for structured Fourier sparsity. These frameworks rely on hash-style bucket isolation or short-support periodization and do not use the co-prime-shift singleton screen plus sound-verifier mechanism studied here.

\emph{Fully discrete deterministic sparse FFTs.} Merhi, Zhang, Iwen \& Christlieb~\cite{merhi2019discrete} widened that line to arbitrary inputs and arbitrary lengths: their deterministic algorithm returns a near-best $s$-term approximation of the DFT in $O(s^2 \log^{11/2} N)$ time for any vector in $\mathbb{C}^N$ and any $N$, with randomized implementations reported alongside it.

Merhi et al.'s guarantee~\cite{merhi2019discrete} is unconditional where the present one is not: their $O(s^2 \log^{11/2} N)$ time bound carries no survival condition and no unit-cost decoding assumption, while the $O(k \log N)$ stated here depends on both a survival-conditioned sparse completion and the declared exact label-decoding cost model. Their algorithm~\cite{merhi2019discrete} also accepts any vector in $\mathbb{C}^N$ for any $N$, where the present construction requires $N$ to be selected at design time as an exact product of pairwise-coprime stage primes. The comparison therefore does not favor the present paper on generality. What the restrictions buy is a sparsity dependence that is linear rather than quadratic, and exact recovery within the noiseless on-grid model in place of a near-best $s$-term approximation.

\emph{Noise-tolerant and multiscale sublinear methods.} The closed-form noise-tolerant CRT line of \cite{xia2000estimation,wang2010closed,xiao2014robust,xiao2015multistage,xiao2017frequency} targets frequency estimation in additive noise rather than exact on-grid noiseless recovery. Christlieb, Lawlor \& Wang~\cite{christlieb2016multiscale} reach the noisy setting from a mechanism close to the one used here: sampling at two equispaced sets offset in time makes the offset appear in the decimated spectrum as a per-frequency modulation, which both detects aliasing and reads an isolated frequency directly. Their noisy extension~\cite{christlieb2016multiscale} performs that read progressively at geometrically spaced shifts and reports $O(k \log k \log(N/k))$ time on average when the noise is not overwhelming. Reading a frequency out of a shifted decimated observation is therefore not new here; what differs is the surrounding contract, since the present paper fixes the transform length by design, treats the shift signature as a one-sided screen rather than a decision, and takes its correctness from a verifier rather than the read.

\emph{Orthogonal measurement models.} A separate body of work varies the measurement model rather than the recovery algebra: compressed sensing recovers sparse signals from random linear projections~\cite{donoho2006compressed}, quantized acquisition pushes this to a single bit per measurement~\cite{boufounos2008onebit}, and sparse-operator identification recovers structured operators from few observations~\cite{heckel2013identification}. These directions are complementary to the present exact, full-precision, on-grid setting and are not used here; the one-bit and compressed-measurement regimes are noted as robustness extensions of the noiseless screen.

\emph{Prior keyed-gating and safety-screened CRT work.} Flouro \& Chadwick~\cite{flouro2026keyed}, ``Keyed Gating for Multi-View Sparse FFT,'' introduced a deterministic two-of-three keyed CRT-gating mechanism that reduced candidate sets from $O(k^2)$ to $O(k)$ and validated frequencies through iterative residue refinement on \mbox{$\sqrt{N}$-sized} decimated views. Flouro \& Chadwick~\cite{flouro2026safety}, ``Safety-Certified CRT Sparse FFT,'' added safety certificates and an adaptive dense-FFT fallback giving $O(N \log N)$ worst-case runtime.

\emph{Distinction of the present paper.} The present paper studies the complementary very-low-sparsity end of the deterministic CRT sparse FFT line (\Cref{rem:engineering-regime}). It replaces \mbox{$\sqrt{N}$-sized} decimations with $\Theta(k)$-sized stage DFTs and uses co-prime time shifts $\{0, 31, 63\}$ for a low-order singleton-consistency screen. Unlike the earlier keyed-gating and safety-screened constructions, which address larger decimation scales and worst-case safety control, this paper isolates the engineered very-low-sparsity fast path and states its complexity conditionally (see \Cref{thm:sparse-path-complexity,thm:global-label-completeness}). The construction is complementary rather than dominant: it does not displace randomized $O(k \log N)$ schemes or \mbox{$\sqrt{N}$-scaled} deterministic schemes at larger $k$, but extends the deterministic CRT family into the very-low-sparsity regime where $k$-sized stages dominate, and does so without weakening the correctness guarantee.

\begin{table*}[t]
\centering
\caption{Comparison of Operating Models and Complexity Guarantees}
\label{tab:comparison}
\small
\setlength{\tabcolsep}{4pt}
\newcommand{\LMPrr}{\raggedright\arraybackslash}
\begin{tabular}{>{\LMPrr}p{2.8cm}>{\LMPrr}p{2.6cm}>{\LMPrr}p{2.7cm}>{\LMPrr}p{2.1cm}>{\LMPrr}p{3.5cm}>{\LMPrr}p{2.2cm}}
\toprule
\textbf{Algorithm} & \textbf{Complexity} & \textbf{Cost model}$^\S$ & \textbf{Correctness Model} & \textbf{Design} & \textbf{Deterministic?} \\
\midrule
\textbf{This paper} & $O(k \log^2\! N/\log k)$ conditional sparse-path$^\dagger$; tightens to $O(k \log N)$ with the root-index oracle & Comparison real-RAM (b) / root-index-oracle real-RAM (a); arithmetic ops & Deterministic, exact$^*$ & Multi-stage CRT with $\Theta(k)$-sized stage DFTs & Deterministic CRT \\
\addlinespace
Keyed Gating~\cite{flouro2026keyed} & $O(\sqrt{N} \log k)$ conditional sparse-path & Not stated & Deterministic$^*$ & 3-view CRT with \mbox{$\sqrt{N}$-sized} FFTs & Deterministic CRT \\
\addlinespace
Safety-Certified CRT~\cite{flouro2026safety} & $O(N \log N)$ worst-case & Not stated & Deterministic & \mbox{$\sqrt{N}$-sized} FFTs + safety certificates & Deterministic CRT \\
\addlinespace
MIT sFFT~\cite{hassanieh2012nearly,hassanieh2012simple} & $O(k \log N)$ & Not stated; time & $1-\delta$ & Random hashing & No \\
\addlinespace
IKP-SODA~\cite{ikp2014soda} & $O(k \log N)$ sample-optimal & Not stated; samples and time & $1-\delta$ & Random hashing + sample optimality & No \\
\addlinespace
FFAST~\cite{pawar2018ffast} & $O(k \log k)$, $N = k \cdot 2^t$ only & Unit-cost RAM I/O; arithmetic ops & $1-\delta$ & Coprime-factor aliasing & No \\
\addlinespace
AAFFT~\cite{iwen2013improved} & $O(k \log k \log N)$ & Not stated; time & Deterministic & Algebraic & Yes \\
\addlinespace
Plonka-Wannenwetsch~\cite{plonka2016numalg,plonka2017jcam} & $O(m \log m)$, support interval of length $m$$^\ddagger$ & Not stated; arithmetic ops & Deterministic & Short-support periodization & Yes \\
\addlinespace
Bittens et al.~\cite{bittens2017acha} & $O(k \log k \log N)$ & Not stated; runtime & Deterministic & Structured Fourier sparsity & Yes \\
\addlinespace
Merhi et al.\ DSFT~\cite{merhi2019discrete} & $O(s^2 \log^{11/2}\! N)$, any $N$ & Not stated; time & Deterministic, near-best $s$-term approx. & Periodized-Gaussian conversion of USSFT methods to fully discrete SFT & Yes \\
\bottomrule
\end{tabular}

\vspace{0.3em}
{\footnotesize\raggedright
$^*$Always returns the exact spectrum within the noiseless, on-grid, at-most-$k$ model.\par
\vspace{0.2em}
$^\dagger$Candidate-construction cost only (models (a), (b) of \Cref{def:label-decoding-model}, the no-oracle figure being \Cref{thm:cost-comparison-model}); $O(N \log N)$ worst case via the dense fallback when the exact-product envelope is not used or sparse completion fails. The verified path adds the sound verifier, for $O(k \log N) + O(k^2)$ arithmetic on $O(k \log N)$ samples (\Cref{prop:verifier-cost}).\par
\vspace{0.2em}
$^\ddagger$Structurally different rather than dominated: Plonka-Wannenwetsch needs no survival condition, but assumes the support lies in one interval of length $m$; this paper takes arbitrary scattered at-most-$k$ supports at the price of that condition and the dense fallback.\par
\vspace{0.2em}
$^\S$The machine model each row's own source declares for its own bound, with the unit that bound counts. Only this paper and FFAST name a model at all, and the units are not uniform, so the complexity column is not a single ruler.\par
}
\end{table*}

\subsection{Contribution and Scope}

The architecture itself is narrated once, in the chain description above; the list below states the claims, each carried by a stated result:
\begin{enumerate}
\item Engineered divisor-compatible stage construction: the transform length is design-selected as $N = \prod_{\ell=1}^{L} p_\ell$ with pairwise-coprime stage primes in $[k, ck]$, giving $L = O(\log_k N)$ stages of $\Theta(k)$-sized FFTs in place of square-root-scale decimations (\Cref{def:transform-family}).
\item One-sided candidate-singleton screen from the shift signature: the co-prime-increment shifts give every true singleton exact magnitude invariance and exact modular phase progression, a fast necessary screen whose passes are provisional (\Cref{thm:phase-consistency-certificate}).
\item Proved deterministic $O(k)$ candidate count: each candidate-singleton emits at most one global frequency label (possibly none) by the total rule of \Cref{def:total-recovery}, and the per-stage label sets are intersected, so the $O(k)$ candidate count holds for every input (\Cref{thm:global-label-count}).
\item Candidate-construction cost in two explicit decoding models: $O(k \log^2 N / \log k)$ arithmetic in the comparison real-RAM model, tightening to $O(k \log N)$ in the exact label-decoding (root-index oracle) model of \Cref{def:label-decoding-model}, which costs the per-bin root-index read explicitly and uses no precomputed root table (\Cref{thm:cost-comparison-model,thm:sparse-path-complexity}).
\item Survival-conditioned containment and verified sparse completion: under all-stage genuine-singleton survival the constructed candidate set contains $\mathcal{S}$, so the verified sparse path returns $X$ exactly (\Cref{thm:global-label-completeness}); its verified cost is $O(k \log N) + O(k^2)$ arithmetic on $O(k \log N)$ samples in the oracle model (\Cref{prop:verifier-cost}).
\item Verified fallback architecture with proved soundness: exact recovery is made unconditional within the noiseless, on-grid, at-most-$k$ model by a sound final verifier constructed and proved sound here (\Cref{def:verifier}, \Cref{thm:verifier-soundness}); sparse candidates that fail any operational condition or final verification are routed to a dense FFT, giving $O(N \log N)$ worst-case runtime (\Cref{thm:hybrid-correctness}).
\end{enumerate}

The formal results are stated for every sparsity $k \ge 3$ under the engineered exact-product model of \Cref{def:transform-family} (fixed absolute constant $c$, pairwise-distinct primes $p_\ell \in [k, ck]$, $L \le \pi([k,ck])$), and every $O(\cdot)$ cost expression is a genuine asymptotic bound over this unbounded family. The intended engineering regime is the very-low-sparsity range $3 \le k < 20$ (\Cref{rem:engineering-regime}), where $\Theta(k)$-scaled stages remain smaller than \mbox{$\Theta(\sqrt{N})$-scaled} decimations; instances at $k = 20$ and beyond are formally admissible and are shown as illustrations of the regime boundary, where the \mbox{$\sqrt{N}$-scaled} schemes amortize more effectively (see \Cref{tab:comparison}). A single-frequency worked trace at $N = 46{,}189$, $k=8$ appears in \Cref{ex:n1024_k8}.

\subsection{Roadmap}

\noindent \Cref{sec:problem} fixes the operating envelope for sparse completion and \Cref{sec:preliminaries} the notation; \Cref{sec:algorithm-overview} then presents the multi-stage algorithm with a single-frequency walkthrough (\Cref{ex:n1024_k8}). \Cref{sec:screens} develops the one-sided singleton screen and its core lemmas, and \Cref{sec:main-theorem} assembles the main results: the sound final verifier, the two explicit-model conditional sparse-path cost theorems (the $O(k \log N)$ oracle-model bound and the $O(k \log^2 N / \log k)$ comparison-real-RAM bound), and the deterministic $O(k)$ global-label candidate count. \Cref{sec:computational} covers parameter selection and memory, \Cref{sec:adversarial} exercises the dense fallback on adversarial supports, and \Cref{sec:limitations} states the limitations before the conclusion. \Cref{app:tuple-negative} records the superseded residue-tuple analysis as a negative result, \Cref{app:screen-analysis} collects the deferred screen analysis, \Cref{app:detailed-example} gives the full worked trace, \Cref{app:cost-models} collects the further cost-model discussion, and \Cref{app:keyed-multiview} records the keyed multi-view extension as design history and future work.

\noindent Unless stated otherwise, every unqualified $O(k \log N)$ sparse-path arithmetic figure in this paper is stated in the root-index-oracle model of \Cref{def:label-decoding-model}(a); comparison-model figures are explicitly marked.

\section{Problem Statement and Operating Envelope}
\label{sec:problem}

\noindent This section presents the formal model of the paper. Two objects are design-selected and stated here as definitions (the signal model and the engineered exact-product transform family), and one is a premise of the conditional results and stated here as a condition. Later sections cite these three objects rather than restating them.

\begin{definition}[Signal Model]
\label{def:signal-model}
The recovery setting is exact, noiseless, exact-arithmetic recovery of an at-most-$k$-sparse on-grid spectrum. The signal is $x \in \mathbb{C}^N$ whose $N$-point DFT $X$ satisfies $|\mathrm{supp}(X)| \le k$ with support frequencies on the integer grid $f \in \{0, 1, \ldots, N-1\}$; the sparsity $k$ is known and $k \ge 3$; observations are noiseless and all arithmetic is exact. All formal results of this paper are stated for every such $k$; the very-low-sparsity range in which the construction is intended to be fielded is prose context (\Cref{rem:engineering-regime}), never a hypothesis of a theorem.
\end{definition}

\begin{definition}[Engineered Exact-Product Transform Family (Operating Model)]
\label{def:transform-family}
The transform length is \emph{design-selected}, not given: it is the exact product
\[
  N \;=\; Q_L \;=\; \prod_{\ell=1}^{L} p_\ell
\]
of $L$ pairwise distinct stage primes $p_\ell \in [k, ck]$, fixed at design time to realize that product rather than selected at runtime, where $c \ge 2$ is an absolute constant fixed independently of $k$ and $N$ (the widened interval $c > 2$ is used whenever $\pi([k,2k]) < L$; $c \le 4$ suffices for every configuration of \Cref{tab:prime-availability}), subject to the stage-count bound $2 \le L \le \pi([k,ck])$, where $\pi(\cdot)$ counts primes in the interval; the lower bound is a standing hypothesis of this model, excluding the degenerate single-stage instance $L = 1$ (where $N = p_\ell \le ck$ is itself of order $k$ and a multi-stage construction is vacuous), and every result stated under this model inherits it. Since $p_\ell \ge k \ge 3$, every stage prime is odd, so $N$ is odd and $\gcd(32, N) = 1$, as required by the B\'ezout label inversion of \Cref{lem:global-label-recovery}(3). Distinct primes in $[k,ck]$ are automatically pairwise coprime and each divides $N$, so the decimation of every stage is divisor-compatible. The exact-product relation gives the stage count $\log_{ck} N \le L \le \log_k N$, i.e.\ $L = O(\log_k N)$, and per-stage transform sizes $m_\ell = \Theta(p_\ell) = \Theta(k)$. Because $L \ge 2$, the length is a product of at least two distinct primes $\ge k$, so every admissible instance satisfies
\[
  N \;\ge\; k(k+1) \;>\; k^2 ,
\]
whence $k < \sqrt{N}$ and $k^2 = O(N)$; this is the fact the runtime accounting of \Cref{thm:hybrid-correctness} uses.

Acquisition uses the fixed co-prime time-domain shift triple $\{0, \Delta_1, \Delta_1 + \Delta_2\}$ with $\gcd(\Delta_1, \Delta_2) = 1$, specifically $\{0, 31, 63\}$, fixed deterministically per the keyed gating construction. Candidate-singleton screening and label emission follow the full exact emission predicate of \Cref{def:exact-screen} (the occupied equal-magnitude pre-screen, then the total label-emission rule: root membership and class consistency); normalized MAD, unwrapped-phase regression, and frequency consistency are its implementation surrogates (\Cref{rem:impl-surrogates}) and are not part of this model. Likewise the engineered mixed-radix batching factor $2^t$ (in this paper $t = 6$, so an implementation length $64 p_\ell$) is a cache-alignment convenience outside the formal model (\Cref{rem:64k-implementation}); the formal per-stage transform is the $p_\ell$-point DFT.
\end{definition}

\begin{condition}[All-Stage Genuine-Singleton Survival (Fast-Path Condition)]
\label{cond:all-stage-survival}
Every support frequency $f \in \mathcal{S} = \mathrm{supp}(X)$ is the unique support frequency in its class $f \bmod p_\ell$ at every stage $\ell = 1, \ldots, L$, so its bin holds a genuine singleton there and passes the screen. This is a premise of the conditional complexity and containment results, not a property enjoyed by every input, and it is never assumed for correctness. Under it the candidate set contains $\mathcal{S}$ (\Cref{thm:global-label-completeness}) and the verified sparse path returns $X$ exactly; a support tone that collides at some stage is not guaranteed in the candidate set and is recovered by the dense fallback. (For true tones this survival implies that every stage emits their true labels along the sparse path, since an isolated bin always satisfies the full exact emission predicate of \Cref{def:exact-screen}; the converse is not certified, the screen being one-sided, since a collided bin may also pass, cf.\ \Cref{prop:two-tone-phantom}.) A violation of this condition is not detectable from the data: the one-sided screen cannot certify it stage-locally, and it surfaces only operationally, as a stage emitting no class-consistent global label, a budget overflow, or a final-verifier FAIL.

\emph{Nonempty support is part of the condition.} The zero spectrum $X = 0$ is admissible under \Cref{def:signal-model} and satisfies the per-frequency clause above vacuously, yet it yields no occupied bins and no labels, so every stage emits no class-consistent global label and \Cref{alg:multistage_oklogn} routes to the dense fallback, which returns $X = 0$ exactly. This condition therefore additionally requires $1 \le |\mathcal{S}|$, and every completion claim conditioned on it concerns nonempty supports.

\emph{The condition is satisfiable at every $k$, so the fast path is not asymptotically vacuous.} If the support lies in an integer interval of length at most $\min_\ell p_\ell$ inside $[0,N)$, then any two distinct support frequencies differ by less than every stage prime, so no stage prime divides their difference and every tone is alone in its class at every stage. Such supports exist at every admissible instance, since $\min_\ell p_\ell \ge k$ and $N > k^2$. This short-support family is sufficient for the condition, not necessary: arbitrary scattered supports also satisfy it whenever they happen to avoid all-stage collisions, which is why the condition is strictly weaker than the short-support hypothesis of the deterministic interval methods (\Cref{tab:comparison}).
\end{condition}

\begin{remark}[Intended Engineering Regime]
\label{rem:engineering-regime}
The results of this paper are stated asymptotically for every $k \ge 3$ under \Cref{def:signal-model,def:transform-family}: for fixed absolute $c$ the prime supply $\pi([k,ck])$ grows without bound with $k$ (\Cref{lem:prime-existence}), so the admissible engineered lengths $N = \prod_{\ell=1}^{L} p_\ell$ are unbounded and every $O(\cdot)$ expression is a genuine asymptotic bound over this unbounded family.

The \emph{intended engineering regime}, by contrast, is the very-low-sparsity range $3 \le k < 20$, and the reason is structural: because the stage primes are proportional to the sparsity ($p_\ell \in [k, ck]$), the per-stage per-bin collision probability under a uniform-random-support heuristic stays near a nonzero constant ($1 - (1-1/p_\ell)^{k-1} \approx 1 - e^{-(k-1)/p_\ell}$, between $1 - e^{-1/c}$ and $1 - e^{-1}$) rather than vanishing at larger $k$, so all-stage genuine-singleton survival of all $k$ tones grows increasingly demanding as $k$ rises and the sparse path's advantage concentrates at very low sparsity (\S\ref{sec:limitations}). The regime bound is an engineering target, never a theorem hypothesis: instances at $k \ge 20$ are formally admissible and appear in \Cref{tab:prime-availability} as boundary illustrations.
\end{remark}

\noindent\emph{Operating envelope.} We refer to \Cref{def:signal-model,def:transform-family} together with \Cref{cond:all-stage-survival} as the \emph{operating envelope for sparse completion}: the full set of conditions under which the verified sparse path completes rather than falling back. Cost and count alone require less: the $O(k \log N)$ candidate-construction cost is stated in the cost model of \Cref{def:label-decoding-model} and holds under the engineered exact-product transform family alone; the survival condition is what additionally places $\mathcal{S}$ inside the $O(k)$ candidate set, and hence what keeps the algorithm on the sparse path instead of the dense fallback. The $O(k)$ candidate count is not an envelope condition but the proved \Cref{thm:global-label-count}, holding for every input; exactness of an accepted sparse output is the proved verifier of \Cref{thm:verifier-soundness} rather than an assumption. Any input that violates a complexity condition is routed to the $O(N \log N)$ dense FFT fallback, which is exact on every input of \Cref{def:signal-model}; the violation itself is not certified stage-locally (the screen is one-sided), but it surfaces operationally as a stage emitting no class-consistent global label, a budget overflow, or a final-verifier FAIL, each of which triggers the routing.

\noindent\emph{Scope.}
This paper analyzes exact on-grid, noiseless recovery for known sparsity $k \ge 3$ under the model of \Cref{def:signal-model}. The screened sparse path is engineered for the very-low-sparsity regime, its intended engineering regime being $3 \le k < 20$ (\Cref{rem:engineering-regime}), where $k$-scaled mixed-radix stages remain smaller than competing \mbox{$\sqrt{N}$-scaled} decimations; illustrations at $k \ge 20$ in this paper are admissible instances marking the boundary of that regime and are labeled as such. Larger $k$ values are covered by the same formal theorems, but the sparse-path advantage is expected to degrade because all-stage genuine-singleton survival becomes less frequent; such inputs remain exact through verifier-gated acceptance or dense fallback.

The formal sparse-path theorem (\Cref{thm:sampling-bluestein}) applies when the transform length is chosen as $N = Q_L$ exactly, the engineered product of the stage primes; non-product transform lengths are handled by the dense-fallback option of \Cref{subsec:sampling-operator}. The prime-availability constraint is $L \le \pi([k,ck])$ for an explicitly selected constant $c \ge 2$, which with the exact-product model $N = \prod_\ell p_\ell$ gives the stage relations $\log_{ck} N \le L \le \log_k N$. The special case $c = 2$ gives the tightest stage-size envelope, while larger $c$ is used for very small $k$ when $[k,2k]$ does not contain enough primes. Off-grid frequencies, additive noise, and out-of-envelope sparsity ranges are admitted only as discussed extensions, with the dense FFT fallback handling out-of-envelope inputs at $O(N\log N)$ cost.

\noindent\emph{Problem.} Within this scope, the recovery task is stated as follows: given time-domain samples $x[n]$ for a length-$N$ signal whose frequency-domain representation $X$ has at most $k \ll N$ non-zero on-grid components, recover $X$ exactly. The standard FFT~\cite{cooley1965algorithm} achieves $O(N \log N)$ complexity unconditionally, while sparse FFT algorithms~\cite{hassanieh2012nearly,hassanieh2012simple,iwen2013improved} aim for sublinear complexity by exploiting sparsity. The recovery problem we address is the noiseless, on-grid version of this sparse FFT task: the support of $X$ is unknown but lies on the integer DFT grid, and the goal is exact reconstruction with provable correctness rather than high-probability success. The information-theoretic floor that any support-identifying procedure must respect is recorded next.

\begin{remark}[Information-Theoretic Counting Bound]
\label{rem:lower-bound}
Identifying which of the $\binom{N}{k}$ possible $k$-sparse supports a signal belongs to requires conveying at least $\log_2\binom{N}{k}$ bits of information~\cite{cover2006elements}. By Stirling's approximation,
\[
\log_2\binom{N}{k} \approx k \log_2(N/k) + k,
\]
so $\Omega(k \log(N/k))$ bits of input must be read by any support-identifying procedure. This is an information-theoretic bound, not a computational lower bound: converting it to a lower bound on operations requires fixing a specific machine model (e.g., algebraic decision tree, comparison-based, real-RAM with $O(1)$-time arithmetic), which is not asserted here.
\end{remark}

\subsection{Detection vs Avoidance}

A distinction exists between two approaches to handling collisions in sparse FFT algorithms:

\emph{Collision Avoidance:} Traditional sparse FFT algorithms use hashing schemes to probabilistically prevent multiple frequencies from mapping to the same bin. The Fredman-Koml\'os lower bound of $\Omega(k \log(N/k)/\log k)$~\cite{fredman1984separating} applies to specific collision-avoidance / hash-table models; the model-free counting floor that no procedure escapes is \Cref{rem:lower-bound}.

\emph{Engineered Divisor-Compatible Reconstruction (This Work):} The present framework does not claim to bypass the Fredman-Koml\'os bound as a generic lower bound. Instead, the algorithm operates in a different model: collisions are allowed to occur in the decimated spectra, and a bin failing the multi-dimensional geometric tests grounded in co-prime time shifts (\Cref{def:coprime-shifts}) emits no label, the input reaching the dense-FFT fallback only through the stage- and chain-level routing rule of \Cref{thm:hybrid-correctness}; a measure-zero screen-passing phantom (\Cref{prop:two-tone-phantom}) is admitted to the candidate set and, if it survives the global-label intersection and is used in the reconstructed candidate, is rejected there by the residual verifier (\Cref{rem:one-sided-normative}). The lower bound therefore does not constrain the present construction on inputs whose sparse attempt completes (every stage emits a class-consistent global label and both budgets hold); on inputs whose sparse attempt aborts, the algorithm bills $O(N \log N)$ work and no claim is made about beating any sparse-FFT lower bound.

\noindent The screen tests use exact magnitude equality (implementation surrogate: normalized MAD) and exact ratio/root-membership conditions (\Cref{def:exact-screen}; surrogate: unwrapped phase regression~\cite{tribolet1977new}), with an optional frequency-consistency cross-check (Quinn/Jacobsen estimators~\cite{quinn1994estimating,jacobsen2007fast}) as an implementation surrogate outside the formal screen; their behavior is summarized as: true singletons pass; generic collisions fail the screen and emit no label (optionally after higher-order Hankel/Prony escalation, \Cref{lem:arith-shift-hankel-rank}), the dense-FFT fallback being a stage- and chain-level decision (\Cref{thm:hybrid-correctness}); but a measure-zero coefficient slice of collisions passes the screen at every $m \ge 2$ (\Cref{prop:two-tone-phantom,prop:higher-order-phantom}), and such screen-passing phantoms are handled as in \Cref{rem:one-sided-normative}.

The construction underlying the screen is the use of co-prime time shifts $\{0, \Delta_1, \Delta_1+\Delta_2\}$ with $\gcd(\Delta_1, \Delta_2) = 1$. By B\'ezout's identity, two distinct on-grid frequencies cannot simultaneously satisfy both increment congruences, so the two individual frequencies of a colliding pair cannot each equal one tone that is consistent at all three shifts. This is a componentwise, pairwise statement only: it does not prevent the phasor sum of the colliding tones from mimicking, on a measure-zero coefficient slice, the three observations of a genuine singleton (the explicit two-tone phantom of \Cref{prop:two-tone-phantom}), so it is not a two-tone false-accept impossibility. Combined with the occupied equal-magnitude pre-screen and the total label-emission rule (\Cref{def:exact-screen}), this gives a one-sided screen (\Cref{rem:one-sided-normative}): every true singleton is accepted and a passing bin is only a candidate.

\needspace{12\baselineskip}\section{Preliminaries and Notation}
\label{sec:preliminaries}

\subsection{Discrete Fourier Transform}

\begin{definition}[Discrete Fourier Transform~\cite{oppenheim2010discrete}]
\label{def:dft}
The Discrete Fourier Transform (DFT) of a length-$N$ time-domain signal $x[n]$ is defined as:
\begin{equation}
X[f] = \sum_{n=0}^{N-1} x[n] \cdot e^{-j2\pi fn/N}, \quad f = 0, 1, \ldots, N-1
\end{equation}
where $X[f]$ represents the frequency-domain coefficient at frequency $f$, and $j = \sqrt{-1}$. The corresponding inverse transform reconstructs $x[n]$ from its coefficients with the conjugate ($+$ sign) kernel, consistent with the unified sign convention used throughout:
\begin{equation}
x[n] = \frac{1}{N} \sum_{f=0}^{N-1} X[f] \cdot e^{+j2\pi fn/N}, \quad n = 0, 1, \ldots, N-1 .
\end{equation}
\end{definition}

\begin{definition}[k-Sparse Spectrum]
A frequency-domain spectrum $X$ is \emph{k-sparse} if the number of non-zero frequency components is at most $k$:
\begin{equation}
|\text{supp}(X)| \leq k, \quad \text{where } \text{supp}(X) = \{f : X[f] \neq 0\}
\end{equation}
\end{definition}

\begin{definition}[On-Grid Frequencies]
Frequencies $f \in \{0, 1, \ldots, N-1\}$ are \emph{on-grid}. Off-grid frequencies (fractional frequencies between grid points) require separate analysis due to spectral leakage effects.
\end{definition}

\begin{definition}[Co-prime Time Shifts]
\label{def:coprime-shifts}
Three time shifts $\{0, \Delta_1, \Delta_1 + \Delta_2\}$ are \emph{co-prime} if $\gcd(\Delta_1, \Delta_2) = 1$. In this case B\'ezout's identity (over the integers) guarantees that there exist integers $a, b \in \mathbb{Z}$ with $a \Delta_1 + b \Delta_2 = 1$. As an angular-congruence corollary, the only solution to both $(\nu_1 - \nu_2) \cdot \Delta_1 \equiv 0 \pmod{2\pi}$ and $(\nu_1 - \nu_2) \cdot \Delta_2 \equiv 0 \pmod{2\pi}$ is $\nu_1 = \nu_2$~\cite{vaidyanathan2001theory}. Consequently, no single colliding pair of distinct frequencies can individually produce phase alignment at all three shifts. This is a componentwise, pairwise statement only; the shift design yields a one-sided screen (\Cref{rem:one-sided-normative}).
\end{definition}

\begin{definition}[Shifted Decimated Spectrum and Screen Statistics (Notation)]
\label{def:multidim-isolation}
For a stage modulus $M \mid N$ and time shift $s$, the \emph{shifted decimated spectrum} is $X_M[r;s] \coloneqq \tfrac{M}{N} \sum_{f \equiv r \,(\bmod\, M)} X[f]\, e^{+j 2\pi s f/N}$, the size-$M$ DFT of the $s$-shifted decimated samples $x[(s + m N/M) \bmod N]$, $m = 0, \ldots, M-1$ (derived in \Cref{subsec:sampling-operator}); at $s = 0$ it carries the aliasing prefactor, $X_M[r;0] = (M/N)\, X_M[r]$ in the unshifted notation of \Cref{def:m-decimated-dft}.

For a bin $r$ read at the operative shift triple $\{0, 31, 63\}$ ($\Delta_1 = 31$, $\Delta_2 = 32$), the shift-valued observations are $z_0, z_{31}, z_{63}$ with $z_\sigma = X_M[r;\sigma]$; the normalized magnitude range is $D_{\max}(r) = (\max_\sigma |z_\sigma| - \min_\sigma |z_\sigma|)/\max_\sigma |z_\sigma|$, well-defined on occupied bins ($\max_\sigma |z_\sigma| > 0$); the per-step phase ratios are $u_{31} = z_{31}/z_0$ and $u_{32} = z_{63}/z_{31}$; and the B\'ezout label phasor is $w = u_{32}\,u_{31}^{-1}$. The subscripts on $z$ name shift values and those on $u$ name the increments $\Delta_1, \Delta_2$. This definition is notation only and assigns no screening status: the formal screen is \Cref{def:exact-screen}, and the implementation surrogates are \Cref{def:cv-singleton-test} and \Cref{rem:impl-surrogates}.
\end{definition}

\begin{definition}[Phase Unwrapping]
\label{def:phase-unwrapping}
(Implementation surrogate.) Phase unwrapping and the associated linear-regression $R^2$ test are a finite-precision implementation surrogate; the formal phase test is the modular phase-ratio condition of \Cref{lem:global-label-recovery} (the per-step ratios $u_{31} = z_{31}/z_0$ and $u_{32} = z_{63}/z_{31}$ are exact unit-modulus rotations), not cumulative unwrapping. For wrapped phases $\theta_s \in [-\pi, \pi]$, the unwrapped phase $\phi_s$ removes $2\pi$ discontinuities via cumulative adjustment~\cite{tribolet1977new}: $\phi_0 = \theta_0$, and for $s > 0$:
\begin{equation}
\phi_s = \phi_{s-1} + \text{wrap}(\theta_s - \theta_{s-1})
\end{equation}
where $\text{wrap}(\delta) = (\delta + \pi) \bmod 2\pi - \pi$ maps differences to $[-\pi, \pi]$.
\end{definition}

\subsection{Key Notation}

\Cref{tab:notation} summarizes the notation used throughout the paper.

\begin{table}[!t]
\centering
\caption{Key Notation and Symbols}
\label{tab:notation}
\begin{tabular}{l p{0.72\linewidth}}
\toprule
\textbf{Symbol} & \textbf{Definition} \\
\midrule
$N$ & Signal length (DFT size) \\
$k$ & Sparsity level (number of non-zero frequencies) \\
$x[n]$ & Time-domain samples, $n = 0, \ldots, N-1$ \\
$X[f]$ & Frequency-domain DFT coefficients, $f = 0, \ldots, N-1$ \\
$L$ & Number of CRT stages; in the exact-product model $N = \prod_{\ell=1}^{L} p_\ell$, so $L = O(\log_k N)$ \\
$p_i$ & Prime modulus for stage $i$, $p_i \in [k, ck]$ for $c \ge 2$ (special case $c=2$) \\
$m_i$ & Implementation batching length for stage $i$, $m_i = p_i \times 2^6$ (cache-aligned mixed-radix batching); the formal stage transform is the $p_i$-point DFT and the residue modulus stays $p_i$ \\
$M$ & CRT modulus product; exact-product model $M = \prod_{i=1}^L p_i = N$ (the generalized condition $M \ge N$ applies to the dense-fallback / extension case) \\
$r_i$ & Residue of frequency $f$ modulo $p_i$, $r_i = f \bmod p_i$ \\
$\mathcal{R}_\ell$ & Candidate-singleton screened bin set of stage $\ell$ (\Cref{alg:multistage_oklogn}); the operative engine reads it into per-bin global labels $\mathcal{G}_\ell$, and the residue-tuple view over the $\mathcal{R}_\ell$ is superseded (\Cref{app:tuple-negative}) \\
$\mathrm{nMAD}(r)$ & Normalized median absolute deviation of bin-$r$ magnitudes across shifts (implementation surrogate for exact magnitude equality) \\
$\text{MAD}(A)$ & Median absolute deviation of set $A$ \\
$S$ & Number of time shifts per stage (typically $S=3$) \\
\addlinespace
$\mathcal{S}$ & True support set $\mathcal{S} = \{f : X[f] \neq 0\}$, $|\mathcal{S}| \le k$ \\
$\mathcal{G}_\ell$ & Per-stage global-label set: the global frequency labels emitted by the candidate-singletons of stage $\ell$ (\Cref{def:total-recovery}), with $|\mathcal{G}_\ell| \le p_\ell$ \\
$\mathcal{G}$ & Accumulated global-label (candidate) set $\mathcal{G} = \bigcap_{\ell} \mathcal{G}_\ell$; the operative candidate set, of proved size $|\mathcal{G}| = O(k)$ (\Cref{thm:global-label-count}) \\
$\hat{X}[f]$ & Recovered (reconstructed) spectrum returned by the algorithm \\
$\hat{S}$ & Recovered support $\hat{S} = \{f : \hat{X}[f] \neq 0\}$ \\
\addlinespace
$X_M[r;s]$ & Shifted decimated spectrum, $X_M[r;s] = \tfrac{M}{N}\sum_{f \equiv r (\bmod M)} X[f]\, e^{+j2\pi s f/N}$; equals $X_{p_\ell}^{(\sigma)}[r]$ for $M = p_\ell$, $s = \sigma$ (\Cref{subsec:sampling-operator}) \\
$z_\sigma$ & Shifted bin observation $z_\sigma = X_{p_\ell}^{(\sigma)}[r]$ at shift $\sigma \in \{0, 31, 63\}$ (\Cref{lem:global-label-recovery}) \\
$u_{31}, u_{32}$ & Per-step phase ratios $u_{31} = z_{31}/z_0$, $u_{32} = z_{63}/z_{31}$ (subscripts name the increments $\Delta_1 = 31$, $\Delta_2 = 32$) \\
$w$ & B\'ezout label phasor $w = u_{32}\,u_{31}^{-1} = \omega^{g}$ (\Cref{def:total-recovery}) \\
$g$ & Global frequency label $g = \log_\omega w \in [0, N)$ emitted by a candidate-singleton (\Cref{def:total-recovery}) \\
$D_{\max}(r)$ & Normalized magnitude range of bin $r$ across shifts; exact screen statistic, $D_{\max}(r) = 0$ for a true singleton (\Cref{def:exact-screen}) \\
$\tau_{cv}$ & $\mathrm{nMAD}$ acceptance tolerance, an implementer-chosen $\tau_{cv}\in(0,1)$ (\Cref{def:cv-singleton-test}) \\
\addlinespace
$Q_\ell$ & Accumulated modulus product $Q_\ell = \prod_{i \le \ell} p_i$, $Q_0 = 1$; exact-product model $Q_L = N$ \\
$V$ & Sound final verifier $V(x,\hat{X}) \in \{\mathrm{PASS}, \mathrm{FAIL}\}$ (\Cref{def:verifier}); distinct from the view count $V \in \{3,4\}$ used only in \Cref{app:keyed-multiview} \\
$W$ & Final-verifier window size, a scalar: $W = k + |\mathcal{G}|$ consecutive residual samples read by $V$ (\Cref{def:verifier}) \\
$\Psi$ & Node matrix of \Cref{thm:verifier-soundness}, $\Psi_{i\ell} = \omega_\ell^{\,n_0+i}$ ($W \times s$) \\
$\omega$ & Primitive $N$th root of unity, $\omega = e^{+j 2\pi / N}$; the subscripted $\omega_\ell = e^{+j 2\pi f_\ell/N}$ are the Vandermonde nodes of $\Psi$; angular frequencies are written $\nu$ \\
\bottomrule
\end{tabular}
\end{table}

\section{Algorithm Overview}
\label{sec:algorithm-overview}

\noindent\emph{Terminology (normative).} This paper uses three strictly disjoint concepts and never overloads them. (i) A \emph{stage} ($\ell = 1, \ldots, L$) is one CRT decimation level with prime modulus $p_\ell \in [k, ck]$ for an explicitly selected constant $c \ge 2$ (special case $c=2$); the primary algorithm runs $L$ stages sequentially. (ii) A \emph{shift} is one of the $S = 3$ time-domain offsets $\{0, \Delta_1, \Delta_1 + \Delta_2\}$ (e.g., $\{0, 31, 63\}$) used within a single stage to perturb the input phase for candidate-singleton screening; shifts are never called views. (iii) A \emph{view}, used exclusively in the adaptive-moduli framework (\Cref{app:keyed-multiview}), refers to one of $V \in \{3, 4\}$ independent decimated FFT computations within a single stage, each with a distinct modulus $M_i$; views are never called shifts.

\noindent The primary multi-stage architecture therefore has $L$ stages, each with $S = 3$ shifts and $V = 1$ view, yielding $L \times S = 3L$ FFT computations (\Cref{alg:multistage_oklogn}). Multi-view constructions with $V \in \{3, 4\}$ appear only in the appendix-only keyed multi-view extension (\Cref{app:keyed-multiview}), which is a heuristic extension outside the formal model; they contribute no FFTs to the primary path. All downstream uses of the words ``stage'', ``shift'', and ``view'' follow these definitions and never conflate them.

\noindent\emph{Further normative terms.} The per-bin screening vocabulary is fixed by \Cref{def:exact-screen} as three distinct predicates, never conflated: the \emph{occupied equal-magnitude pre-screen} (its condition~1, whose passes are the candidate-singletons), the total label-emission rule (root membership plus class consistency, executed as the total rule $\textsc{Recover}$ of \Cref{def:total-recovery}), and their conjunction, the full exact emission predicate, whose one-sidedness is \Cref{thm:phase-consistency-certificate}; the historical name \emph{co-prime shift phase-consistency screen} refers to that conjunction, and \emph{full exact screen} is short for it, never condition~1 alone. An \emph{arithmetic-shift Hankel/Prony rank test} (\Cref{lem:arith-shift-hankel-rank}) is an optional deterministic escalator that requires a separate sample acquisition on arithmetic-progression shifts. The operative per-stage object is the \emph{global-label set} $\mathcal{G}_\ell$ (\Cref{subsec:global-label-engine}, \Cref{def:total-recovery}): each candidate-singleton emits at most one full frequency label and the stages are combined by intersection, $\mathcal{G} = \bigcap_\ell \mathcal{G}_\ell$. The earlier \emph{accumulated candidate set} is a superseded residue-set view, retained only as the negative result of \Cref{app:tuple-negative}.

\begin{table}[!t]
  \centering
  \caption{Achievable Engineered Transform Lengths by Sparsity and Stage Count\label{tab:prime-availability}}
  \resizebox{\linewidth}{!}{%
  \begin{tabular}{r l r r r}
    \toprule
    $k$ & $[k,ck]$ primes & $N$ at $L{=}4$ & $N$ at $L{=}5$ & $N$ at $L{=}6$ \\
    \midrule
      8 & $\{11,13,17,19,23,29\}$ & $46{,}189$ & $1{,}062{,}347$ & $30{,}808{,}063$ \\
     10 & $\{11,13,17,19,23,29\}$ & $46{,}189$ & $1{,}062{,}347$ & $30{,}808{,}063$ \\
     15 & $\{17,19,23,29,31,37\}$ & $215{,}441$ & $6{,}678{,}671$ & $\approx 2.5{\cdot}10^{8}$ \\
     20 & $\{23,29,31,37,41,43\}$ & $765{,}049$ & $\approx 3.1{\cdot}10^{7}$ & $\approx 1.3{\cdot}10^{9}$ \\
     50 & $\{53,59,61,67,71,73\}$ & $12{,}780{,}049$ & $\approx 9.1{\cdot}10^{8}$ & $\approx 6.6{\cdot}10^{10}$ \\
    \bottomrule
  \end{tabular}%
  }
\end{table}

\subsection{Algorithm Structure}

\noindent\textbf{Input:} Time samples $x[n]$, $n = 0,\ldots,N-1$, of an engineered length $N = \prod_{\ell=1}^{L} p_\ell$ given with its stage primes (\Cref{def:transform-family}); sparsity $k$\\
\textbf{Requirements:} $2 \le L \le \pi([k,ck])$ stages with the exact-product model $N = \prod_{\ell=1}^{L} p_\ell$ (so $L = O(\log_k N)$) and primes $p_i \in [k, ck]$ for $c \ge 2$ (special case $c=2$), implementation batching lengths $m_i = p_i \times 2^t$ with $t=6$ (mixed-radix)\\
Key Property: The exact-product relation $N = \prod_{i=1}^L p_i$ (exact-product model, with $k^L \leq N \leq (ck)^L$) gives divisor-compatible stage decimations and makes global frequency labels unique over $[0,N)$; the operative path emits full labels per bin and intersects the per-stage label sets, rather than reconstructing residue tuples\\
Prime availability. Intuitively, a constant-factor window widens with the sparsity: by the Prime Number Theorem the range $[k, ck]$ contains approximately $(c-1)k/\log k$ primes, so the stage budget grows without bound with $k$. For the algorithm to succeed with $L = O(\log_k N)$ stages, we require $L \leq \pi([k, ck])$. \Cref{tab:prime-availability} shows achievable engineered transform lengths $N = Q_L = \prod_{\ell=1}^{L} p_\ell$ for representative sparsity levels $k$ at stage counts $L \in \{4, 5, 6\}$. Each entry is the exact product of $L$ pairwise-coprime stage primes drawn from $[k, ck]$; the engineered-acquisition pipeline selects $N$ at design time to match these values, satisfying the divisor-compatibility envelope of \Cref{thm:sampling-bluestein}, and non-product $N$ is handled by the dense-fallback option of \Cref{subsec:sampling-operator}. Rows $k = 20$ and $k = 50$ of the table are admissible instances of \Cref{def:transform-family}; they lie at and beyond the boundary of the intended engineering regime $3 \le k < 20$ (\Cref{rem:engineering-regime}) and are shown to illustrate that boundary.

\begin{lemma}[Prime Availability via Dusart Bounds]
\label{lem:prime-existence}
Fix a constant $c \ge 2$. For every fixed $c$, the interval $[k, ck]$ contains
\[
\pi([k,ck]) \;=\; \bigl((c-1) + o(1)\bigr)\,\frac{k}{\ln k}
\]
primes as $k \to \infty$ (Prime Number Theorem). Explicitly, the special case $c=2$ gives
\[
\pi([k,2k]) \;\geq\; \frac{0.44\,k}{\ln k} \qquad \text{for all } k \ge 3,
\]
proved below from Dusart's explicit bounds for $k \ge 300$ and verified numerically for each $3 \le k < 300$ (e.g., $\pi([20,40]) = 4$ primes; widening to $[k, 3k]$ yields additional primes whenever the $[k, 2k]$ count is insufficient for the required $L$ stages, cf.\ \Cref{tab:prime-availability}).
\end{lemma}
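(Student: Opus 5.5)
The proof has three parts: the asymptotic statement from the Prime Number Theorem, the explicit $c=2$ bound for $k \ge 300$ from Dusart's inequalities, and a finite numerical check for $3 \le k < 300$.

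\emph{Asymptotic part.} Write $\pi([k,ck]) = \pi(ck) - \pi(k-1)$. By the Prime Number Theorem, $\pi(x) = (1+o(1))\,x/\ln x$. Since $c$ is fixed, $\ln(ck) = \ln k + \ln c = (1+o(1))\ln k$. Hence
\[
\pi(ck) = (c+o(1))\,\frac{k}{\ln k}, \qquad \pi(k-1) = (1+o(1))\,\frac{k}{\ln k},
\]
and subtracting gives $\pi([k,ck]) = \bigl((c-1)+o(1)\bigr)k/\ln k$.

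\emph{Explicit bound for $k \ge 300$.} Use the two-sided Dusart bounds
\[
\pi(x) \ge \frac{x}{\ln x}\Bigl(1+\frac{1}{\ln x}\Bigr) \quad (x \ge 599),
\qquad
\pi(x) \le \frac{x}{\ln x}\Bigl(1+\frac{1.2762}{\ln x}\Bigr) \quad (x > 1).
\]
Apply the lower bound at $x = 2k \ge 600$ and the upper bound at $x = k-1 < k$. Because $x \mapsto x/\ln x$ is increasing for $x > e$, the upper estimate at $k-1$ may be replaced by the same expression at $k$. Put $t = \ln k$. Using $\ln(2k) = t + \ln 2$, it then suffices to show
\[
\frac{2k}{t+\ln 2}\Bigl(1+\frac{1}{t+\ln 2}\Bigr) - \frac{k}{t}\Bigl(1+\frac{1.2762}{t}\Bigr) \;\ge\; \frac{0.44\,k}{t}.
\]
Multiply through by $t/k$ and set $a = t/(t+\ln 2) \in (0,1)$. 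The requirement becomes
\[
2a\Bigl(1+\frac{1}{t+\ln 2}\Bigr) - 1 - \frac{1.2762}{t} \;\ge\; 0.44 .
\]

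This one-variable inequality is the main obstacle, because the margin is thin near $k = 300$. There $t \approx 5.70$, so $a \approx 0.892$ and the left side is roughly $1.784 \cdot 1.156 - 1 - 0.224 \approx 0.84$, comfortably above $0.44$. I would finish by checking that $2a - 1 - 1.2762/t$ alone is increasing in $t$. Its value at $t = \ln 300$ is already about $0.56$. The discarded term $2a/(t+\ln 2)$ is positive, so it can only help, and no monotonicity of it is needed.

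If the constants in the cited Dusart bounds differ from the ones I recalled, the fallback is the cruder Rosser--Schoenfeld bounds
\[
\frac{x}{\ln x} < \pi(x) < 1.25506\,\frac{x}{\ln x}.
\]
These give $2a - 1.25506$ as a lower bound on the scaled count, which tends to $2 - 1.255 \approx 0.745$. For small $t$ it can fall below $0.44$. In that case I would raise the explicit threshold to $k_0 = 1000$ or so and extend the numerical check accordingly.

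\emph{Finite range $3 \le k < 300$.} Here the claim is checked directly. For each $k$, count the primes in $[k,2k]$ with a sieve up to $600$ and compare the count with $0.44k/\ln k$. The right side is below $0.44 \cdot 300/\ln 300 \approx 23.2$ throughout this range, while Bertrand-type counts already give at least one prime at every $k$. The potentially tight cases are small $k$: at $k = 3$, $0.44 \cdot 3/\ln 3 \approx 1.2$, and $[3,6]$ contains the primes $3$ and $5$. The listed example, $\pi([20,40]) = 4 \ge 2.94$, is one instance of this tabulation. The remark about widening to $[k,3k]$ needs no proof beyond monotonicity of $\pi([k,ck])$ in $c$.
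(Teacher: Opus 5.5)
Your proposal is correct and follows the paper's proof essentially step for step: it applies Dusart's lower bound at $2k$ against the upper bound at $k$, reduces to the scaled inequality in which $\ln k/\ln 2k$ increases and $1.2762/\ln k$ decreases (your $\approx 0.56$ at $k=300$, after dropping the positive $1/\ln 2k$ term, matches the paper's $0.55$), and handles $3 \le k < 300$ by direct enumeration. The only differences are cosmetic. You also spell out the PNT asymptotic. You pass from $\pi(k-1)$ to the bound at $k$ via monotonicity, which also needs $x/\ln^2 x$ increasing (true for $x > e^2$), whereas the paper simply uses $\pi([k,2k]) \ge \pi(2k) - \pi(k)$.
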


\begin{proof}
We use two of Dusart's explicit prime-counting bounds~\cite{dusart2010estimates}:
\begin{align}
\pi(x) &\ge \frac{x}{\ln x}\left(1 + \frac{1}{\ln x}\right) \quad \text{for } x \ge 599, \\
\pi(x) &\le \frac{x}{\ln x}\left(1 + \frac{1.2762}{\ln x}\right) \quad \text{for } x > 1.
\end{align}
For $k \ge 300$ we have $2k \ge 600 \ge 599$, so applying the lower bound at $2k$ and the upper bound at $k$,
\begin{multline}
\pi([k,2k]) \ge \pi(2k) - \pi(k) \\
\ge \frac{2k}{\ln 2k}\left(1 + \frac{1}{\ln 2k}\right) - \frac{k}{\ln k}\left(1 + \frac{1.2762}{\ln k}\right) = \rho(k)\,\frac{k}{\ln k},
\end{multline}
where $\rho(k) = \frac{2\ln k}{\ln 2k}\bigl(1 + \frac{1}{\ln 2k}\bigr) - 1 - \frac{1.2762}{\ln k}$. For $k \ge 300$, the increasing factor $\ln k/\ln 2k = (1 + \ln 2/\ln k)^{-1} \ge (1 + \ln 2/\ln 300)^{-1} > 0.89$ and the decreasing term $1.2762/\ln k \le 1.2762/\ln 300 < 0.23$ give $\rho(k) \ge 2(0.89) - 1 - 0.23 = 0.55 > 0.44$, so $\pi([k,2k]) \ge 0.44\,k/\ln k$ for all $k \ge 300$. For each $3 \le k < 300$ the bound is confirmed by direct enumeration ($k=20$: $\pi([20,40]) = 4$ primes $\{23, 29, 31, 37\}$ against $0.44 \cdot 20/\ln 20 \approx 2.94$; $k=50$: $\pi([50,100]) = 10$ primes against $\approx 5.62$). Thus for every $k \ge 3$ the interval $[k,2k]$ alone supports up to $L \leq 0.44k/\ln k$ stages; when the required $L$ exceeds the $[k,2k]$ count at small $k$, widened intervals $[k, ck]$ with $c > 2$ supply the remaining primes (\Cref{tab:prime-availability}).
\end{proof}

For smaller $k$ values where this constraint is violated, the interval is widened: the stage primes are drawn from $[k, ck]$ with $c$ chosen so that $(c-1)k/\ln k \geq L$.
The tighter range $[k, 2k]$ minimizes the per-stage transform sizes; its availability is exactly what \Cref{lem:prime-existence} quantifies ($\pi([k,2k]) \ge 0.44\,k/\ln k$ for all $k \ge 3$, by Dusart bounds for $k \ge 300$ and finite enumeration for $3 \le k < 300$), and whenever the required $L$ exceeds that count the widened interval $[k,ck]$ supplies the remaining primes.\\
\emph{Mixed-Radix Optimization:} implementation batching length $m_i = p_i \times 2^6 \in [64k, 64ck]$ (the special case $c = 2$ gives $[64k, 128k]$; for widened intervals the upper end scales with $c$, e.g.\ $m = 64 \cdot 31 = 1984 > 128k = 1920$ at $k = 15$, $c = 3$) balances prime factor discrimination ($p_i$) with power-of-2 batching ($2^6 = 64$, a cache-alignment design rationale; no measurements are reported).

\emph{Fundamental Constraint on N:} With the exact-product model $N = \prod_{\ell=1}^{L} p_\ell$ and stage primes $k \le p_\ell \le ck$, the transform length and stage count obey
\begin{equation}
\begin{aligned}
k^{L} &\;\le\; N \;\le\; (ck)^{L}, \\
\text{equivalently}\quad \log_{ck} N &\;\le\; L \;\le\; \log_{k} N .
\end{aligned}
\end{equation}
For fixed $c$ this gives $L = \Theta(\log_k N)$. The binding availability constraint is that $L$ must not exceed the number of pairwise-coprime primes in $[k, ck]$, i.e. $L \le \pi([k, ck]) \approx (c-1)k/\ln k$; this caps the realizable engineered lengths for a given $k$ rather than imposing a closed-form bound on $N$.

\begin{algorithm*}[!t]
\caption{Multi-Stage Screened Sparse FFT (Global-Label Form)}\label{alg:multistage_oklogn}
\textbf{Input:} signal $x[n]$ of an engineered length $N = \prod_{\ell=1}^{L} p_\ell$, given together with its stage primes $p_1, \ldots, p_L \in [k,ck]$, $2 \le L \le \pi([k,ck])$ (\Cref{def:transform-family}); sparsity $k$; CV threshold $\tau_{cv}$ (the $\mathrm{nMAD}$ acceptance tolerance of \Cref{def:cv-singleton-test}; an implementer-chosen $\tau_{cv}\in(0,1)$, not a value this paper calibrates). A length that is not such an exact product lies outside the operating model and is dispatched directly to DENSE-FALLBACK (\Cref{thm:hybrid-correctness}).\\
\textbf{Output:} the exact spectrum $X$, returned either by verified sparse reconstruction or by the dense FFT fallback (dispatched when a stage emits no class-consistent global label, a budget guard fires, or final verification fails).
\begin{enumerate}[topsep=0pt,itemsep=1pt,parsep=0pt]
  \item Take the $L$ stage primes $p_1, \ldots, p_L \in [k, ck]$ of the engineered length ($N = \prod_{\ell=1}^{L} p_\ell$, so $L = O(\log_k N)$; \Cref{lem:prime-existence}, \Cref{def:transform-family}); fix co-prime shifts $\{\sigma_1, \sigma_2, \sigma_3\} = \{0, 31, 63\}$ with $\gcd(\Delta_1, \Delta_2) = \gcd(31, 32) = 1$. Initialize the global-label candidate set $\mathcal{G} \leftarrow \text{(undefined; set to }\mathcal{G}_1\text{ at the first stage)}$. No accumulated modulus is carried: the engine works with full frequency labels, not residue tuples.
  \item For each stage $\ell = 1, \ldots, L$: compute the three shift-FFTs $X_{p_\ell}[\cdot; \sigma_s]$ associated with modulus $p_\ell$ (length $m_\ell = \Theta(p_\ell)$) via the sampling operator (\Cref{subsec:sampling-operator}); for each bin $r \in \{0, \ldots, p_\ell - 1\}$, mark $r$ as a candidate-singleton via the occupied equal-magnitude pre-screen (\Cref{def:exact-screen}(1)), with $\mathrm{nMAD}(r) < \tau_{cv}$ as the fast computational surrogate, subject to its all-observations-nonzero guard before the Step~3 ratios are formed (\Cref{def:cv-singleton-test}). Let $\mathcal{R}_\ell$ be the candidate-singleton screened bin set.
  \item Total label-emission rule and intersection. For each candidate-singleton $r \in \mathcal{R}_\ell$, apply the total label-emission rule (\Cref{def:total-recovery}) to the three shifted observations $z_\sigma = X_{p_\ell}^{(\sigma)}[r]$, emitting at most one global label $g = \textsc{Recover}(z_0, z_{31}, z_{63})$ and possibly none; collect $\mathcal{G}_\ell = \{g : r \in \mathcal{R}_\ell,\ r\text{ emits a label}\}$ and accumulate by intersection, $\mathcal{G} \leftarrow \mathcal{G} \cap \mathcal{G}_\ell$ (initialized $\mathcal{G} \leftarrow \mathcal{G}_1$ at the first stage). If a stage emits no class-consistent global label, or the deterministic budget is exceeded ($|\mathcal{G}_\ell| > ck$ at that stage, or $|\mathcal{G}| > k$ once the stage chain is complete, i.e.\ after stage $\ell = L$), dispatch DENSE-FALLBACK and return the standard $O(N \log N)$ FFT spectrum of $x$ before verification.
  \item For each candidate frequency $g \in \mathcal{G}$, recover its coefficient from the zero-shift decimated observation at any stage $\ell$: $\hat{X}[g] = (N/p_\ell)\, X_{p_\ell}^{(0)}[g \bmod p_\ell]$, the $(N/p_\ell)$ factor being the reciprocal of the $(p_\ell/N)$ aliasing prefactor (\Cref{subsec:sampling-operator}, \Cref{lem:global-label-recovery}(4)); at $\sigma=0$ the analysis phase is unity, so no de-rotation is required. (A nonzero-shift observation may equivalently be used after de-rotation by the conjugate phase $e^{-j 2\pi g \sigma_v / N}$.)
  \item Final verification. Apply the sound final verifier $V$ of \Cref{def:verifier} to the sparse candidate $\hat{X}$ supported on $\mathcal{G}$, reading $k + |\mathcal{G}|$ consecutive residual samples, the residual-support bound being $|\mathrm{supp}(x-\hat{x})| \le k + |\mathcal{G}|$ (proved sound in \Cref{thm:verifier-soundness}). If $V$ returns PASS, return $\hat{X}$ as the sparse output; otherwise compute and return the dense $O(N \log N)$ FFT spectrum of $x$.
\end{enumerate}
Note (the budget guards are inactive in the exact model). In the exact model the budget guards of Step~3 are inactive: every bin passing the occupied equal-magnitude pre-screen is occupied, and a decimated at-most-$k$-sparse spectrum has at most $k$ occupied bins per stage (\Cref{lem:decimation-preserves-sparsity}), so $|\mathcal{G}_\ell| \le k < ck$ and $|\mathcal{G}| \le \min_\ell |\mathcal{G}_\ell| \le k$. Thus neither guard fires. They are included only to describe the fielded surrogate implementation, whose finite-precision surrogates (\Cref{rem:impl-surrogates}) can admit bins the exact pre-screen rejects; no cost, count, or correctness claim of this paper depends on their firing.
\end{algorithm*}

\noindent\emph{Algorithm 1: semantics of the steps.} The float states the procedure; the semantics its steps rest on are collected here, each a fact proved elsewhere rather than a step of the procedure. Step~1 carries no accumulated modulus and no residue state: the global-label engine never forms residue tuples, so there is no residue to complete and no CRT reassembly to trigger. Chain completion is the stage index reaching $L$, which the engineered exact-product length fixes at design time (\Cref{def:transform-family}); it certifies nothing about genuine-singleton survival, which is not detectable from the data (\Cref{cond:all-stage-survival}). In Step~2 the formal per-stage transform is the $p_\ell$-point DFT; an implementation may batch or cache-align it at a mixed-radix length $p_\ell \cdot 2^6$ without changing the residue modulus $p_\ell$ (\Cref{rem:64k-implementation}).

\noindent In Step~3 the label of a screened bin is read by the B\'ezout phase rule: $u_{31} = z_{31}/z_0$, $u_{32} = z_{63}/z_{31}$, $w = u_{32}\,u_{31}^{-1}$, and $g$ is the unique index in $\{0,\dots,N-1\}$ with $w = \omega^{g}$, $\omega = e^{+j2\pi/N}$ (equivalently $g = \log_\omega w$), emitted exactly when $w$ is an $N$th root of unity and $g \equiv r \pmod{p_\ell}$; the read is executed by the costed rounded-guess-then-exact-verify procedure of \Cref{def:label-decoding-model}, with no precomputed root table. Because each screened bin emits at most one label without assuming the bin is a true singleton, the per-stage bound $|\mathcal{G}_\ell| \le |\mathcal{R}_\ell| \le p_\ell \le ck$ holds deterministically (\Cref{thm:global-label-count}), and no residue tuples are formed, so the multiplicative Cartesian blow-up of the screen-only rule, whose tuple count can reach $2^L$, cannot occur (\Cref{rem:no-product-blowup}); \Cref{lem:global-label-recovery} supplies completeness for genuine singletons. A screen-inconsistent bin emits no label, a normal per-bin event, and any support tone thereby dropped is recovered by the dense fallback (\Cref{thm:hybrid-correctness}). The budget tests of that step are detected symptoms of a survival violation, never a certificate of one.

\noindent A Step~5 FAIL says only that the candidate is not the exact spectrum, either because a surviving phantom label entered it (\Cref{rem:phantom-labels}) or because a true tone was omitted from $\mathcal{G}$ (\Cref{thm:global-label-completeness}). The $O(k)$ candidate count is proved (\Cref{thm:global-label-count}), but the recovered labels are not all guaranteed true, so every sparse output is provisional until verification passes.

Concretely, the prime sets fix the maximum number of stages: $k=10$ admits the four primes $\{11,13,17,19\}$ in $[10,20]$, so $L \le 4$ and the largest engineered length is $N = 11\cdot 13\cdot 17\cdot 19 = 46{,}189$; the boundary case $k=20$ admits $\{23,29,31,37\}$, giving $N = 765{,}049$ at $L=4$, and $k=50$ admits ten primes in $[50,100]$, so $L \le 10$ and much larger engineered lengths are available (both admissible instances at and beyond the regime boundary of \Cref{rem:engineering-regime}; see \Cref{tab:prime-availability}). The analytic estimates indicate the intended operating point $k \in [8, 20)$ with signal lengths $N \le 10^6$; no implementation measurements are reported in this paper. The tighter range $[k, 2k]$ minimizes FFT overhead while maintaining sufficient stage coverage.

The algorithm chains $L \approx \log_k N$ CRT stages with $\Theta(k)$-sized DFTs per stage. \Cref{fig:algorithm_flowchart} visualizes the procedure; the five steps below specify the procedure from signal input through exact spectrum output, with each step's contribution to the overall $O(k \log N)$ bound (stated in the cost model of \Cref{def:label-decoding-model}) identified in the complexity analysis of \S\ref{sec:main-theorem}.

\paragraph{Modulus and transform length.} The CRT modulus at stage $\ell$ is the prime $p_\ell$, and the formal stage transform of \Cref{subsec:sampling-operator} is the $p_\ell$-point DFT of the decimated operator; the recovered class index lies in $\mathbb{Z}/p_\ell\mathbb{Z}$. The factor $2^6 = 64$ in an implementation length $64 p_\ell$ is an optional mixed-radix batching convenience (cache-friendly butterflies); it does not change the residue modulus, which remains $p_\ell$, and any $m_\ell = \Theta(p_\ell)$ choice yields the same complexity (\Cref{rem:64k-implementation}).

\noindent\emph{Operative global-label flow.} The operative candidate path forms no residue tuples: each candidate-singleton of stage $\ell$ emits at most one global frequency label $g \in [0,N)$ (possibly none) by the total rule of \Cref{def:total-recovery}, read in closed form from the three shifts by the B\'ezout phase rule of \Cref{lem:global-label-recovery}. The stage label set $\mathcal{G}_\ell$ therefore holds full frequencies, not residues, and the stages are combined by intersection $\mathcal{G} = \bigcap_\ell \mathcal{G}_\ell$, which can only shrink the set; $\mathcal{G}$ is never materialized as an $N$-element list, and $|\mathcal{G}| \le \min_\ell |\mathcal{G}_\ell| \le p_1 \le ck = O(k)$ (\Cref{thm:global-label-count}), with any budget overflow ($|\mathcal{G}_\ell| > ck$ at some stage, or $|\mathcal{G}| > k$ on the accumulated set once the stage chain is complete) detected and routed to dense fallback. The $O(k \log N)$ figure bounds only the cost of candidate construction (label recovery and intersection over $L = O(\log_k N)$ stages); exactness is verifier-gated, a constructed candidate being returned only on a PASS of the sound final verifier (\Cref{thm:verifier-soundness}, reading $k + |\mathcal{G}|$ consecutive residual samples, \Cref{def:verifier}) and otherwise routed to the dense $O(N \log N)$ FFT fallback (\Cref{thm:hybrid-correctness}). The multiplicative tuple-rectangle accounting $\prod_j |\mathcal{R}_j|$ is not the operative count; it is retained only as the non-operative negative result of \Cref{app:tuple-negative}.

\Needspace*{5\baselineskip}
\begin{example}[Single-Frequency Global-Label Walkthrough: $N = 46{,}189$, $k = 8$]\label{ex:n1024_k8}
We illustrate the screened sparse path on the engineered transform length $N = 46{,}189 = 11 \cdot 13 \cdot 17 \cdot 19$ at sparsity $k = 8$ by tracing one representative support frequency, $f = 67$, through the operative global-label engine (\Cref{subsec:global-label-engine}). \emph{Scope of this walkthrough.} This is an illustrative single-frequency trace intended to expose the per-stage B\'ezout phase read and the per-stage label intersection in concrete form; it is not a full $k = 8$ support example. A full $k = 8$ trace would additionally exhibit the cross-frequency collision combinatorics and is not required for the correctness or complexity arguments of \Cref{thm:sparse-path-complexity,thm:hybrid-correctness}. This $N$ satisfies the divisor-compatibility envelope of \Cref{thm:sampling-bluestein} since each stage prime in $\{11, 13, 17, 19\}$ divides $N$ and $\prod_\ell p_\ell = N$ exactly.

\noindent\emph{Stage 1: Parameter Selection.}
$L = 4$ stages, with primes $\{p_1, p_2, p_3, p_4\} = \{11, 13, 17, 19\}$ (range widened from $[k,2k]=[8,16]$ to $[k,3k]$ to obtain 4 primes, per \Cref{lem:prime-existence}); co-prime shifts $\{\sigma_1, \sigma_2, \sigma_3\} = \{0, \Delta_1, \Delta_1 + \Delta_2\} = \{0, 31, 63\}$ with $\Delta_1 = 31$, $\Delta_2 = 32$, $\gcd(\Delta_1, \Delta_2) = \gcd(31, 32) = 1$ (the relevant B\'ezout co-primality is between the increments $\Delta_1, \Delta_2$, not between the shift values). The stage chain requires exactly $L = 4$ stages, determined by the product equality $Q_4 = \prod_{\ell=1}^{4} p_\ell = N$ rather than by the generic logarithmic stage-count bound $\lceil \log_8 N \rceil = 6$; under engineered divisor compatibility, the exact-product model $N = \prod_{\ell=1}^{L} p_\ell$ pins $L$ exactly to the number of selected primes.

\noindent\emph{Global-Label Recovery for $f = 67$.}
\Cref{tab:n1024_k8_residues_f67} shows the per-stage screened bin $r_\ell = 67 \bmod p_\ell$ and the single global label $g$ that each stage recovers and emits.

\begin{table}[!t]
\centering
\caption{Per-Stage Screened Bins and Recovered Global Label}
\label{tab:n1024_k8_residues_f67}
\small
\begin{tabular}{cccccc}
\toprule
$f$ & $\bmod\,11$ & $\bmod\,13$ & $\bmod\,17$ & $\bmod\,19$ & label $g$ \\
\midrule
$67$ & $1$ & $2$ & $16$ & $10$ & $67$ \\
\bottomrule
\end{tabular}
\end{table}

At each screened bin the three shifted observations are read by the B\'ezout phase rule (\Cref{lem:global-label-recovery}): with $u_{31} = z_{31}/z_0$ and $u_{32} = z_{63}/z_{31}$, the global label is the exact root-of-unity index $g = \log_\omega w$ of $w = u_{32}\,u_{31}^{-1} = \omega^{g}$ ($\omega = e^{+j2\pi/N}$), accepted into $\mathcal{G}_\ell$ if and only if $g \equiv r_\ell \pmod{p_\ell}$. Every stage emits $g = 67$, so the intersection is $\mathcal{G} = \bigcap_\ell \mathcal{G}_\ell = \{67\}$ with $|\mathcal{G}| = 1 = O(k)$ (\Cref{thm:global-label-count}); no residue tuple is formed. The coefficient is the single zero-shift estimate $\hat{X}[67] = (N/p_\ell)\, X_{p_\ell}^{(0)}[r_\ell] = 1.0 + 0.0j$ (the $\sigma=0$ analysis phase is unity), and the sound final verifier (\Cref{thm:verifier-soundness}) certifies $\hat{x} = x$ exactly. The full per-stage numerical trace, including the phase reads and the verifier PASS, is in \Cref{app:detailed-example}.

\noindent\textbf{\emph{Candidate-construction cost for this example:}} on the $k$-based proxy with the $S = 3$ shift-FFTs counted, roughly $S \cdot k L \log_2 k = 3 \cdot 8 \cdot 4 \cdot 3 = 288$ operations; this is a coarse oracle-model candidate-construction proxy only (model (a) of \Cref{def:label-decoding-model}), not a total verified sparse-path cost, since the actual per-stage transform lengths are $p_\ell \in \{11, 13, 17, 19\}$ rather than $k$ (or $m_\ell = 64 p_\ell$ in the engineered batching, which raises the constant). The verification of \Cref{prop:verifier-cost} is additional: $O(k^2)$ in model (a) of \Cref{def:label-decoding-model}, $O(k^2 + k \log N)$ in model (b).
\end{example}

\let\LMPsavedneedspace\needspace
\renewcommand{\needspace}[1]{}%
\subsection{Modular Decimation}
\let\needspace\LMPsavedneedspace

\begin{definition}[M-Decimated DFT]
\label{def:m-decimated-dft}
For any integer $M \geq 2$, the $M$-decimated (aliased) spectrum is:
\begin{equation}
X_M[r] = \sum_{\substack{\ell \ge 0 \\ 0 \le r + \ell M \le N-1}} X[r + \ell M], \quad r = 0, \ldots, M-1.
\end{equation}
This definition coincides with the usual form when $M \mid N$, and otherwise sums over all indices in $[0,N)$ congruent to $r \bmod M$.
\end{definition}

\textbf{\emph{Key property.}} Intuitively, decimation is a sorting step: frequency $f$ maps to residue $r = f \bmod M$, one bin per frequency and never more, so a sparse spectrum stays sparse in every decimated view. Bin $r$ is a \emph{singleton} if exactly one frequency $f$ with $f \equiv r \pmod{M}$ is non-zero.

\begin{lemma}[Decimation Preserves Sparsity]
\label{lem:decimation-preserves-sparsity}
If $X$ is $k$-sparse, each decimated spectrum has at most $k$ non-zero bins.
\end{lemma}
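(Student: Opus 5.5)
The plan is a direct counting argument through the residue map, using only \Cref{def:m-decimated-dft}. Fix the modulus $M \ge 2$ and write $\mathcal{S} = \mathrm{supp}(X)$, with $|\mathcal{S}| \le k$. By definition, $X_M[r]$ is a sum of the coefficients $X[f]$ over the indices $f \in [0,N)$ with $f \equiv r \pmod M$; this covers both the divisor case $M \mid N$ and the general case. The first step is to show that if $X_M[r] \neq 0$, then at least one summand is nonzero. Hence some $f \in \mathcal{S}$ satisfies $f \bmod M = r$. In other words, the set of occupied bins $\{r : X_M[r] \neq 0\}$ is contained in the image $\{f \bmod M : f \in \mathcal{S}\}$.

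The second step is to observe that the reduction map $f \mapsto f \bmod M$ is a function, so each frequency lands in exactly one bin. The image of $\mathcal{S}$ therefore has cardinality at most $|\mathcal{S}| \le k$. Combining this with the inclusion from the first step gives $|\{r : X_M[r] \neq 0\}| \le k$.

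I would add two remarks. First, cancellation within a collided bin can only make a bin vanish, never create one, so the inequality may be strict; this is the direction the lemma needs. Second, the same argument applies verbatim to the shifted spectra $X_M[r;s]$ of \Cref{def:multidim-isolation}. Each of those is a sum over the same residue class, with every $X[f]$ multiplied by a unit-modulus phase $e^{+j2\pi s f/N}$ and a nonzero prefactor $M/N$, so its support is again contained in the image of $\mathcal{S}$. This is the version that \Cref{alg:multistage_oklogn}'s budget note actually uses.

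There is no genuine obstacle. The only point needing care is the non-divisor case of \Cref{def:m-decimated-dft}, where the sum ranges over a truncated residue class. Because the inclusion argument never uses the length or shape of that class, it goes through unchanged.
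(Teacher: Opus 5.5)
Your proof is correct and follows the paper's own argument: the paper's proof is the one-line observation that each frequency maps to a single residue, so at most $k$ support frequencies occupy at most $k$ bins, and your inclusion of the occupied bins in the image $\{f \bmod M : f \in \mathcal{S}\}$ is that argument written out explicitly. Your extension to the shifted spectra $X_M[r;s]$ is also sound. Because every shift's occupied bins lie in that same image set, the bound holds for the union over the three shifts as well, which is the form the budget note of Algorithm~1 uses.
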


\begin{proof}
Each frequency maps to one residue. At most $k$ frequencies implies at most $k$ non-zero residues.
\end{proof}

\subsection{Decimated Sampling Operator under Divisor Compatibility}\label{subsec:sampling-operator}

The screened sparse path's $O(k \log N)$ complexity (its label-read arithmetic charged in the cost model of \Cref{def:label-decoding-model}) is sample-and-arithmetic complexity simultaneously, achieved under a \emph{divisor-compatibility envelope}: the input length $N$ is selected so that every stage prime $p_\ell$ divides $N$. Equivalently, $N = Q_L = \prod_{\ell=1}^{L} p_\ell$ exactly: the engineered transform length is the product of the stage primes. Equivalently, an engineered acquisition pipeline selects the acquisition or processing length so that the divisor-compatibility envelope is satisfied as an equality. (Multiples of $\prod_\ell p_\ell$ are also covered if $N \ge Q_L$, but the prime-availability envelope of \Cref{def:transform-family} forces $Q_L = N$ when both $Q_L \mid N$ and $Q_L \ge N$ hold simultaneously.) Within this envelope, the per-stage measurement reduces to standard modular decimation and the textbook aliasing identity holds exactly.

\paragraph{Decimated sampling operator.} For stage $\ell$ with prime $p_\ell \mid N$ and decimation step $D_\ell = N / p_\ell$, the time-shift-$\sigma$ sampling operator $S_{p_\ell}^{(\sigma)} : \mathbb{C}^N \to \mathbb{C}^{p_\ell}$ reads
\[
  y_\ell^{(\sigma)}[m] \;:=\; x\bigl[(\sigma + m D_\ell) \bmod N\bigr], \qquad m = 0, 1, \ldots, p_\ell - 1,
\]
that is, $p_\ell$ samples of $x$ at uniformly-spaced positions $\sigma, \sigma + D_\ell, \ldots, \sigma + (p_\ell - 1) D_\ell$ modulo $N$. The size-$p_\ell$ DFT of $y_\ell^{(\sigma)}$ is the $p_\ell$-decimated frequency view of the shifted spectrum:
\begin{equation*}
\begin{split}
  X_{p_\ell}^{(\sigma)}[r] \;&=\; \frac{p_\ell}{N} \sum_{\substack{f \\ f \,\equiv\, r \,(\bmod\, p_\ell)}} X[f] \, e^{+j 2\pi \sigma f / N}, \\
  &\qquad\qquad r = 0, 1, \ldots, p_\ell - 1.
\end{split}
\end{equation*}
This is the standard modular-decimation aliasing identity, valid exactly when $p_\ell \mid N$. Notationally, for $M = p_\ell$ and $s = \sigma$ the three surface forms denote the same object: $X_M[r;s] = X_{p_\ell}[r;\sigma] = X_{p_\ell}^{(\sigma)}[r]$ (\Cref{def:multidim-isolation}, \Cref{lem:global-label-recovery}). The $e^{+j 2\pi \sigma f/N}$ phase sign follows directly from the forward DFT of \Cref{def:dft}: substituting $y_\ell^{(\sigma)}[m] = x[(\sigma + mD_\ell)\bmod N]$ into the size-$p_\ell$ analysis sum and expanding $x$ by the inverse DFT ($+$ exponent) leaves the surviving residue-class coefficients carrying the time-shift phase $e^{+j 2\pi \sigma f/N}$. The coefficient is recovered cleanly at $\sigma=0$, where this phase is unity: $X[f] = (N/p_\ell)\, X_{p_\ell}^{(0)}[f \bmod p_\ell]$ on a singleton bin; for $\sigma \ne 0$ the de-rotation that undoes the phase is the conjugate $e^{-j 2\pi \sigma f/N}$.

\paragraph{Cost.} Per stage and per shift, $S_{p_\ell}^{(\sigma)}$ reads $p_\ell = O(k)$ samples and computes a size-$p_\ell$ DFT in $O(p_\ell \log p_\ell) = O(k \log k)$ arithmetic operations. With three co-prime shifts per stage and $L = O(\log_k N)$ stages, total sample reads and arithmetic operations are
\[
  3 L \cdot O(p_\ell \log p_\ell) \;=\; O\bigl(\log_k N \cdot k \log k\bigr) \;=\; O(k \log N).
\]
This is the screened sparse path's sample-and-arithmetic cost under divisor compatibility, valid for every admissible instance of \Cref{def:transform-family}; the per-bin label-read arithmetic is charged separately in \Cref{def:label-decoding-model}. Because the operating model fixes $L \ge 2$ (\Cref{def:transform-family}), the count $k \log_2 N$ lies strictly below $N$: every admissible instance has $N \ge k(k+1)$, and since $N - k \log_2 N$ increases in $N$ for $N > k/\ln 2$ (and $k(k+1) > k/\ln 2$), it suffices to check $N = k(k+1)$, where the claim reduces to $k + 1 > \log_2(k^2 + k)$, true for every $k \ge 3$ ($4 > \log_2 12 \approx 3.59$ at $k = 3$, with increasing gap). The excluded single-stage instance is exactly where the comparison can reverse (at $L = 1$, $k = 4$, $N = p_1 = 5$ one has $4 \log_2 5 > 5$), which is why the model excludes it.

\noindent\emph{Informally:} because each stage prime divides $N$, a stage view costs one $\Theta(k)$-point FFT rather than anything of size $N$, and there are only $O(\log_k N)$ stages, so the whole candidate construction is charged at $k$-scale and not at $N$-scale. The theorem states that accounting, and states equally that it buys a candidate set, not an answer.

\begin{theorem}[Sample-and-arithmetic cost of candidate construction under divisor compatibility]\label{thm:sampling-bluestein}
Within the divisor-compatibility envelope (every stage prime $p_\ell$ divides $N$), and in the exact label-decoding (root-index oracle) model of \Cref{def:label-decoding-model}(a), the screened sparse path of \Cref{alg:multistage_oklogn} either constructs a candidate set $\mathcal{G}$ of deterministically bounded size $|\mathcal{G}| = O(k)$ (\Cref{thm:global-label-count}) within $O(k \log N)$ time-domain samples and $O(k \log N)$ arithmetic operations, or triggers the dense fallback at Step~3 after spending no more than that sparse-attempt cost. In the comparison real-RAM model of \Cref{def:label-decoding-model}(b) the same construction performs $O(k \log^2 N / \log k)$ arithmetic operations on the same $O(k \log N)$ samples, with the same either/or reading (\Cref{thm:cost-comparison-model}). Containment of the support is conditional, not automatic: under all-stage genuine-singleton survival of the support (each true tone is the unique true tone in its bin at every stage), \Cref{thm:global-label-completeness} gives $\mathcal{S} \subseteq \mathcal{G}$; a true tone that collides at some stage is not guaranteed in $\mathcal{G}$. This bounds the cost of candidate construction only; it does not by itself recover the spectrum exactly, because the per-stage screen is one-sided and a measure-zero screen-passing phantom (\Cref{prop:two-tone-phantom}) may be admitted. Exactness of the returned spectrum is verifier-gated: an accepted candidate is screened by the sound final verifier of \Cref{thm:verifier-soundness}, which adds $O(k^2)$ arithmetic in model~(a) and $O(k^2 + k \log N)$ in model~(b) (\Cref{prop:verifier-cost}), and an input that fails verification is routed to the dense $O(N \log N)$ fallback (\Cref{thm:hybrid-correctness}).
\end{theorem}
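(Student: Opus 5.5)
The proof is an accounting argument over the steps of \Cref{alg:multistage_oklogn}. I will separate three ledgers: samples, transform arithmetic, and label-read arithmetic. The count, containment and verifier clauses are then discharged by citing the results named in the statement.

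\emph{Samples and transforms.} By the decimated sampling operator of \Cref{subsec:sampling-operator}, stage $\ell$ reads $p_\ell$ samples per shift, and since $p_\ell\mid N$ the aliasing identity is exact. With three shifts the stage reads $3p_\ell \le 3ck$ samples, so over $L \le \log_k N$ stages (\Cref{def:transform-family}) the total is $O(kL)=O(k\log N/\log k)$, which is within the claimed $O(k\log N)$.

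Each stage performs three $p_\ell$-point DFTs. I would cost these at $O(p_\ell\log p_\ell)=O(k\log k)$ by a Bluestein/Rader reduction for prime length. Multiplying by $L$ gives $O(k\log k\cdot\log_k N)=O(k\log N)$ in both models.

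\emph{Label reads and intersection.} At most $p_\ell$ bins per stage pass the pre-screen, and each forms $u_{31},u_{32},w$ in $O(1)$ operations. In model (a) the root-index read and the class check $g\equiv r \pmod{p_\ell}$ cost $O(1)$ per bin by \Cref{def:label-decoding-model}(a). That gives $O(k)$ per stage and $O(kL)$ in total.

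In model (b) I would charge the rounded-guess-then-exact-verify read at $O(\log N)$ per bin, since exponentiating $\omega^g$ by repeated squaring is the dominant verification cost. This gives $O(kL\log N)=O(k\log^2N/\log k)$, matching \Cref{thm:cost-comparison-model}; that theorem can be cited directly rather than reproved.

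For the intersection, each $\mathcal{G}_\ell$ has at most $ck$ labels. Sorting a set of size $O(k)$ costs $O(k\log k)$ comparisons, and merging it against the running $\mathcal{G}$ costs $O(k)$. Over $L$ stages this is $O(k\log k\cdot L)=O(k\log N)$.

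\emph{The either/or clause.} If Step~3 dispatches the fallback at stage $\ell$, the work spent so far is a prefix of the full accounting above, so it is bounded by the same totals. Otherwise the chain completes. In that case $|\mathcal{G}|\le\min_\ell|\mathcal{G}_\ell|\le ck=O(k)$ by \Cref{thm:global-label-count}, and the Step~3 guard additionally enforces $|\mathcal{G}|\le k$. Step~4 adds $O(|\mathcal{G}|)$ coefficient reads.

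\emph{Remaining clauses.} The containment clause is exactly \Cref{thm:global-label-completeness} under \Cref{cond:all-stage-survival}. The non-exactness caveat follows from \Cref{prop:two-tone-phantom}. The verifier cost is \Cref{prop:verifier-cost}, and routing on FAIL is \Cref{thm:hybrid-correctness}. These are cited rather than reproved.

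\emph{Main obstacle.} The delicate step is the prime-length DFT bound. A naive $p_\ell$-point DFT costs $O(k^2)$ per stage, which would break the bound. The Bluestein reduction turns it into a convolution of power-of-two length $\Theta(p_\ell)$, with chirp constants computed on the fly. Those chirps are powers of $e^{j2\pi/p_\ell}$, which is not $\omega$, so I would check that generating them does not require the root table excluded by model (a). I expect $O(p_\ell)$ multiplications by recurrence to suffice.
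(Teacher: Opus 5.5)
Your accounting is essentially the paper's proof: $3\sum_\ell p_\ell \le 3Lck$ sample reads, an $O(p_\ell\log p_\ell)$ transform per shift, $O(k\log k)$ per executed stage for screening, label decoding and intersection, the partial-sum argument for the Step~3 fallback arm, and citation for the model-(b), containment and verifier clauses. The paper simply asserts the prime-length DFT cost, so your Bluestein check adds care rather than repairing a gap; note also that since $p_\ell \mid N$, the chirp base $e^{j2\pi/p_\ell} = \omega^{N/p_\ell}$ is itself a power of $\omega$, so model~(a)'s root-power oracle (or your $O(p_\ell)$ recurrence) supplies it without any root table.
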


\begin{proof}
By construction the sampling operator $S_{p_\ell}^{(\sigma)}$ reads exactly $p_\ell$ time-domain entries; for $L$ stages and $3$ shifts per stage (the shifts $\{0,31,63\}$), total sample reads are $3 \sum_{\ell=1}^{L} p_\ell \le 3 L c k = O(k \log N / \log k) = O(k \log N)$, using $L = O(\log N / \log k)$ and constant $c$. The aliasing identity above gives $X_{p_\ell}^{(\sigma)}[r]$ as a size-$p_\ell$ DFT of $y_\ell^{(\sigma)}$ in $O(p_\ell \log p_\ell)$ time. Candidate-singleton screening (\Cref{thm:phase-consistency-certificate}; the optional escalation of \Cref{lem:arith-shift-hankel-rank} is not a step of \Cref{alg:multistage_oklogn} and carries no charge here, \Cref{rem:collision-escalation}), total per-bin global-label recovery (\Cref{def:total-recovery}, the B\'ezout phase read of \Cref{lem:global-label-recovery}, costed per bin by the decoding procedure of \Cref{def:label-decoding-model}(a)), and per-stage intersection into the global-label set $\mathcal{G} = \bigcap_\ell \mathcal{G}_\ell$ (\Cref{thm:global-label-count}) each contribute $O(k \log k)$ per executed stage, whether or not a label is emitted; the per-label coefficient is read in closed form at zero shift as $\hat{X}[g] = \tfrac{N}{p_\ell} z_0$ (\Cref{lem:global-label-recovery}(4)). Summing across $L = O(\log_k N)$ stages yields $O(k \log N)$ both for sample reads and for the arithmetic of candidate construction. Every charge is an input-independent per-executed-stage bound, so if Step~3 dispatches the dense fallback at some stage $\ell_0 \le L$, only stages $1, \dots, \ell_0$ have executed and the spend is a partial sum of the same per-stage charges, bounded by the full-chain totals above; the exactness of an accepted candidate is established separately by the final verifier of \Cref{thm:verifier-soundness}, not by this cost bound.
\end{proof}

\begin{figure}[!t]
\centering
\begin{tikzpicture}[
  node distance=0.7cm,
  process/.style={rectangle, draw=black, thick, text width=4.4cm, minimum height=0.6cm, inner sep=3pt, align=center, font=\scriptsize},
  decision/.style={diamond, draw=black, thick, minimum width=2.2cm, minimum height=0.8cm, inner sep=3pt, align=center, aspect=2, font=\scriptsize},
  data/.style={rectangle, draw=black, minimum width=2.6cm, minimum height=0.5cm, inner sep=3pt, align=center, font=\scriptsize},
  fallback/.style={rectangle, draw=black, thick, minimum width=2.4cm, minimum height=0.6cm, inner sep=3pt, align=center, font=\scriptsize},
  arrow/.style={->, thick, >=stealth},
  loop/.style={->, thick, >=stealth, dashed}
]
\node[data, text width=5.0cm] (input) {Input: $x[n]$, $k$, and an engineered length\\ $N = \prod_{\ell=1}^{L} p_\ell$ given with its stage primes};
\node[process, below=of input] (init) {Initialize: $\ell = 1$, $\mathcal{G} \leftarrow \mathcal{G}_1$ at stage~1};
\node[process, below=of init] (stage) {Stage $\ell$: Decimated $S_{p_\ell}^{(\sigma)}(x)$ at $p_\ell$\\$S=3$ co-prime shifts $\{0, 31, 63\}$};
\node[process, below=of stage] (cert) {Occupied equal-magnitude pre-screen\\(\Cref{def:exact-screen}(1))};
\node[process, below=of cert] (update) {Total label-emission rule (\Cref{def:total-recovery}):\\emit $\mathcal{G}_\ell$; intersect $\mathcal{G} \leftarrow \mathcal{G} \cap \mathcal{G}_\ell$};
\node[decision, aspect=1.7, inner sep=1pt, below=0.9cm of update] (passq) {Stage label set nonempty\\\&\ budget OK? ($|\mathcal{G}_\ell| \le ck$ per stage;\\$|\mathcal{G}| \le k$ at chain end)};
\node[fallback, right=0.75cm of passq] (dense) {Dense FFT fallback\\$O(N \log N)$};
\node[decision, below=0.9cm of passq] (donelq) {$\ell < L$?};
\node[process, below=0.9cm of donelq] (recon) {Read coefficients for $g \in \mathcal{G}$\\(sparse candidate, any stage $\ell$)};
\node[decision, below=0.9cm of recon] (verify) {Residual verifier PASS?\\($O(k^2)$, oracle model)};
\node[data, below=0.9cm of verify] (output) {Output: exact $k$-sparse spectrum};
\coordinate (loopchan) at ([xshift=-5mm]stage.west);
\draw[arrow] (input) -- (init);
\draw[arrow] (init) -- (stage);
\draw[arrow] (stage) -- (cert);
\draw[arrow] (cert) -- (update);
\draw[arrow] (update) -- (passq);
\draw[arrow] (passq) -- node[above, font=\scriptsize] {No} (dense);
\draw[arrow] (passq) -- node[right, pos=0.62, font=\scriptsize] {Yes} (donelq);
\draw[arrow] (donelq) -- node[right, font=\scriptsize] {No} (recon);
\draw[loop] (donelq.west) -- node[midway, below, font=\scriptsize] {Yes: $\ell \leftarrow \ell+1$} (donelq.west -| loopchan) -- (loopchan) -- (stage.west);
\draw[arrow] (recon) -- (verify);
\draw[arrow] (verify) -- node[right, font=\scriptsize] {PASS} (output);
\draw[arrow] (verify.east) -- node[above, font=\scriptsize] {FAIL} ++(0.7,0) -| ([xshift=-8mm]dense.south);
\draw[arrow] (dense) |- (output);
\end{tikzpicture}
\caption{Screened Multi-Stage Sparse FFT Flow with Global-Label Engine and Fallback}
\label{fig:algorithm_flowchart}
\end{figure}

\paragraph{Outside the divisor-compatibility envelope.} For arbitrary $N$ where one or more stage primes do not divide $N$, the screened framework prescribes three handling options, listed in order of preference for sparse-FFT applications:
\begin{enumerate}
  \item Alternative coprime divisor-compatible moduli: select a different prime tuple $p'_1, \ldots, p'_{L'}$ from a widened interval $[k, c'k]$ with each $p'_\ell \mid N$. Feasibility depends on the prime factorization of $N$; engineered system designs typically choose $N$ so that this option succeeds.
  \item Engineered acquisition or processing length: in pipelined applications where the input length is selected at design time, choose the acquisition or processing length $N' = Q_L = \prod_\ell p_\ell$ from the start. The screened sparse path then runs on the engineered sequence with the formal theorem applying directly. Note: this option is engineered acquisition, not post-hoc zero-padding of a fixed $N_0$-DFT problem; zero-padding a sparse signal from arbitrary $N_0$ to $Q_L$ generally spreads each on-grid tone into a Dirichlet kernel and is incompatible with the on-grid sparsity assumption.
  \item Dense fallback: dispatch directly to the $O(N \log N)$ dense FFT, preserving exact recovery via \Cref{thm:hybrid-correctness}.
\end{enumerate}
The screened sparse-path cost theorem (\Cref{thm:sampling-bluestein}) applies only inside the divisor-compatibility envelope; the handling options above ensure exact recovery on every input.

\paragraph{Implementation note (chirp-$z$ probe, non-canonical).} For applications that attempt a sparse-path probe on arbitrary $N$ without padding, the chirp-$z$ transform~\cite{bluestein1970linear, rabiner1969chirpz} produces samples of the underlying DTFT at $p_\ell$ uniformly-spaced frequency points. These samples do not in general equal residue sums of $X$ modulo $p_\ell$ when $p_\ell \nmid N$, and the screened theorem above does not apply. The chirp-$z$ probe is therefore an implementation extension, not part of the formal sparse-path cost theorem; it can be used as a heuristic precheck before dense fallback in latency-sensitive deployments where the cost of padding (option~2) exceeds the cost of an exploratory probe.

\section{Singleton Screens and Core Lemmas}
\label{sec:screens}

This section states the five operative facts about a single decimated bin on which the multi-stage engine of \S\ref{sec:main-theorem} is built, written directly in the exact ratio/root-membership language of the recovery rule: (i) the true-singleton shifted-observation identity (\Cref{thm:singleton-observation-identity}); (ii) exact phase-ratio recovery of the global label from the three shifted observations (\Cref{lem:global-label-recovery}); (iii) the occupied equal-magnitude pre-screen, the total label-emission rule, and their conjunction, the full exact emission predicate, with its one-sidedness (\Cref{def:exact-screen}, \Cref{thm:phase-consistency-certificate}); (iv) the existence of screen-passing phantoms, so that a screen pass certifies nothing (\Cref{rem:one-sided-normative}); and (v) the verifier backstop for accepted false labels (\Cref{rem:verifier-backstop}).

A bin failing the occupied equal-magnitude pre-screen is never submitted to the total label-emission rule, and a candidate-singleton failing that total rule emits no label; both are normal per-bin events, with the dense-FFT fallback triggered only at stage or chain level per the routing rule of \Cref{thm:hybrid-correctness} (a no-consistent-label stage, a budget overflow, a verifier FAIL, or an envelope violation).

The supporting material behind these statements is collected in \Cref{app:screen-analysis}: the B\'ezout co-primality theorem and its scope, the generic (measure-zero-exceptional) collision-rejection theorem, the shift-triple design notes with the arithmetic-shift Hankel/Prony escalation lemma, the magnitude and phase surrogate lemmas, and the cross-stage collision-multiplicity estimates. The implementation surrogates (normalized MAD, unwrapped affine phase regression) are specified in \Cref{def:cv-singleton-test}, \Cref{def:phase-unwrapping}, and \Cref{rem:impl-surrogates}; they appear in no formal statement of this section.

\subsection{True-Singleton Shifted-Observation Identity}

\noindent\emph{Informally:} a lone tone in a bin does exactly one thing when the acquisition window slides, which is rotate. Its magnitude never moves, and the rotation per unit shift is its frequency; the identity below writes that out, and every condition the screen tests is a test of it.

\begin{theorem}[True-Singleton Shifted-Observation Identity]
\label{thm:singleton-observation-identity}
Let $f$ be an on-grid singleton frequency in decimated bin $r = f \bmod p_\ell$ of stage $\ell$ (that is, $f$ is the only support frequency with $f \equiv r \pmod{p_\ell}$), and write the shifted observations $z_\sigma \coloneqq X_{p_\ell}^{(\sigma)}[r]$ for $\sigma \in \{0, 31, 63\}$. Then, under the sampling convention of \S\ref{subsec:sampling-operator}, exactly
\begin{equation}
z_\sigma = \frac{p_\ell}{N}\, X[f]\, e^{+j 2\pi \sigma f / N},
\label{eq:singleton-observation-identity}
\end{equation}
and consequently:
\begin{enumerate}
  \item (Magnitude invariance and occupancy.) $|z_0| = |z_{31}| = |z_{63}| = \tfrac{p_\ell}{N}\,|X[f]| > 0$: the shifted magnitude is independent of the shift and strictly positive.
  \item (Exact increment ratios.) The increment ratios are exact unimodular rotations, $u_{31} = z_{31}/z_0 = e^{+j 2\pi\,31 f/N}$ and $u_{32} = z_{63}/z_{31} = e^{+j 2\pi\,32 f/N}$, and the B\'ezout phasor is the exact root of unity $w = u_{32}\,u_{31}^{-1} = e^{+j 2\pi f/N} = \omega^{f}$.
\end{enumerate}
\end{theorem}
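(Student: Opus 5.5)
The plan is to derive \eqref{eq:singleton-observation-identity} directly from the decimated sampling operator of \S\ref{subsec:sampling-operator}, and then read off both consequences by elementary algebra. First I will recheck the aliasing identity rather than simply cite it, since the sign convention matters. Write $D_\ell = N/p_\ell$, substitute $y_\ell^{(\sigma)}[m] = x[(\sigma + mD_\ell)\bmod N]$ into the size-$p_\ell$ forward DFT, and expand $x$ by the inverse DFT of \Cref{def:dft} (with the $+$ kernel). The reduction modulo $N$ is harmless because $e^{+j2\pi f n/N}$ is $N$-periodic in $n$. Exchanging the two finite sums gives
\[
\frac1N\sum_f X[f]\,e^{+j2\pi\sigma f/N}\sum_{m=0}^{p_\ell-1} e^{+j2\pi m(f-r)/p_\ell},
\]
using $e^{+j2\pi f m D_\ell/N}=e^{+j2\pi fm/p_\ell}$. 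The inner geometric sum equals $p_\ell$ when $f\equiv r \pmod{p_\ell}$ and $0$ otherwise; this step needs $p_\ell\mid N$, which \Cref{def:transform-family} guarantees. The result is the displayed class-sum formula for $X_{p_\ell}^{(\sigma)}[r]$.

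Next I specialize to the singleton hypothesis. Every $f'\equiv r\pmod{p_\ell}$ other than $f$ has $X[f']=0$, so the class sum collapses to the single term $\tfrac{p_\ell}{N}X[f]e^{+j2\pi\sigma f/N}$. This is exactly \eqref{eq:singleton-observation-identity}, and it holds for every integer $\sigma$, in particular for $\sigma\in\{0,31,63\}$.

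For item~1, the phase factor has unit modulus, so $|z_\sigma|=\tfrac{p_\ell}{N}|X[f]|$ independently of $\sigma$. This value is strictly positive because $f\in\mathrm{supp}(X)$; being a singleton means $f$ is a support frequency.

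For item~2, positivity guarantees that $z_0,z_{31}\neq0$, so the ratios are well defined. Dividing the identities gives
\[
u_{31}=e^{+j2\pi(31-0)f/N}, \qquad u_{32}=e^{+j2\pi(63-31)f/N}=e^{+j2\pi\,32f/N},
\]
since the common factor $\tfrac{p_\ell}{N}X[f]$ cancels. Then
\[
w=u_{32}u_{31}^{-1}=e^{+j2\pi(32-31)f/N}=\omega^f,
\]
with $\omega=e^{+j2\pi/N}$ as in \Cref{tab:notation}.

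The only real obstacle is the first step, getting the aliasing identity with the correct $+$ sign, the correct $p_\ell/N$ prefactor, and the correct use of divisibility. Everything after that is cancellation of a common nonzero factor. If the paper's earlier derivation is accepted as given, I would simply cite it and keep the proof to the collapse-and-divide argument.
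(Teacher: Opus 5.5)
Your proposal is correct and follows the paper's proof. In both, the class sum in the aliasing identity collapses to the single singleton term, magnitude invariance is read off the unimodular phase factor, and the ratios follow by cancelling the common nonzero factor $\tfrac{p_\ell}{N}X[f]$. The only difference is that you re-derive the aliasing identity (forward $p_\ell$-point DFT, inverse-DFT expansion, geometric sum using $p_\ell \mid N$) instead of citing \S\ref{subsec:sampling-operator}; your derivation matches the one the paper sketches there in prose.
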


\begin{proof}
For a singleton bin $r$, the frequency $f$ is the only support frequency with $f \equiv r \pmod{p_\ell}$, so the aliasing identity of \S\ref{subsec:sampling-operator} (with $M = p_\ell$ and $s = \sigma$) reduces to the single term \eqref{eq:singleton-observation-identity}. The shift $\sigma$ enters only through the unimodular phase factor $e^{+j 2\pi \sigma f/N}$, and $X[f]$ is the exact on-grid DFT coefficient with $X[f] \ne 0$ for a true tone; hence $|z_\sigma| = \tfrac{p_\ell}{N}|X[f]| > 0$ is constant across shifts, giving item~1. Forming the ratios cancels the common factor $\tfrac{p_\ell}{N} X[f]$: $u_{31} = e^{+j 2\pi (31-0) f/N}$, $u_{32} = e^{+j 2\pi (63-31) f/N} = e^{+j 2\pi\,32 f/N}$, and $w = u_{32}\,u_{31}^{-1} = e^{+j 2\pi (32-31) f/N} = \omega^{f}$, giving item~2.
\end{proof}

\subsection{Exact Phase-Ratio Recovery of the Global Label}

For a genuine singleton, the identity of \Cref{thm:singleton-observation-identity} makes the global label exactly recoverable from the three shifted observations. Intuitively, the two increments $31$ and $32$ rotate the tone at two rates that come back into step only after a full turn, because they are coprime, so the pair of measured rotations pins the frequency to a single integer rather than to a family of them. The following lemma, on which the global-label engine of \S\ref{sec:main-theorem} rests, records the recovery.

\begin{lemma}[Per-Bin Global-Label Recovery]
\label{lem:global-label-recovery}
Fix stage $\ell$ with prime $p_\ell \mid N$ and the shift triple $\{0,\Delta_1,\Delta_1+\Delta_2\} = \{0,31,63\}$, $\gcd(\Delta_1,\Delta_2) = \gcd(31,32) = 1$. Suppose bin $r$ of the $p_\ell$-decimated view holds a single on-grid tone at global frequency $g$ (so $g \equiv r \pmod{p_\ell}$ and no other support frequency lies in class $r$). Write the three shifted observations $z_\sigma \coloneqq X_{p_\ell}^{(\sigma)}[r]$ for $\sigma \in \{0,31,63\}$. Then:
\begin{enumerate}
  \item (Nonvanishing.) If bin $r$ passes the occupied equal-magnitude pre-screen (condition~1 of \Cref{def:exact-screen}; automatic for a genuine singleton by \Cref{thm:singleton-observation-identity}), then $|z_0| = |z_{31}| = |z_{63}| = \tfrac{p_\ell}{N}|X[g]| > 0$; in particular $z_0 \ne 0$, so the ratios below are well-defined.
  \item (Exact phase ratios.) The per-step ratios $u_{31} \coloneqq z_{31}/z_0$ and $u_{32} \coloneqq z_{63}/z_{31}$ satisfy, exactly,
        \[
          \begin{aligned}
            u_{31} &= e^{+j 2\pi\,\Delta_1 g/N} = e^{+j 2\pi\,31 g/N}, \\
            u_{32} &= e^{+j 2\pi\,\Delta_2 g/N} = e^{+j 2\pi\,32 g/N},
          \end{aligned}
        \]
        the prefactor $p_\ell/N$ and amplitude $X[g]$ canceling.
  \item (Exact global recovery.) With the B\'ezout coefficients $a\Delta_1 + b\Delta_2 = 1$, $(a,b) = (-1,1)$,
        \[
          \begin{aligned}
            w &\coloneqq u_{32}^{\,b}\,u_{31}^{\,a} = u_{32}\,u_{31}^{-1} \\
              &= e^{+j 2\pi\,(\Delta_2 - \Delta_1) g/N} = e^{+j 2\pi\,g/N},
          \end{aligned}
        \]
        and the global label is recovered as the unique index $g \in \{0,1,\dots,N-1\}$ with
        \[
          w = \omega^{g}, \qquad \omega = e^{+j2\pi/N}
        \]
        (equivalently $g = \log_\omega w$, the discrete root-index of \Cref{def:total-recovery}, whose semantics are defined without any transcendental $\arg(\cdot)$ and which is computed by the costed procedure of \Cref{def:label-decoding-model}), and in integer form $g = (a_{63} - a_{31})\,\Delta_2^{-1} \bmod N$, where $a_{\Delta} \coloneqq \log_\omega(z_\Delta/z_0) = \Delta g \bmod N$ is the discrete index of the unit-modulus ratio $z_\Delta/z_0 = \omega^{\Delta g}$ and $\Delta_2^{-1} = 32^{-1} \bmod N$ is well-defined because $N$ is a product of odd primes $p_\ell \ge k \ge 3$, so $N$ is odd and $\gcd(32,N) = \gcd(2^5,N) = 1$. Exactly one value $g \in [0,N)$ is produced.
  \item (Coefficient.) $X[g] = \tfrac{N}{p_\ell}\,z_0$ (no de-rotation, as $\sigma = 0$).
\end{enumerate}
\end{lemma}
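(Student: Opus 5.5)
The plan is to derive everything from the singleton identity of \Cref{thm:singleton-observation-identity}, which under the hypothesis that bin $r$ holds the single tone $g$ gives exactly $z_\sigma = \tfrac{p_\ell}{N} X[g]\, e^{+j2\pi\sigma g/N}$ for $\sigma \in \{0,31,63\}$. The four items then follow in order, each resting on the previous one.

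\emph{Items 1 and 2.} For item~1, the moduli of the three observations are all $\tfrac{p_\ell}{N}|X[g]|$. This is strictly positive because $g$ is a support frequency, so $z_0 \ne 0$ and, by the same identity, $z_{31} \ne 0$. Both ratios are therefore defined, and the pre-screen hypothesis is only a consistency remark here. For item~2, divide the identities: the common factor $\tfrac{p_\ell}{N}X[g]$ cancels, leaving $u_{31} = e^{+j2\pi\cdot 31 g/N}$ and $u_{32} = e^{+j2\pi(63-31)g/N}$.

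\emph{Item 3: existence and uniqueness of the label.} Multiplying $u_{32}$ by $u_{31}^{-1}$, which is legitimate since $|u_{31}| = 1$, gives $w = e^{+j2\pi(32-31)g/N} = \omega^g$. The B\'ezout pair $(a,b) = (-1,1)$ is verified by $-31 + 32 = 1$. Uniqueness of $g \in [0,N)$ holds because $\omega$ is a primitive $N$th root of unity: $\omega^{g} = \omega^{g'}$ forces $N \mid g - g'$.

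\emph{Item 3: the integer form.} Define $a_\Delta = \log_\omega(z_\Delta/z_0)$. Since $z_\Delta/z_0 = \omega^{\Delta g}$, we get $a_\Delta \equiv \Delta g \pmod N$. Subtracting, $a_{63} - a_{31} \equiv 32 g \pmod N$. Because $N = \prod_\ell p_\ell$ with odd primes $p_\ell \ge k \ge 3$ (\Cref{def:transform-family}), $N$ is odd, so $32$ is invertible mod $N$. Multiplying by $32^{-1}$ returns $g \bmod N$, which is a single value in $[0,N)$.

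\emph{Item 4.} This is the $\sigma = 0$ case of the identity, solved for $X[g]$.

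\emph{Where care is needed.} The only delicate point is item~3, specifically keeping the statement purely algebraic so that no branch of $\arg(\cdot)$ is ever invoked. I would phrase uniqueness entirely through primitivity of $\omega$ and the invertibility of $32$ modulo the odd $N$, rather than through phase unwrapping. I would also note explicitly that $w$ is guaranteed to be an $N$th root of unity here precisely because of the singleton hypothesis; in a collided bin it need not be, which is why \Cref{def:total-recovery} handles that case separately.
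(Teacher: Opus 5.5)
Your proposal is correct and follows essentially the same route as the paper: both reduce the bin to the single aliased term $z_\sigma = \tfrac{p_\ell}{N}X[g]\,e^{+j2\pi\sigma g/N}$ (you via \Cref{thm:singleton-observation-identity}, the paper directly from the aliasing identity). Both then read off equal positive magnitudes, cancel the common factor in the ratios, identify $w=\omega^{g}$ with uniqueness from the bijection $g \mapsto \omega^{g}$ of $\mathbb{Z}_N$ onto the $N$th roots of unity, invert $32$ modulo the odd $N$ for the integer form, and set $\sigma=0$ for the coefficient. Your remark that the pre-screen hypothesis in item~1 is not actually needed, since the singleton hypothesis alone gives $z_0, z_{31} \neq 0$, matches the paper's proof, which likewise never uses it.
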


\begin{proof}
By the decimation aliasing identity (\Cref{subsec:sampling-operator}) and the singleton hypothesis, the only surviving term in class $r$ is $g$: $z_\sigma = \tfrac{p_\ell}{N} X[g]\,e^{+j 2\pi\,\sigma g/N}$. (1) The magnitude is $\sigma$-independent: $|z_\sigma| = \tfrac{p_\ell}{N}|X[g]|$, equal across shifts, and positive since $X[g] \ne 0$ for a true tone; hence $z_0 \ne 0$. (2) Forming ratios cancels the common $\tfrac{p_\ell}{N} X[g]$: $u_{31} = e^{+j 2\pi\,31 g/N}$, $u_{32} = e^{+j 2\pi\,(63-31) g/N} = e^{+j 2\pi\,32 g/N}$. (3) $w = u_{32} u_{31}^{-1} = e^{+j 2\pi\,(32-31) g/N} = e^{+j 2\pi\,g/N} = \omega^{g}$; since the map $g \mapsto \omega^{g}$ is a bijection from $\mathbb{Z}_N$ onto the $N$th roots of unity, the integer $g \in \{0,\dots,N-1\}$ with $w = \omega^{g}$ (i.e.\ $g = \log_\omega w$) is recovered uniquely. The integer form is the same statement read through $a_{31} = 31 g \bmod N$, $a_{63} = 63 g \bmod N$ and $(a_{63} - a_{31}) = 32 g \bmod N$, inverted by $32^{-1} \bmod N$. Exactly one $g$ results. (4) Set $\sigma = 0$ in $z_\sigma = \tfrac{p_\ell}{N} X[g] e^{+j 2\pi \sigma g/N}$: $z_0 = \tfrac{p_\ell}{N} X[g]$, so $X[g] = \tfrac{N}{p_\ell} z_0$.
\end{proof}

\subsection{The One-Sided Exact Screen}

\begin{definition}[Occupied Equal-Magnitude Pre-Screen, Total Label-Emission Rule, and Full Exact Emission Predicate (Ratio/Root Form)]
\label{def:exact-screen}
Fix stage $\ell$ (prime $p_\ell \mid N$) and the shift triple $\{0, 31, 63\}$ ($\Delta_1 = 31$, $\Delta_2 = 32$, $\gcd(31, 32) = 1$). For bin $r$ with shifted observations $z_\sigma = X_{p_\ell}^{(\sigma)}[r]$, three named predicates are distinguished and never conflated:
\begin{enumerate}
  \item (Occupied equal-magnitude pre-screen.) $|z_0| = |z_{31}| = |z_{63}| > 0$; equivalently, the bin is \emph{occupied}, $\max_\sigma |z_\sigma| > 0$, and attains the vanishing normalized range $D_{\max}(r) = (\max_\sigma |z_\sigma| - \min_\sigma |z_\sigma|)/\max_\sigma |z_\sigma| = 0$, which the occupancy condition makes well-defined. An all-zero bin ($z_0 = z_{31} = z_{63} = 0$, e.g.\ a bin whose class contains no support frequency) fails the occupancy condition, is not a candidate-singleton, and emits no label; consequently every bin passing this pre-screen satisfies $|z_0| = |z_{31}| = |z_{63}| > 0$ without any singleton assumption, so the increment ratios $u_{31} = z_{31}/z_0$ and $u_{32} = z_{63}/z_{31}$ and the B\'ezout phasor $w = u_{32}\,u_{31}^{-1}$ are defined exactly, the well-definedness required by the total rule of \Cref{def:total-recovery}.
  \item (Total label-emission rule: root membership.) $w$ is exactly an $N$th root of unity: $w = \omega^{g}$ ($\omega = e^{+j 2\pi/N}$) for a necessarily unique integer $g \in \{0, \dots, N-1\}$, the bin's \emph{global label}.
  \item (Total label-emission rule: class consistency.) $g \equiv r \pmod{p_\ell}$.
\end{enumerate}
A bin passing the pre-screen of condition~1 is a \emph{candidate-singleton}, the canonical term throughout: the status names the outcome of the pre-screen, not a ground-truth property, and every candidate-singleton is submitted to the \emph{total label-emission rule}, the conjunction of conditions~2 and~3, which are exactly the emit conditions of the total per-bin rule $\textsc{Recover}$ of \Cref{def:total-recovery}, executed and costed by the decoding procedure of \Cref{def:label-decoding-model}. The conjunction of all three conditions is the \emph{full exact emission predicate}: a bin satisfying it emits the label $g$; a candidate-singleton failing the emission rule emits no label; a bin failing the pre-screen is never submitted to $\textsc{Recover}$. No $\arg(\cdot)$-based phase-linearity predicate appears in any of the three: the unwrapped affine phase regression of \Cref{def:phase-unwrapping} is an implementation surrogate only (\Cref{rem:impl-surrogates}), as is the normalized-MAD magnitude test of \Cref{def:cv-singleton-test}.
\end{definition}

\noindent\emph{Informally:} the pre-screen and the emission rule can only ever say ``not a singleton''. A genuine singleton passes every condition of the full exact emission predicate, so a failure is decisive, while a pass is not, because a collision can imitate the signature on a thin set of coefficients. The theorem states those two halves and the gap between them.

\begin{theorem}[One-Sided Exact Screen (Pre-Screen and Emission Behavior on the Co-Prime Shift Triple)]
\label{thm:phase-consistency-certificate}
In the noiseless on-grid model, on the shift triple $\{0, 31, 63\}$:
\begin{enumerate}
  \item (Genuine singletons pass the pre-screen and emit their true label.) If bin $r$ of stage $\ell$ holds a genuine singleton at frequency $f$, the bin passes the occupied equal-magnitude pre-screen, satisfies the total label-emission rule, and therefore satisfies the full exact emission predicate of \Cref{def:exact-screen}, emitting exactly the label $g = f$; the predicate has zero false negatives on singletons.
  \item (Emission-rule failures reject singleton status.) Contrapositively, a bin that fails the pre-screen or the total label-emission rule, i.e.\ any conjunct of the full exact emission predicate, does not hold a genuine singleton; it emits no label, a normal per-bin event, and the input reaches the dense-FFT fallback only through the stage- and chain-level routing rule of \Cref{thm:hybrid-correctness} (the routing and the optional higher-order Prony escalation are detailed in \Cref{rem:collision-escalation}).
  \item (Passes are provisional.) The predicate is one-sided: at every collision multiplicity $m \ge 2$ there are admissible instances and bins on which a measure-zero coefficient slice of $m$-tone collisions satisfies the full exact emission predicate and emits a phantom label (\Cref{prop:two-tone-phantom,prop:higher-order-phantom}), while for generic (Lebesgue-almost-every) coefficients on each fixed support an $m \ge 2$ collision fails it (\Cref{thm:generic-collision-rejection}). An emitted label is therefore provisional, and exactness of an accepted output is secured by the sound final verifier of \Cref{thm:verifier-soundness}, not by the pre-screen or the emission rule.
\end{enumerate}
\end{theorem}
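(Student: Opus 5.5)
The proof splits along the three items of the statement. Items~1 and~2 follow directly from results already established. Item~3 is essentially an assembly of the phantom propositions and the generic-rejection theorem, which are stated later in the paper; here they are invoked as given.

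\emph{Item~1.} Let bin $r$ of stage $\ell$ hold a genuine singleton at frequency $f$. \Cref{thm:singleton-observation-identity}(1) gives $|z_0| = |z_{31}| = |z_{63}| = \tfrac{p_\ell}{N}|X[f]| > 0$. This is exactly condition~1 of \Cref{def:exact-screen}, so the bin is a candidate-singleton and $w$ is well-defined.

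\Cref{thm:singleton-observation-identity}(2), or equivalently \Cref{lem:global-label-recovery}(3), gives $w = \omega^{f}$ with $f \in \{0,\dots,N-1\}$. Hence root membership holds with $g = f$. Uniqueness of $g$ follows because $g \mapsto \omega^{g}$ is a bijection from $\mathbb{Z}_N$ onto the $N$th roots of unity.

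Class consistency $g \equiv r \pmod{p_\ell}$ holds because $g = f$ and $r = f \bmod p_\ell$ by hypothesis. All three conjuncts therefore hold, and the bin emits exactly $g = f$. This yields zero false negatives on singletons.

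\emph{Item~2.} This is the contrapositive of item~1. If a bin held a genuine singleton, item~1 shows it would satisfy every conjunct of the full exact emission predicate. So failure of any conjunct excludes singleton status.

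That such a bin emits no label is immediate from \Cref{def:exact-screen}. A bin failing condition~1 is never submitted to $\textsc{Recover}$, and a candidate-singleton failing condition~2 or~3 emits nothing by the total rule of \Cref{def:total-recovery}. The routing claim is not proved here; it is deferred by reference to \Cref{thm:hybrid-correctness} and \Cref{rem:collision-escalation}, since only per-bin behaviour is asserted.

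\emph{Item~3.} This is the only part with real content. The plan is to cite \Cref{prop:two-tone-phantom} for $m = 2$ and \Cref{prop:higher-order-phantom} for $m \ge 3$, which supply measure-zero coefficient slices whose aliased sums reproduce a singleton signature. For the generic half, cite \Cref{thm:generic-collision-rejection}.

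If I were to sketch the phantom construction myself, I would take two tones $f_1, f_2$ in the same class $r$. I would then choose a target label $g \equiv r \pmod{p_\ell}$ and impose
\[
\sum_i c_i\,\omega^{\sigma f_i} = c\,\omega^{\sigma g} \quad \text{for } \sigma \in \{0,31,63\}.
\]
This is a linear system of three equations in the unknowns $(c_1, c_2, c)$. I would exhibit a nonzero solution with $c_1, c_2 \ne 0$, which exists for suitable admissible $N$ and supports. The solution set is a proper algebraic subvariety of coefficient space, hence Lebesgue-null.

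For the generic half, fix the support. The set of coefficients passing condition~1 together with root membership is contained in a finite union of proper algebraic sets, one per candidate root $\omega^{g}$. Each such set is null unless the defining polynomials vanish identically, and the co-prime increments of \Cref{def:coprime-shifts} rule that out.

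The expected main obstacle is the existence direction for every $m \ge 2$. This requires checking that the linear system admits solutions with all collision coefficients nonzero and the phantom label class-consistent. I would first try $g$ equal to one of the colliding frequencies' classes, with a small explicit $N$ from \Cref{tab:prime-availability}.

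The final sentence of item~3 is then a restatement. Since passes can be phantoms, exactness is delegated to \Cref{thm:verifier-soundness}.
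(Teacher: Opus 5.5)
Your proposal is correct and follows the paper's proof essentially line for line: item~1 comes from \Cref{thm:singleton-observation-identity} (equal positive magnitudes, $w=\omega^{f}$, uniqueness via the bijection $g\mapsto\omega^{g}$, and class consistency since $f\equiv r \pmod{p_\ell}$), item~2 is its contrapositive with the routing only stated, and item~3 cites \Cref{prop:two-tone-phantom,prop:higher-order-phantom} and \Cref{thm:generic-collision-rejection}. Your optional sketches are not load-bearing; the one caveat is that at $m=2$ the homogeneous system in $(c_1,c_2,c)$ is square, so a phantom exists only on a degenerate support (the paper's witness uses $f_2-f_1=N/31$ with $31\mid N$), which your ``suitable supports'' hedge covers.
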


\begin{proof}
Item 1. By \Cref{thm:singleton-observation-identity}, a genuine singleton at $f$ has $|z_0| = |z_{31}| = |z_{63}| = \tfrac{p_\ell}{N}|X[f]| > 0$ (the pre-screen, condition~1) and $w = \omega^{f}$ exactly, so $w$ is an $N$th root of unity whose unique index is $g = f$ (root membership, condition~2; the map $g \mapsto \omega^{g}$ is a bijection from $\mathbb{Z}_N$ onto the $N$th roots of unity), and $g = f \equiv r \pmod{p_\ell}$ because $f$ lies in class $r$ (class consistency, condition~3). The emitted label is $g = f$, as recorded in \Cref{lem:global-label-recovery}(3). Item 2 is the contrapositive of item~1: a genuine singleton satisfies every conjunct of the full exact emission predicate, so a bin failing one holds no genuine singleton. The proof of item~2 uses nothing beyond item~1; the routing consequence it mentions is stated, not used, and is named in the theorem statement. Item 3. The explicit two-tone witness of \Cref{prop:two-tone-phantom} satisfies the full exact emission predicate and emits a class-consistent phantom label, and the construction extends to every $m \ge 3$ (\Cref{prop:higher-order-phantom}); genericity of rejection is \Cref{thm:generic-collision-rejection}, proved in \Cref{app:screen-analysis}. One-sidedness and verifier-gated exactness are \Cref{rem:one-sided-normative} and \Cref{thm:verifier-soundness}.
\end{proof}

\subsection{Phantoms and the Verifier Backstop}

\begin{remark}[Normative One-Sidedness of the Screen]
\label{rem:one-sided-normative}
The three-shift screen is \emph{one-sided}: every true on-grid singleton passes (zero false negatives), and generic $m$-tone collisions fail (\Cref{thm:generic-collision-rejection}), but at every collision multiplicity $m \ge 2$ there are admissible instances on which a measure-zero coefficient slice satisfies the full exact emission predicate of \Cref{def:exact-screen} while holding a non-singleton (\Cref{prop:two-tone-phantom,prop:higher-order-phantom}). A bin passing the pre-screen is only a candidate-singleton, an emitted label is only provisional, and a full-emission-predicate-passing phantom emits a provisional label and can be admitted to the candidate set, never caught by the screen. A stage-local screen pass does not by itself invoke the verifier: if a phantom label survives the global-label intersection and is used in the reconstructed candidate, the residual verifier rejects the candidate (\Cref{thm:verifier-soundness}) and the input is routed to the dense fallback (\Cref{thm:hybrid-correctness}). Exactness of an accepted output rests on that verifier and never on the screen. All later one-sidedness statements defer to this remark.
\end{remark}

\begin{remark}[Verifier Backstop for Accepted False Labels]
\label{rem:verifier-backstop}
If a phantom label survives the global-label intersection and is used in the reconstructed candidate, the residual verifier rejects the candidate: the reconstruction then differs from $X$, the residual $x - \hat{x}$ is nonzero, and the sound final verifier of \Cref{thm:verifier-soundness} returns FAIL, routing the input to the exact dense fallback (\Cref{rem:phantom-labels}, \Cref{thm:hybrid-correctness}). Worked instances are given in \S\ref{sec:adversarial}: a surrogate-admitted collision rejected at root membership, and an exact full-emission-predicate-passing two-tone phantom that no screen can exclude, rejected only by final verification if it survives the intersection.
\end{remark}

\section{Conditional Fast-Path Complexity and Unconditional Correctness}
\label{sec:main-theorem}

The screen results of the previous section deliver per-bin candidate-singleton screening and exact global-label emission within a single CRT stage (one-sided screen, \Cref{rem:one-sided-normative}; generic collisions are rejected outside a measure-zero coefficient slice, \Cref{thm:generic-collision-rejection}). This section assembles them into a multi-stage algorithm, in dependency order: the two cost models are fixed first (\Cref{def:label-decoding-model}), then the sound final verifier that carries correctness (\Cref{def:verifier}, \Cref{thm:verifier-soundness}), then the global-label engine and its deterministic candidate count and containment property (\Cref{thm:global-label-count}, \Cref{thm:global-label-completeness}), then the cost of candidate construction in each model (\Cref{thm:sparse-path-complexity}, \Cref{thm:cost-comparison-model}), and finally the unconditional hybrid correctness statement (\Cref{thm:hybrid-correctness}). Each of these objects is proved once and cited thereafter; no proof in this section re-derives another Section~VI object's content. The accounting is that each of the $L = O(\log_k N)$ stages does only $O(k \log k)$ FFT work while the global-label engine produces a candidate set of proved size $O(k)$, so the total work along the screened sparse path is $L \cdot O(k \log k) = O(k \log N)$. This bounds the cost of candidate construction; exactness of the returned spectrum is verifier-gated, not established by the stage chain alone. The attempted sparse path is allowed to abort: if a stage emits no class-consistent global label, or a budget guard fires, the input reaches the dense-FFT fallback through the stage- and chain-level routing rule of \Cref{thm:hybrid-correctness}, after a sparse-attempt spend bounded by the cost theorems (\Cref{thm:sparse-path-complexity,thm:cost-comparison-model}).

\subsection{Standing Envelope and the Two Cost Models}

\begin{table*}[!t]
\centering
\caption{Operating Model, Fast-Path Condition, and Proved Safeguards}
\label{tab:oklogn-assumptions}
\small
\begin{tabular}{p{2.8cm}p{6.5cm}p{6cm}}
\toprule
\emph{Item} & \emph{Description} & \emph{Role / status} \\
\midrule
\multicolumn{3}{l}{\emph{(1) Operating model (exact, design-selected).}} \\
\addlinespace
Signal model & Exact, at-most-$k$-sparse on-grid spectrum $x \in \mathbb{C}^N$; noiseless; exact arithmetic; frequencies $f \in \{0, 1, \ldots, N-1\}$ & Standing model of \Cref{def:signal-model} for \Cref{thm:verifier-soundness,thm:hybrid-correctness} \\
\addlinespace
Transform length & Exact product $N = \prod_{\ell=1}^{L} p_\ell$ of the $L$ stage primes ($L = O(\log_k N)$); divisor-compatible decimation (\Cref{thm:sampling-bluestein}) & Engineered divisor compatibility (\Cref{def:transform-family}); $k^L \leq N \leq (ck)^L$ when $p_\ell \in [k,ck]$ \\
\addlinespace
Prime selection & Pairwise distinct $p_\ell \in [k, ck]$, $c \ge 2$ (special case $c=2$); $k \le \min_\ell p_\ell$ & Prime availability with CRT coverage; at least $k$ bins per stage \\
\addlinespace
Co-prime shifts & $S = 3$ shifts $\{0, 31, 63\}$, increments $\Delta_1 = 31$, $\Delta_2 = 32$ with $\gcd(31,32) = 1$ (B\'ezout phase diversity; not Hankel/Prony rank-one) & Exact global-label phase read for genuine singletons (\Cref{lem:global-label-recovery}) \\
\midrule
\multicolumn{3}{l}{\emph{(2) Fast-path condition (premise, not assumed for correctness).}} \\
\addlinespace
All-stage genuine-singleton survival & $\mathcal{S} \ne \emptyset$ and every $f \in \mathcal{S}$ is the unique support frequency in its class $f \bmod p_\ell$ at every stage $\ell$ (\Cref{cond:all-stage-survival}) & Premise of containment $\mathcal{S} \subseteq \mathcal{G}$ (\Cref{thm:global-label-completeness}), hence of the sparse path completing at $O(k \log N)$ rather than falling back; violators routed to dense fallback \\
\midrule
\multicolumn{3}{l}{\emph{(3) Proved safeguards (theorems, not assumptions).}} \\
\addlinespace
$O(k)$ candidate count & Each candidate-singleton emits at most one global label; per-stage label sets are intersected, never multiplied & Proved by \Cref{thm:global-label-count} (\Cref{rem:no-product-blowup}); holds for every input \\
\addlinespace
Sound final verifier & PASS on $k + |\mathcal{G}|$ consecutive residual samples $\Rightarrow \hat{X} = X$ & Proved by \Cref{thm:verifier-soundness} (Vandermonde); secures unconditional correctness (\Cref{thm:hybrid-correctness}) \\
\bottomrule
\end{tabular}
\end{table*}

\noindent The formal model is given in \S\ref{sec:problem} and is not restated here: the signal model is \Cref{def:signal-model}, the engineered exact-product transform family is \Cref{def:transform-family}, and the fast-path premise is \Cref{cond:all-stage-survival}. \Cref{tab:oklogn-assumptions} summarizes them alongside the two proved safeguards, so that what is assumed and what is proved can be read off in one place. Implementation surrogates (normalized MAD, $R^2$, the $64 p_\ell$ batching factor) are not part of the formal model and are stated separately in the implementation note \Cref{rem:impl-surrogates}. Only the per-bin label-read cost model remains to be fixed, and it is fixed once, here.

\begin{definition}[Exact Label-Decoding (Root-Index Oracle) Model, and the Comparison Real-RAM Model]
\label{def:label-decoding-model}
All label-read cost claims of this paper are stated in one of the two computational models fixed once here and referenced by \Cref{lem:global-label-recovery}, \Cref{def:total-recovery}, and the cost accounting of \Cref{thm:sampling-bluestein,thm:sparse-path-complexity,thm:cost-comparison-model}.

(a) Exact label-decoding (root-index oracle) model. A real-RAM with exact complex arithmetic and exact equality comparison, extended with three unit-cost oracles: exact evaluation of $\arg(\cdot)$, exact evaluation of the root power $\omega^{m} = e^{+j 2\pi m/N}$ at any integer $m$, and exact nearest-integer rounding $\operatorname{round}(\cdot)$ on exact real quantities. The per-bin label read $g = \log_\omega w$ of \Cref{def:total-recovery} (bin of stage $\ell$, class $r$) then costs $O(1)$: a rounded-$\arg$ guess $g_0 = \operatorname{round}\!\bigl(N \arg(w)/2\pi\bigr) \bmod N$, decided by the exact root-membership comparison $\omega^{g_0} = w$ and the exact class check $g_0 \equiv r \pmod{p_\ell}$, so a label is emitted only on exact passes and exactness never rests on the $\arg$-quantization. Preprocessing: none; storage: $O(1)$ beyond the samples; no precomputed $\Theta(N)$ root table is used or assumed.

(b) Comparison real-RAM model (no root-index oracle). The same real-RAM without the unit-cost rounding and root-evaluation oracles, retaining exact complex arithmetic, exact equality and order comparison, and exact $\arg(\cdot)$. The per-bin label read then costs $O(\log N)$: the nearest-integer conversion is located among the $N$ admissible indices by $O(\log N)$ exact comparisons (binary search), $\omega^{g_0}$ is computed from $\omega$ by repeated squaring in $O(\log N)$ exact multiplications, and the class check remains $O(1)$.

The unit-cost primitives of model (a) are a declared oracle assumption about root-index decoding, and neither model is a bit-complexity model: no bit-level implementability claim is made for either. The costed decoding procedure in step-by-step form, and the full disclosure of what model (b) is and is not, are collected in \Cref{app:cost-models}. Indyk, Kapralov \& Price~\cite{ikp2014soda} make a comparable unit-cost assumption and discharge it; no such discharge is offered here for the root-index read, which is why it is declared as a model rather than removed (the anatomy of that comparison, with its source provenance, is also in \Cref{app:cost-models}).

\textbf{Global cost-model convention:} unless stated otherwise, every unqualified $O(k \log N)$ sparse-path arithmetic figure in this paper is stated in the root-index-oracle model of part~(a) of this definition; comparison-model figures are explicitly marked. Costs quoted as $O(k^2)$ for the final verifier are likewise model (a) figures; in model (b) the verifier's root powers are formed by repeated squaring, adding $O(k \log N)$ (\Cref{prop:verifier-cost}).
\end{definition}

\subsection{Sound Final Verifier}

\noindent The residual $r = x - \hat{x}$ between the signal and the inverse transform of a candidate spectrum $\hat{X}$ supported on a known candidate set $\mathcal{G}$ is a sum of at most $k + |\mathcal{G}|$ on-grid complex exponentials. Such a signal is identically zero if and only if it vanishes on any $k + |\mathcal{G}|$ consecutive time samples. This is a deterministic Vandermonde fact requiring no modulus family, no divisor compatibility, and no separation argument, and it establishes final-verifier soundness as a theorem (\Cref{thm:verifier-soundness}).

\begin{definition}[Consecutive-Sample Final Verifier]
\label{def:verifier}
Let $\mathcal{G} \subseteq \{0,\ldots,N-1\}$ be a known finite candidate set and let $\hat{X}$ be a candidate spectrum supported on $\mathcal{G}$, i.e.\ $\hat{S} \coloneqq \{f : \hat{X}[f] \ne 0\} \subseteq \mathcal{G}$, with duplicate frequencies consolidated and zero coefficients dropped. Both the sparsity bound $k$ of \Cref{def:signal-model} and the candidate-set size $|\mathcal{G}|$ are known to the algorithm, so the \emph{window size}
\[
  W \;\coloneqq\; k + |\mathcal{G}|
\]
is computable before the verifier runs. Fix any base index $n_0 \in \{0,\ldots,N-1\}$ (for example $n_0 = 0$). Given $\hat{X}$ and time-domain access to $x$, compute the residual at $W$ consecutive indices
\[
  \begin{aligned}
    r[n] &= x[n] - \hat{x}[n], \qquad
    \hat{x}[n] = \tfrac{1}{N}\!\sum_{f \in \hat{S}} \hat{X}[f]\, e^{+j 2\pi f n/N}, \\
    &\quad\text{for } n = n_0, n_0+1, \ldots, n_0 + W - 1,
  \end{aligned}
\]
with all indices taken modulo $N$ (the window may wrap cyclically past $N-1$; since $\omega_\ell^{\,n}$ is $N$-periodic in $n$, cyclic wrapping leaves the Vandermonde structure of \Cref{thm:verifier-soundness} unchanged). The verifier returns $V(x,\hat{X}) = \mathrm{PASS}$ if and only if $r[n] = 0$ for all these $W$ indices (exactly, in the noiseless on-grid model). The $W$ values $x[n_0],\ldots,x[n_0+W-1]$ are read directly from $x$, which is available in the time domain by the recovery model; for $|\mathcal{G}| = O(k)$ this is an additional $O(k)$ reads, so the total sample budget of the verified sparse path remains $O(k \log N)$.

\emph{Uniform special case $|\mathcal{G}| \le k$.} Whenever the candidate set satisfies $|\mathcal{G}| \le k$, which the global-label budget of \Cref{alg:multistage_oklogn} enforces on the accumulated set before $V$ is invoked, the window obeys $W = k + |\mathcal{G}| \le 2k$, so the instance-adapted window is contained in a uniform $2k$-sample window and a $2k$-sample verifier is also sound. The uniform $2k$ form is the worst-case instantiation of this definition and is the sizing quoted when no candidate-set size is at hand; the general $k + |\mathcal{G}|$ form is the operative one and is what \Cref{alg:multistage_oklogn} reads. Sizing the window to $2|\mathcal{G}|$ instead would be unsound when $|\mathcal{G}| < k$ (\Cref{thm:verifier-soundness}).

\emph{Precondition (the window is budgeted).} If $|\mathcal{G}_\ell| > ck$ at any stage, or if the final accumulated set satisfies $|\mathcal{G}| > k$ once the stage chain completes, fallback occurs before verification: the input is routed to the dense $O(N \log N)$ path and $V$ is never invoked. This precondition is total in the algorithm, so whenever $V$ runs the window fits and the hypotheses of \Cref{thm:verifier-soundness} hold.
\end{definition}

\noindent\emph{Informally:} a nonzero signal built from at most $k + |\mathcal{G}|$ tones cannot hide on $k + |\mathcal{G}|$ consecutive samples, so a residual that vanishes there is identically zero.

\begin{theorem}[Final-Verifier Soundness]
\label{thm:verifier-soundness}
Let $x$ satisfy \Cref{def:signal-model} ($|\mathrm{supp}(X)| \le k$, noiseless, on-grid, exact arithmetic) and let $\hat{X}$ be a candidate spectrum supported on a known candidate set $\mathcal{G}$. Then the verifier of \Cref{def:verifier}, reading $W = k + |\mathcal{G}|$ consecutive residual samples, satisfies
\[
  V(x,\hat{X}) = \mathrm{PASS} \ \Longrightarrow\ \hat{X} = X .
\]
The bound is unconditional in $|\mathcal{G}|$ versus $k$: it does not assume $|\mathrm{supp}(X)| \le |\mathcal{G}|$, which can fail when $|\mathcal{G}| < k$ and a true tone is omitted from $\mathcal{G}$. In the uniform special case $|\mathcal{G}| \le k$ the window is $W \le 2k$, and a $2k$-sample uniform window is likewise sound; the rank argument behind it does not extend below a uniform window of $2k$ consecutive samples.
\end{theorem}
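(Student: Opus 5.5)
The plan is to prove the theorem by writing the residual as a short exponential sum and applying the standard Vandermonde rank argument, with no use of the stage structure. By linearity of the inverse DFT (\Cref{def:dft}), the residual is
\[
  r[n] = x[n]-\hat{x}[n] = \tfrac{1}{N}\sum_{f \in T} \bigl(X[f]-\hat{X}[f]\bigr)\,\omega^{fn}, \qquad T \coloneqq \mathrm{supp}(X)\cup \hat{S}.
\]
The set $T$ consists of distinct integers in $[0,N)$. Since $|\mathrm{supp}(X)| \le k$ by \Cref{def:signal-model} and $\hat{S}\subseteq\mathcal{G}$, we get $s \coloneqq |T| \le k+|\mathcal{G}| = W$. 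This counting step is exactly where unconditionality in $|\mathcal{G}|$ versus $k$ comes from. We never need $\mathrm{supp}(X)\subseteq\mathcal{G}$. We only need the union bound, which holds whether or not true tones were omitted.

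Next, enumerate $T = \{f_1,\dots,f_s\}$, set $\omega_\ell = \omega^{f_\ell}$, and write the PASS condition $r[n_0+i]=0$ for $i=0,\dots,W-1$ as $\Psi c = 0$. Here $\Psi_{i\ell} = \omega_\ell^{\,n_0+i}$ is the $W\times s$ node matrix and $c_\ell = (X[f_\ell]-\hat{X}[f_\ell])/N$. Cyclic wrapping of indices mod $N$ is harmless because $\omega_\ell^{\,n}$ is $N$-periodic. Factor $\Psi = V\,\mathrm{diag}(\omega_1^{n_0},\dots,\omega_s^{n_0})$, where $V_{i\ell} = \omega_\ell^{\,i}$. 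The diagonal factor is invertible since every $\omega_\ell$ is unimodular.

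The main step is to show that $V$ has full column rank $s$. Take its first $s$ rows, which exist since $W \ge s$. They form the square Vandermonde matrix with nodes $\omega_1,\dots,\omega_s$, whose determinant is $\prod_{\ell<m}(\omega_m-\omega_\ell)$. This is nonzero because the $f_\ell$ are distinct residues mod $N$, and $g\mapsto\omega^g$ is injective on $\mathbb{Z}_N$ (the bijection already used in \Cref{lem:global-label-recovery}). Hence $\Psi c=0$ forces $c=0$, i.e.\ $\hat{X}[f]=X[f]$ for all $f\in T$. Off $T$ both spectra vanish, so $\hat{X}=X$. The degenerate case $T=\emptyset$ is trivial.

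For the $2k$ clause: when $|\mathcal{G}|\le k$ we have $s\le 2k$. A window of $2k$ consecutive samples contains the first $s$ rows, so the same argument applies. This gives soundness of the uniform $2k$ window.

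The only point I expect to need care is the sharpness remark, that the argument does not extend below $2k$. I would treat it as a remark rather than part of the implication. It can be justified by noting that with $|\mathcal{G}|=k$ disjoint from a $k$-tone support, $s=2k$ nodes are possible, and $2k$ exponentials admit a nonzero combination vanishing on any $2k-1$ consecutive samples. That follows because a $(2k-1)\times 2k$ matrix has a nontrivial kernel. So the rank argument genuinely requires $W\ge s$, which in this configuration means a window of $2k$ samples.
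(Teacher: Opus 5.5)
Your proposal is correct and matches the paper's proof essentially step for step. Both arguments bound the residual support by a union of size at most $W = k+|\mathcal{G}|$, factor the $W\times s$ node matrix $\Psi$ as a Vandermonde matrix times an invertible diagonal, and get full column rank from its first $s$ rows; both then obtain the $2k$ sharpness remark from the nontrivial null space of a $(2k-1)\times 2k$ system. Your use of $\mathrm{supp}(X)\cup\hat{S}$ in place of the paper's $\mathrm{supp}(X)\cup\mathcal{G}$ is only a cosmetic tightening, and your sharpness example implicitly needs $N\ge 2k$, which the paper states explicitly and which the model's $N>k^2$ already guarantees.
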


\begin{proof}
The candidate $\hat{X}$ is supported on $\mathcal{G}$ and the true spectrum $X$ on $\mathrm{supp}(X)$ with $|\mathrm{supp}(X)| \le k$, so the residual spectrum $\hat{r} = X - \hat{X}$ is supported on $D = \mathrm{supp}(X) \cup \mathcal{G}$, of size
\[
  s \ \coloneqq\ |D| \ \le\ |\mathrm{supp}(X)| + |\mathcal{G}| \ \le\ k + |\mathcal{G}| \ =\ W .
\]
This residual-support bound holds even when a true tone is absent from $\mathcal{G}$ (in which case $\mathrm{supp}(X)$ and $\mathcal{G}$ are not nested and $s$ can reach $k + |\mathcal{G}| > 2|\mathcal{G}|$). Writing $D = \{f_1,\ldots,f_s\}$, $a_\ell = \hat{r}[f_\ell]$, and $\omega_\ell = e^{+j 2\pi f_\ell/N}$, the inverse DFT gives $r[n] = \tfrac{1}{N}\sum_{\ell=1}^{s} a_\ell\,\omega_\ell^{\,n}$. The nodes $\omega_1,\ldots,\omega_s$ are distinct (distinct on-grid frequencies in $[0,N)$). Write $\Psi$ for the $W \times s$ node matrix $\Psi_{i\ell} = \omega_\ell^{\,n_0+i}$, $i = 0,\ldots,W-1$ (indices modulo $N$; $\omega_\ell^{\,n}$ is $N$-periodic in $n$, so a cyclically wrapped window gives the same matrix); the window size $W$ is a scalar throughout and $\Psi$ is the only matrix in this proof. Its first $s$ rows equal a Vandermonde matrix on the distinct nodes times $\mathrm{diag}(\omega_\ell^{n_0})$ (the invertible diagonal factor scales the columns, so it multiplies on the right), hence $\Psi$ has full column rank $s$ (the determinant $\prod_{a<b}(\omega_b - \omega_a) \neq 0$ for distinct nodes). A PASS verdict is exactly $\Psi a = 0$ on those $W \ge s$ consecutive rows, which forces $a = 0$; therefore $\hat{r} = 0$ and $\hat{X} = X$.

Consecutiveness is essential: it is what makes the rows a Vandermonde system. Two sizing remarks complete the statement. First, sizing the window to $2|\mathcal{G}|$ would be unsound when $|\mathcal{G}| < k$: a nonzero residual with $k + |\mathcal{G}| > 2|\mathcal{G}|$ distinct frequencies can vanish on $2|\mathcal{G}|$ consecutive samples (an underdetermined system), producing a false PASS, whereas the $W = k + |\mathcal{G}|$ window makes the system full rank and forecloses this. Second, in the uniform special case $|\mathcal{G}| \le k$ one has $W \le 2k$, so a $2k$-sample uniform window contains the required one and is sound; and the rank argument does not extend below $2k$ for a uniform window: whenever $N \ge 2k$ (so that $2k$ distinct nodes exist) a window of only $2k-1$ consecutive samples yields a $(2k-1)\times 2k$ system with a nontrivial null space, so vanishing on such a window no longer forces $a = 0$.
\end{proof}

\noindent\emph{Why checking only claimed frequencies is insufficient.} A Goertzel evaluation restricted to $\hat{S}$ cannot certify that omitted support is absent (a claimed-frequency-only check leaves this gap). \Cref{def:verifier} avoids this gap: the $k + |\mathcal{G}|$ consecutive samples constrain the entire residual, including any unclaimed support frequency, through the full-rank Vandermonde system of \Cref{thm:verifier-soundness}.

\begin{proposition}[Verifier Complexity]
\label{prop:verifier-cost}
On a candidate set of size $|\mathcal{G}| = O(k)$, $V$ reads $k + |\mathcal{G}| = O(k)$ time-domain samples of $x$ and performs $O\!\bigl((k+|\mathcal{G}|)^2\bigr) = O(k^2)$ arithmetic operations in the root-index oracle model of \Cref{def:label-decoding-model}(a); the standalone $O(k^2)$ figure is a model-(a) figure and is not stated model-free. In the comparison real-RAM model of \Cref{def:label-decoding-model}(b) the verifier's root powers must be built by repeated squaring, which adds $O(k \log N)$, so $V$ costs $O(k^2 + k \log N)$ there. Consequently the sparse path, when taken and verified, uses $O(k \log N)$ samples and, in model (a), $O(k \log N) + O(k^2)$ arithmetic operations; in model (b) the verified total is $O(k \log^2 N / \log k) + O(k^2)$ (\Cref{thm:cost-comparison-model}), the verifier's extra $O(k \log N)$ being absorbed by the candidate-construction term.
\end{proposition}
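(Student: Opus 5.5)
The plan is to count the verifier's work directly from \Cref{def:verifier}, splitting it into sample reads, root-power formation, and residual evaluation, and then to add the candidate-construction cost of \Cref{thm:sampling-bluestein} and \Cref{thm:cost-comparison-model}.

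\emph{Sample reads.} The window is $W = k + |\mathcal{G}|$. By the budget precondition of \Cref{def:verifier} (or simply the hypothesis $|\mathcal{G}| = O(k)$), we have $W = O(k)$. The verifier reads exactly the $W$ values $x[n_0], \ldots, x[n_0+W-1]$, so it uses $O(k)$ samples.

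\emph{Residual evaluation.} For each window index $n$ we evaluate $\hat{x}[n] = \tfrac1N\sum_{f\in\hat S}\hat X[f]\,\omega^{fn}$, with $|\hat S| \le |\mathcal{G}|$. This is a sum of at most $|\mathcal{G}|$ products, so the $W$ residuals cost $O(W|\mathcal{G}|) \subseteq O((k+|\mathcal{G}|)^2) = O(k^2)$ multiply-adds. The factors $\omega^{fn}$ are not formed independently. For each $f \in \hat S$ we compute the base power $\omega_f^{n_0} = \omega^{f n_0}$ and the step $\omega_f = \omega^{f}$ once. After that, each successive window entry costs one multiplication, $\omega_f^{\,n+1} = \omega_f^{\,n}\omega_f$. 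This is valid because the window is consecutive and cyclic wrapping is harmless by $N$-periodicity. The $W$ comparisons $r[n] = 0$ add $O(k)$.

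\emph{Base powers, and where the two models differ.} In model (a) of \Cref{def:label-decoding-model}, $\omega^{m}$ is a unit-cost oracle at any integer $m$. The two base powers per frequency therefore cost $O(1)$ each, $O(k)$ in total, and the verifier costs $O(k^2)$. In model (b) no such oracle exists. Each $\omega^{f}$ and $\omega^{fn_0}$ (with exponents reduced mod $N$ at $O(1)$ integer cost, or taking $n_0 = 0$) is built from $\omega$ by repeated squaring in $O(\log N)$ multiplications. Over $|\hat S| = O(k)$ frequencies this adds $O(k\log N)$, giving $O(k^2 + k\log N)$.

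\emph{Totals.} In model (a), adding the verifier to \Cref{thm:sampling-bluestein} gives $O(k\log N) + O(k^2)$ arithmetic on $O(k\log N) + O(k) = O(k\log N)$ samples. In model (b), \Cref{thm:cost-comparison-model} gives $O(k\log^2 N/\log k)$ for construction. The extra $O(k\log N)$ is absorbed into this term because $\log N/\log k \ge 1$: from $N \ge k$ we get $k\log N \le k\log^2 N/\log k$. The verified total is therefore $O(k\log^2N/\log k) + O(k^2)$.

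\emph{Main obstacle.} The delicate step is the model-(b) root-power accounting. Forming every $\omega^{fn}$ independently would cost $O(k^2\log N)$. The bound is kept at $O(k\log N)$ only by building one base and one step per frequency and then advancing incrementally through the consecutive window. I would state this explicitly, since it is exactly where consecutiveness of the window is used for cost, mirroring its use for soundness in \Cref{thm:verifier-soundness}.
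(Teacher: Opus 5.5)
Your proposal is correct and follows the paper's proof essentially step for step. It reads $W = k+|\mathcal{G}|$ samples, evaluates $O(k)$ terms per residual over $O(k)$ indices, charges root powers at unit cost in model (a), and in model (b) builds $\omega^{fn_0}$ and the step factor $\omega^{f}$ once per frequency by repeated squaring, then spends one multiplication per term per index, giving $O(k^2 + k\log N)$, which is absorbed into the $O(k\log^2 N/\log k)$ construction term. The only cosmetic differences are that you also use the incremental update in model (a), where the paper charges each $\omega^{fn \bmod N}$ directly at unit cost, and that you spell out the absorption via $N \ge k$, which the paper leaves implicit.
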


\begin{proof}
Reading $x[n_0],\ldots,x[n_0+W-1]$ with $W = k + |\mathcal{G}| = O(k)$ is $O(k)$ sample reads, an additive term over the $O(k\log N)$ samples of the recovery path. Each $\hat{x}[n]$ is a sum of $|\hat{S}| \le |\mathcal{G}| = O(k)$ exponential terms, costing $O(k)$ per index and $O(k^2)$ over the $W = O(k)$ indices; the $W$ zero comparisons cost $O(k)$. The root powers $e^{+j2\pi f n/N} = \omega^{fn \bmod N}$ in these sums are unit-cost under the root-evaluation primitive of \Cref{def:label-decoding-model}(a); in the comparison real-RAM model, $\omega^{f n_0}$ and the per-index update factor $\omega^{f}$ are formed once per frequency by repeated squaring ($O(k \log N)$ total) and each subsequent index costs one multiplication per term, leaving the verifier at $O(k^2 + k \log N)$, absorbed by the stated comparison-model total. Hence $O(k)$ samples and $O(k^2)$ arithmetic for $V$ in model (a), additive to the $O(k\log N)$ arithmetic of \Cref{thm:sparse-path-complexity}.
\end{proof}

\noindent\emph{The verified sparse-path cost: state the $O(k^2)$ term explicitly.} The verified sparse-path arithmetic is $O(k \log N) + O(k^2)$, and the $O(k^2)$ verifier term must not be folded into $O(k \log N)$. The two coincide as $O(k \log N)$ only when $k = O(\log N)$; this is already violated at the boundary illustration $N = 765{,}049$, $k = 20$, $L = 4$ (an admissible regime-boundary instance, \Cref{rem:engineering-regime}), where $k^2 = 400$ exceeds $k \log_2 N \approx 391$. In every case $O(k^2)$ is far below the $O(N \log N)$ dense fallback (the model's standing $L \ge 2$ gives $N > k^2$ at every admissible instance, \Cref{def:transform-family}; for that corner, $400 \ll 765{,}049 \cdot 19.5 \approx 1.5 \times 10^7$), so the verifier never changes the worst-case bound; but the headline sparse-path cost is the explicit sum $O(k \log N) + O(k^2)$, not $O(k \log N)$ alone.

\subsection{Global-Label Engine: Deterministic \texorpdfstring{$O(k)$}{O(k)} Candidate Count}
\label{subsec:global-label-engine}

The per-bin recovery of \Cref{lem:global-label-recovery} is the operative candidate path. Instead of stitching screened residues into cross-stage tuples (a rule retained only as the negative result of \Cref{app:tuple-negative}), each candidate-singleton emits at most one global frequency label (possibly none) by the total rule of \Cref{def:total-recovery}, without assuming the bin is a true singleton; the per-stage label sets are then intersected. This replaces the multiplicative tuple count by a per-stage linear count, establishing the proved $O(k)$ bound (\Cref{thm:global-label-count}). Correctness remains with the verifier of \Cref{thm:verifier-soundness}: the global-label engine constructs candidates, and the engine does not certify which of them are true.

\begin{definition}[Total Label-Emission Rule ($\textsc{Recover}$, Total Per-Bin Form)]
\label{def:total-recovery}
\Cref{lem:global-label-recovery} assumes the bin holds an actual singleton; a screen-passing multi-tone (phantom) bin is not a singleton, so that lemma does not by itself bound a phantom bin's output. We therefore fix a \emph{total} rule $\textsc{Recover}(z_0,z_{31},z_{63})$, the operational form of the total label-emission rule of \Cref{def:exact-screen}, defined on every candidate-singleton, true singleton or not, and emitting at most one label. Let bin $r$ of stage $\ell$ (prime $p_\ell \mid N$, class $r$) pass the occupied equal-magnitude pre-screen (condition~1 of \Cref{def:exact-screen}) with $z_0 \ne 0$ (the pre-screen forces $|z_0| = |z_{31}| = |z_{63}| > 0$ on every passing bin, singleton or not, \Cref{def:exact-screen}; \Cref{lem:global-label-recovery}(1) exhibits the genuine-singleton instance). Form $u_{31} = z_{31}/z_0$, $u_{32} = z_{63}/z_{31}$, and $w = u_{32}\,u_{31}^{-1}$. The bin emits the label $g \in [0,N)$ if and only if
\begin{enumerate}
  \item $w$ is exactly an $N$th root of unity, i.e.\ $w = e^{+j 2\pi g/N}$ for an integer $g \in \{0,\dots,N-1\}$; the label is then that unique integer $g$, equivalently the discrete index $g = \log_\omega w$ to base $\omega = e^{+j 2\pi/N}$ (the unique $g \in \{0,\dots,N-1\}$ with $w = \omega^{g}$), whose existence and uniqueness are guaranteed precisely by the $N$th-root-of-unity condition just stated; this is an exact integer index, defined without any transcendental $\arg(\cdot)$, and
  \item that $g$ is consistent with the bin's class, $g \equiv r \pmod{p_\ell}$;
\end{enumerate}
otherwise the bin is declared \emph{screen-inconsistent} (it passes the pre-screen but fails the total label-emission rule) and emits no label (its frequency mass is not screened at stage $\ell$; any support tone thereby dropped is recovered by the dense fallback, triggered per the routing rule of \Cref{thm:hybrid-correctness}, namely a no-consistent-label stage, a budget overflow, or a final-verifier FAIL; cf.\ \Cref{rem:alg1-semantics}). Thus each bin emits at most one in-range, on-grid label, totally and deterministically, without assuming the bin is a true singleton. The label read is executed, and costed, by the procedure of \Cref{def:label-decoding-model}: a rounded-$\arg$ guess followed by exact root-membership and exact class verification, with no precomputed root table; both emit conditions above are decided by exact comparisons, so the rule's semantics are unchanged by the costed procedure. For a genuine singleton both conditions hold automatically and $\textsc{Recover}$ returns the lemma's $g$ (\Cref{lem:global-label-recovery}(3), completeness direction unchanged); for a screen-passing phantom whose triple matches a single exponential at some $h$ (\Cref{prop:two-tone-phantom,prop:higher-order-phantom}), the rule emits that single $h$ if it is on-grid and class-consistent, and otherwise emits nothing, so a phantom contributes at most one label and never more.
\end{definition}

\noindent\emph{Informally:} every screened bin casts at most one vote for a full frequency, and stages can only veto votes, never multiply them, so the candidate list cannot outgrow one stage's bin count.

\begin{theorem}[Deterministic $O(k)$ Global-Label Count]
\label{thm:global-label-count}
Let $\textsc{Recover}$ be the total label-emission map of \Cref{def:total-recovery} (defined on every candidate-singleton, true singleton or not) and set
\[
  \begin{aligned}
    \mathcal{G}_\ell \coloneqq \{\,g ={} &\textsc{Recover}(z_0,z_{31},z_{63}) : \\[-2pt]
      &\text{bin }r\text{ of stage }\ell\text{ passes the}\\[-2pt]
      &\text{pre-screen \Cref{def:exact-screen}(1)}\,\},
  \end{aligned}
\]
and let $b_\ell$ denote the number of bins of stage $\ell$ passing that pre-screen. Then, in the noiseless on-grid model of \Cref{def:signal-model} and with no association assumption:
\[
  |\mathcal{G}_\ell| \,\le\, b_\ell \,\le\, p_\ell \,\le\, ck = O(k),
\]
deterministically for every input spectrum. Consequently the accumulated candidate set $\mathcal{G} \coloneqq \bigcap_{\ell=1}^{L} \mathcal{G}_\ell$ satisfies
\[
  |\mathcal{G}| \,\le\, \min_{\ell}|\mathcal{G}_\ell| \,\le\, |\mathcal{G}_1| \,\le\, p_1 \,\le\, ck = O(k),
\]
deterministically. Here $c$ is the constant of \Cref{def:transform-family}, fixed independently of $k$ and $N$ (an absolute constant, $c \le 4$ sufficing across \Cref{tab:prime-availability}), so $ck = O(k)$ with an absolute implied constant; the same convention applies to every $O(k)$ claim in this paper that rests on $p_\ell \le ck$. In particular the per-stage and accumulated candidate counts are bounded by a constant multiple of $k$ without any association assumption: the count is proved, not assumed.
\end{theorem}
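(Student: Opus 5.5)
The count is obtained by a chain of three inequalities per stage, followed by the monotonicity of intersection.

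\emph{First link, $|\mathcal{G}_\ell| \le b_\ell$.} $\mathcal{G}_\ell$ is by definition the image of the set of pre-screen-passing bins under $\textsc{Recover}$. By \Cref{def:total-recovery}, $\textsc{Recover}$ is total and emits at most one label per bin. So $\mathcal{G}_\ell$ is the image of a set of size $b_\ell$ under a partial function with at most one value per input. That gives $|\mathcal{G}_\ell| \le b_\ell$.

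This step needs the total rule rather than \Cref{lem:global-label-recovery}, because a phantom bin is not a singleton and the lemma says nothing about it. The totality does the work. Uniqueness of the emitted label holds because $g \mapsto \omega^g$ is a bijection from $\mathbb{Z}_N$ onto the $N$th roots of unity, so $w$ has at most one index in $[0,N)$. Well-definedness of $w$ is supplied by \Cref{def:exact-screen}(1): every passing bin has $|z_0|=|z_{31}|=|z_{63}|>0$. Distinct bins may emit the same label; this only helps the inequality.

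\emph{Second link, $b_\ell \le p_\ell \le ck$.} The $p_\ell$-point decimated view has exactly $p_\ell$ bins, indexed $r \in \{0,\dots,p_\ell-1\}$, so $b_\ell \le p_\ell$. \Cref{def:transform-family} gives $p_\ell \in [k,ck]$ with $c$ an absolute constant, so $p_\ell \le ck = O(k)$.

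I would also remark that in the exact model \Cref{lem:decimation-preserves-sparsity} gives the sharper $b_\ell \le k$, since passing bins are occupied. The theorem deliberately avoids needing this and uses only the bin count, so the bound also covers surrogate implementations.

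\emph{Accumulated set.} For any family of sets, $\bigcap_\ell \mathcal{G}_\ell \subseteq \mathcal{G}_{\ell'}$ for each $\ell'$. Hence $|\mathcal{G}| \le \min_\ell |\mathcal{G}_\ell| \le |\mathcal{G}_1| \le p_1 \le ck$.

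None of these steps refers to the support structure, survival, or association between bins and tones. This is the point behind ``no association assumption'' and ``for every input.''

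\emph{Main obstacle.} The only nontrivial point is justifying the ``at most one label per bin'' claim for non-singleton bins. I would handle it by citing the totality and determinism of $\textsc{Recover}$ directly, which is its definitional content, together with root-index uniqueness. I would not route it through any singleton identity.
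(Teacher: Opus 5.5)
Your proposal is correct and follows the paper's proof essentially step for step. Totality of $\textsc{Recover}$, with at most one label per pre-screen-passing bin (phantom or not), gives $|\mathcal{G}_\ell| \le b_\ell \le p_\ell \le ck$, and containment of $\bigcap_\ell \mathcal{G}_\ell$ in each stage set gives the accumulated bound; your side remark that the exact pre-screen actually yields $b_\ell \le k$ via \Cref{lem:decimation-preserves-sparsity} matches the note following \Cref{alg:multistage_oklogn}, and you are right to keep it out of the main argument.
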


\begin{proof}
Stage $\ell$ has exactly $p_\ell$ decimated bins $r \in \{0,\dots,p_\ell-1\}$. The count uses the total label-emission rule of \Cref{def:total-recovery}, which is defined on every candidate-singleton (every bin passing the pre-screen) and emits at most one in-range on-grid label, without assuming any bin is a true singleton: a candidate-singleton either emits one label (when its $w$ is exactly an $N$th root of unity yielding an integer $g \equiv r \pmod{p_\ell}$) or is screen-inconsistent and emits none. Hence the map $\{\text{passing bins}\} \to [0,N) \cup \{\varnothing\}$ is well-defined and total, and its image $\mathcal{G}_\ell$ (dropping the no-label outcomes) has cardinality at most the number of passing bins, which is at most the total number of bins $p_\ell$. Crucially this holds whether the passing bins are true singletons or screen-passing phantoms (\Cref{prop:two-tone-phantom,prop:higher-order-phantom}), since a phantom bin also emits at most one label (\Cref{def:total-recovery}); the count therefore does not rely on singleton truth. The transform family (\Cref{def:transform-family}) gives $p_\ell \le ck$, so $|\mathcal{G}_\ell| \le ck = O(k)$. The intersection of sets is contained in each, so $|\mathcal{G}| = |\bigcap_\ell \mathcal{G}_\ell| \le |\mathcal{G}_1| \le p_1 \le ck$. No step uses randomness, amplitude separation, a cross-stage association rule, or an assumption that any bin is a genuine singleton; the bound holds for every input. The count is linear per stage because the stages are combined by intersection and never by a Cartesian product, the architectural point recorded in \Cref{rem:no-product-blowup}.
\end{proof}

\noindent\emph{Informally:} a true tone that never shares a bin with another true tone is read correctly at every stage and survives every intersection.

\begin{theorem}[Support Containment under All-Stage Survival]
\label{thm:global-label-completeness}
Let
\[
  \begin{aligned}
    \mathcal{S}^{\mathrm{surv}} \coloneqq \{f \in \mathcal{S} : \ {}&f\text{ is the unique support frequency} \\[-2pt]
      &\text{in class }f \bmod p_\ell\text{ for every stage }\ell\}
  \end{aligned}
\]
be the all-stage-surviving support (true tones that are genuine singletons at every stage). Then every $f \in \mathcal{S}^{\mathrm{surv}}$ satisfies $f \in \mathcal{G}_\ell$ for all $\ell$, hence $f \in \mathcal{G} = \bigcap_\ell \mathcal{G}_\ell$. In particular, under \Cref{cond:all-stage-survival} the whole support is contained in the candidate set, $\mathcal{S} \subseteq \mathcal{G}$. A tone that collides at some stage $\ell_0$ (sharing class $f \bmod p_{\ell_0}$ with another support frequency) is not guaranteed in $\mathcal{G}$ and is recovered by the dense-FFT fallback (\Cref{thm:hybrid-correctness}), as intended by \Cref{rem:alg1-semantics}. The theorem asserts containment only; the reverse inclusion is never invoked, and nothing downstream uses equality.
\end{theorem}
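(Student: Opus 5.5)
The plan is to prove the containment pointwise, one frequency and one stage at a time, and then pass to the intersection. Fix $f \in \mathcal{S}^{\mathrm{surv}}$ and an arbitrary stage $\ell$, and set $r = f \bmod p_\ell$. By the definition of $\mathcal{S}^{\mathrm{surv}}$, $f$ is the only support frequency in class $r$, so bin $r$ of stage $\ell$ holds a genuine singleton. This is exactly the hypothesis of \Cref{thm:singleton-observation-identity} and \Cref{lem:global-label-recovery}. Divisor compatibility $p_\ell \mid N$ holds by \Cref{def:transform-family}, so the aliasing identity of \Cref{subsec:sampling-operator} applies.

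The key step is to show that bin $r$ enters the stage-$\ell$ pipeline and emits $f$. I will invoke \Cref{thm:phase-consistency-certificate}(1), which rests on \Cref{thm:singleton-observation-identity}. First, $|z_0| = |z_{31}| = |z_{63}| = \tfrac{p_\ell}{N}|X[f]| > 0$, with positivity because $f \in \mathcal{S}$ means $X[f] \ne 0$; hence the bin passes the occupied equal-magnitude pre-screen (\Cref{def:exact-screen}(1)) and is a candidate-singleton. Second, $w = u_{32}u_{31}^{-1} = \omega^{f}$ exactly, so root membership holds with the unique index $g = f$. Third, class consistency $g = f \equiv r \pmod{p_\ell}$ holds by the choice of $r$.

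It remains to match this with the operational rule $\textsc{Recover}$ of \Cref{def:total-recovery}. Its two emit conditions are precisely conditions~2 and~3 of \Cref{def:exact-screen}, so it emits $f$, and therefore $f \in \mathcal{G}_\ell$ as defined in \Cref{thm:global-label-count}. The costed decoding procedure of \Cref{def:label-decoding-model} does not alter this: the rounded-$\arg$ guess equals $f$ because $w = \omega^{f}$ exactly, and the exact root and class checks pass, so the procedure's semantics coincide with the rule.

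Since $\ell$ was arbitrary, $f \in \bigcap_\ell \mathcal{G}_\ell = \mathcal{G}$. Under \Cref{cond:all-stage-survival} we have $\mathcal{S}^{\mathrm{surv}} = \mathcal{S}$, which gives $\mathcal{S} \subseteq \mathcal{G}$. The remaining clauses need only short remarks. For a colliding tone, \Cref{prop:two-tone-phantom} and generic rejection show that its bin may fail the rule, so membership is not guaranteed, and recovery then defers to \Cref{thm:hybrid-correctness}. No reverse inclusion is claimed.

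The main obstacle is making sure the exact-model Algorithm semantics cannot discard a true label. There are two concerns. One is the budget guard of Step~3, which is inactive in the exact model by \Cref{lem:decimation-preserves-sparsity}. The other is the nonempty-stage guard, which is a routing matter for \Cref{thm:hybrid-correctness} rather than a property of the set-theoretic objects $\mathcal{G}_\ell$. I would state explicitly that the theorem concerns the sets $\mathcal{G}_\ell$ as defined, so no routing event can falsify the containment.
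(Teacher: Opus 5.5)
Your proposal is correct and follows essentially the same route as the paper's proof. Like the paper, you fix $f$ and $\ell$, use the singleton hypothesis with \Cref{thm:singleton-observation-identity} and \Cref{thm:phase-consistency-certificate}(1) to show the bin passes the pre-screen and that $\textsc{Recover}$ emits exactly $g = f$, then intersect over $\ell$ and use $\mathcal{S}^{\mathrm{surv}} = \mathcal{S}$ under \Cref{cond:all-stage-survival}. Your extra remarks (on the costed decoding procedure, and on the budget and empty-stage guards being routing events rather than properties of the sets $\mathcal{G}_\ell$) are consistent with the paper and harmless.
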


\begin{proof}
Fix $f \in \mathcal{S}^{\mathrm{surv}}$ and a stage $\ell$. By definition $f$ is the unique support frequency in class $f \bmod p_\ell$, so bin $r = f \bmod p_\ell$ holds the single tone $f$ and passes the occupied equal-magnitude pre-screen (equal magnitudes at a strictly positive common value; its increment ratios are exact singleton rotations, \Cref{thm:singleton-observation-identity}, as it is a single exponential). By \Cref{thm:phase-consistency-certificate}(1) the bin then passes every condition of the exact screen and the total rule of \Cref{def:total-recovery} emits exactly the label $g = f$ (the phase read itself is \Cref{lem:global-label-recovery}), which is what membership in $\mathcal{G}_\ell$ requires. Hence $f \in \mathcal{G}_\ell$. As $\ell$ was arbitrary, $f \in \mathcal{G}_\ell$ for all $\ell$, so $f \in \bigcap_\ell \mathcal{G}_\ell = \mathcal{G}$; under \Cref{cond:all-stage-survival} one has $\mathcal{S}^{\mathrm{surv}} = \mathcal{S}$, giving $\mathcal{S} \subseteq \mathcal{G}$. The collision routing is the content of \Cref{rem:alg1-semantics}: a tone that is not a singleton at some stage is, by design, handled by the dense path, not the sparse global-label path.

\emph{Two scope notes, recorded here so that no other proof re-derives them.} First, containment rests only on the survival premise, never on a counting estimate: the CRT counting bound (\Cref{rem:crt-occupancy-estimate}, \eqref{eq:crt-counting-bound}) controls colliders modulo the accumulated product $\prod_{i \le t} p_i$, not modulo an individual stage prime, so two frequencies can still collide modulo $p_L$ at the final stage even though $\prod_{i \le L} p_i = N$ forces the accumulated count to zero. No per-stage collision-freeness follows from that estimate, and none is used here. Second, under \Cref{cond:all-stage-survival} no phantom can arise at all: a screen-passing phantom requires an $m \ge 2$ collision of support tones in one class (\Cref{prop:two-tone-phantom,prop:higher-order-phantom}), which survival excludes, and an all-zero bin is not a candidate-singleton (\Cref{def:exact-screen}). Surplus labels and measure-zero screen-passing phantoms are therefore a phenomenon of inputs outside the survival condition.
\end{proof}

\begin{remark}[Phantom labels leave the count intact; correctness stays with the verifier]
\label{rem:phantom-labels}
A screen-passing phantom bin (a multi-tone collision whose measured triple equals that of a single exponential at some $h \in [0,N)$, e.g.\ the $m \ge 2$ witnesses of \Cref{prop:two-tone-phantom,prop:higher-order-phantom}) emits at most one label $h$ by the total rule of \Cref{def:total-recovery}, which does not assume the bin is a singleton. Two consequences follow immediately from results already proved, and neither needs a separate argument. (Count unaffected.) $h$ occupies exactly one of the $\le p_\ell$ label slots of $\mathcal{G}_\ell$, so the bound of \Cref{thm:global-label-count}, whose proof is indifferent to whether a label is true or phantom, is preserved; the phantom may even survive the intersection, and still $|\mathcal{G}| \le p_1 = O(k)$. (Correctness preserved.) If a phantom label survives the global-label intersection and is used in the reconstructed candidate, the reconstruction differs from $X$, so by the contrapositive of \Cref{thm:verifier-soundness} the residual verifier rejects the candidate and the input is routed to the exact dense fallback (\Cref{thm:hybrid-correctness}).

Hence the engine supplies only the $O(k)$ count; which of the labels are true tones is decided by the sound verifier, not by the engine. In particular $\mathcal{G}$ is not phantom-free, and the phantom phenomenon is orthogonal to the count: a phantom is one label, not many, so it cannot inflate $|\mathcal{G}_\ell|$ beyond $p_\ell$. The phantom is a soundness object, and soundness is delivered by \Cref{thm:verifier-soundness}. The framework is therefore a deterministic $O(k)$-candidate guarantee with verifier-gated exactness.
\end{remark}

\begin{remark}[Algorithm 1 candidate-set semantics]
\label{rem:alg1-semantics}
\Cref{alg:multistage_oklogn} retains a global label $g$ in the stage set $\mathcal{G}_\ell$ only when bin $g \bmod p_\ell$ is screened as a candidate-singleton and the total per-bin rule of \Cref{def:total-recovery} emits $g$ at stage $\ell$; the per-stage sets are intersected, $\mathcal{G} = \bigcap_\ell \mathcal{G}_\ell$. A true frequency that collides in some stage $\ell$ is not guaranteed to emit its true label there: for generic coefficients the collided bin fails the screen, but on the measure-zero screen-passing slice the bin passes and emits a phantom label $\ne f$ (\Cref{prop:two-tone-phantom}). In either branch $f$ need not enter $\mathcal{G}_\ell$, so exclusion of a colliding tone from $\mathcal{G}$ is not certified stage-locally; correctness is recovered because an incomplete or incorrect reconstruction fails the final verifier of \Cref{thm:verifier-soundness} (or a sparse-attempt abort fires first), routing the input to the dense FFT fallback. The screen is one-sided, so the complementary case must be stated explicitly: a measure-zero screen-passing phantom (\Cref{prop:two-tone-phantom}) is admitted into $\mathcal{G}_\ell$ and is not excluded by the engine. Such a stage-local phantom label does not by itself invoke the verifier; if it survives the global-label intersection and is used in the reconstructed candidate, the residual verifier rejects the candidate (\Cref{rem:phantom-labels}). \Cref{thm:hybrid-correctness} guarantees exact recovery in all cases via verifier-gated acceptance or the dense fallback; the superseded accumulated-cascade analysis of \Cref{app:tuple-negative} is a contrasting negative result, not the operative path.
\end{remark}

\subsection{Conditional Sparse-Path Candidate-Construction Cost}

\noindent\emph{Informally:} $L = O(\log_k N)$ stages of $\Theta(k)$-sized FFTs cost $O(k \log N)$ in total and always yield at most $O(k)$ candidates; whether those candidates include the whole support is the separate containment question already settled by \Cref{thm:global-label-completeness}.

\begin{theorem}[Sparse-Path Candidate-Construction Cost in the Root-Index Oracle Model]\label{thm:sparse-path-complexity}
Under the engineered exact-product transform family of \Cref{def:transform-family} ($N = Q_L = \prod_{\ell=1}^{L} p_\ell$) and in the exact label-decoding (root-index oracle) model of \Cref{def:label-decoding-model}(a), \Cref{alg:multistage_oklogn} either constructs the global-label candidate set $\mathcal{G}$, of deterministically bounded size $|\mathcal{G}| = O(k)$ (\Cref{thm:global-label-count}), within $O(k \log N)$ time-domain samples and $O(k \log N)$ arithmetic operations, or triggers the dense fallback at Step~3 after spending no more than that sparse-attempt cost. Support containment under \Cref{cond:all-stage-survival} remains the separate \Cref{thm:global-label-completeness}; neither arm asserts it.
\end{theorem}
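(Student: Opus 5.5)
The proof is an accounting argument over the stages of \Cref{alg:multistage_oklogn}, following the template of \Cref{thm:sampling-bluestein}. We charge each executed stage an input-independent cost, sum the charges, and treat the abort arm as a partial sum. First, from \Cref{def:transform-family} we use $p_\ell \le ck$ with $c$ absolute, and the exact-product relation $k^L \le N$. Together these give $L \le \log_k N = \ln N/\ln k$, which is the only stage-count fact required.

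\emph{Per-stage charge.} At stage $\ell$ the three sampling operators $S_{p_\ell}^{(\sigma)}$, $\sigma\in\{0,31,63\}$, read $3p_\ell \le 3ck$ samples (\Cref{subsec:sampling-operator}). The three size-$p_\ell$ DFTs cost $O(p_\ell\log p_\ell)=O(k\log k)$; since $p_\ell\le ck$, $\log p_\ell = O(\log k)$. Next, for each of the $p_\ell$ bins we run the exact pre-screen of \Cref{def:exact-screen}(1), which is three magnitudes and comparisons, $O(1)$. On passing bins we form $u_{31},u_{32},w$ in $O(1)$ and run the label read of \Cref{def:total-recovery}. In model (a) of \Cref{def:label-decoding-model} that read is a rounded-$\arg$ guess $g_0$, one oracle evaluation $\omega^{g_0}$, an exact equality test and a class test, all $O(1)$. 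The bin loop is therefore $O(p_\ell)=O(k)$.

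The intersection $\mathcal{G}\leftarrow\mathcal{G}\cap\mathcal{G}_\ell$ must also be charged at $O(k\log k)$. My first choice is to keep $\mathcal{G}$ as a sorted list of size $\le ck$ (\Cref{thm:global-label-count}), sort $\mathcal{G}_\ell$ by comparisons, and merge, for $O(k\log k)$ per stage. To avoid charging comparisons of labels in $[0,N)$ beyond unit cost, an alternative is to index labels by class. Every $g\in\mathcal{G}_\ell$ satisfies $g\equiv r\pmod{p_\ell}$, and each bin emits at most one label, so $\mathcal{G}_\ell$ is naturally an array indexed by $r$. Each element of $\mathcal{G}$ is then tested by computing $g\bmod p_\ell$ and comparing with the single entry stored at that bin, for $O(|\mathcal{G}|)=O(k)$ per stage. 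I would use this version, since it needs only $O(1)$ integer modular reduction, which the real-RAM model already grants. The budget checks $|\mathcal{G}_\ell|>ck$ and the emptiness check are $O(k)$. Stage $\ell$ thus costs $O(k)$ samples and $O(k\log k)$ arithmetic.

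\emph{Summation and the two arms.} Summing over at most $L$ stages gives $3\sum_\ell p_\ell \le 3ckL = O(k\log N/\log k)=O(k\log N)$ samples and $L\cdot O(k\log k)=O(k\log N)$ arithmetic. The final budget check $|\mathcal{G}|>k$ after stage $L$ adds only $O(k)$. If Step~3 dispatches the fallback at some stage $\ell_0\le L$, only stages $1,\dots,\ell_0$ have run. Because the per-stage charges are independent of the input, the spend is a prefix sum of the same bound, which settles the abort arm. On the completion arm, $|\mathcal{G}|=O(k)$ is quoted directly from \Cref{thm:global-label-count}, which holds for every input. No containment is claimed, so \Cref{thm:global-label-completeness} is not invoked.

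\emph{Main obstacle.} The delicate step is ensuring that nothing in the bin loop or the intersection silently costs $\Theta(\log N)$ per label. Otherwise the total becomes $O(k\log^2N/\log k)$, which is exactly the model-(b) figure of \Cref{thm:cost-comparison-model}. The fix is to invoke the model-(a) oracles explicitly for the root evaluation and rounding. I would also argue that the class-indexed intersection avoids any binary search over $[0,N)$, and that no $\Theta(N)$ table is materialized.
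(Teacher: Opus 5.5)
Your proposal is correct and is essentially the paper's own proof: an input-independent per-stage charge of $O(k)$ samples and $O(k\log k)$ arithmetic (three prime-size DFTs, an $O(1)$ pre-screen and model-(a) label read per bin, an $O(k)$ intersection) is summed over $L \le \log_k N$ stages, the abort arm is a prefix sum of the same charges, and $|\mathcal{G}| = O(k)$ is quoted from \Cref{thm:global-label-count}. The differences are cosmetic: the paper also charges the $O(kL)$ zero-shift coefficient reads of Step~4, which are absorbed and are not needed to build $\mathcal{G}$, and your class-indexed intersection is a leaner variant of the paper's sorted-merge intersection, which it charges at $O(L)$ per candidate.
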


\begin{proof}
This is a cost accounting only; the candidate-count bound it quotes is \Cref{thm:global-label-count} and is not re-derived here, and containment of the support is the separate \Cref{thm:global-label-completeness}. The per-stage sample and transform costs are those already established for the decimated sampling operator in \Cref{thm:sampling-bluestein}; they are restated below only in the per-step form the multi-stage accounting needs.

Stage Count: the exact-product model $N = \prod_{\ell=1}^{L} p_\ell$ gives $L = O(\log N / \log k) = O(\log_k N)$.

Transform Size Per Stage: Each stage uses prime $p_\ell \in [k, ck]$ for an explicitly selected $c \ge 2$, so $m_\ell = \Theta(p_\ell) = \Theta(k)$. (The specific mixed-radix instantiation $m_\ell = p_\ell \times 64$ of the engineered batching convention is an implementation batching length outside the formal theorem; see \Cref{rem:64k-implementation}.)

Step-by-Step Complexity:
\begin{enumerate}
\item Per-Stage FFT: Each stage computes an FFT of size $\Theta(k)$, performed $S=3$ times, at cost $S \cdot O(m_\ell \log m_\ell) = O(k \log k)$ per stage.

\item All Stages (Sequential Work): Total sequential work across $L$ stages is
   \begin{align*}
   L \times O(k \log k) &= \frac{\log N}{\log k} \times O(k \log k) = O(k \log N).
   \end{align*}
   Parallel Wall-Clock Time: All $L$ stages can be computed in parallel, so with sufficient parallelism ($L$ cores), wall-clock time is $O(k \log k)$ per stage.

\item Singleton Detection: Per stage, test all $p_\ell \le ck = O(k)$ bins across $S=3$ shifts. Per bin: $O(S) = O(1)$ operations, giving $L \times O(k) = O(k \log N / \log k)$ in total.

\item Global-label assembly: Each candidate-singleton emits at most one full frequency label directly, by the total rule of \Cref{def:total-recovery} (for a genuine singleton it is the true label, \Cref{lem:global-label-recovery}; the co-prime increments pin $g$ modulo $N$, so no residue-tuple CRT reassembly is performed), decoded at $O(1)$ per bin in model (a) of \Cref{def:label-decoding-model} (rounded-$\arg$ guess, exact root-membership comparison, exact class-residue check; no root table), i.e.\ $O(k)$ per stage over the at most $\min(k, p_\ell)$ candidate-singletons, every one an occupied bin (\Cref{lem:decimation-preserves-sparsity}). The per-stage label sets $\mathcal{G}_\ell$, each of size $O(k)$ by \Cref{thm:global-label-count}, are combined by intersection $\mathcal{G} = \bigcap_\ell \mathcal{G}_\ell$ over the $L$ stages: $O(L) = O(\log N / \log k)$ per candidate, hence $O(k \log N / \log k)$ for $O(k)$ candidates.

\item Coefficient Estimation: For each candidate label, read the closed-form zero-shift estimate $\hat{X}[g] = (N/p_\ell)\, X_{p_\ell}^{(0)}[g \bmod p_\ell]$ (\Cref{subsec:sampling-operator}, \Cref{lem:global-label-recovery}(4)) from one designated screened stage; a single estimate per candidate is all the cost bound requires, giving $O(kL)$ in total. For a genuine singleton the $S = 3$ shifted observations agree exactly after conjugate de-rotation $e^{-j2\pi f\sigma_v/N}$ (\Cref{thm:singleton-observation-identity}) and the per-stage estimates agree across the $L$ stages after the $N/p_\ell$ rescale, so for such bins the estimates may be averaged for numerical robustness; a screen-passing phantom carries no such cross-stage agreement guarantee (the screen is one-sided, \Cref{rem:one-sided-normative}), which is a soundness matter, settled downstream by the verifier rather than by this estimate, and not a cost matter.
\end{enumerate}

Total (completed arm): $O(k \log N) + O(k \log N / \log k) + O(k \log N) = O(k \log N)$ arithmetic operations, on $3\sum_\ell p_\ell \le 3Lck = O(k \log N)$ sample reads. The size of the constructed set is $|\mathcal{G}| = O(k)$ by \Cref{thm:global-label-count}, which holds for every input; the accounting above is valid whether or not the labels are true.

Fallback arm (the spend bound). Every charge above is an input-independent per-stage bound: the three shift-FFTs, the $p_\ell$-bin pre-screen, the at most $\min(k, p_\ell)$ label decodes (a nonzero-bin count bounded by \Cref{lem:decimation-preserves-sparsity} for every input), and the $O(k)$-operand intersection are each incurred at most once per executed stage, and the budget-guard comparisons cost $O(1)$ per stage given the maintained set sizes. If Step~3 dispatches the dense fallback at some stage $\ell_0 \le L$ (a stage emitting no class-consistent global label, or a budget guard firing), the sparse attempt has executed only stages $1, \dots, \ell_0$ and never reaches Steps~4 and~5, so its spend is a partial sum of the same per-stage charges, bounded by the full-chain totals: at most $O(k \log N)$ samples and $O(k \log N)$ arithmetic before the dense path begins. The dense fallback's own $O(N \log N)$ cost is charged in the hybrid correctness accounting, not here.
\end{proof}

\begin{theorem}[Sparse-Path Candidate-Construction Cost in the Comparison Real-RAM Model]\label{thm:cost-comparison-model}
Under the same transform family (\Cref{def:transform-family}) and in the comparison real-RAM model of \Cref{def:label-decoding-model}(b), with no unit-cost rounding or root-evaluation oracle, \Cref{alg:multistage_oklogn} either constructs the same candidate set $\mathcal{G}$, of deterministically bounded size $|\mathcal{G}| = O(k)$ (\Cref{thm:global-label-count}), within $O(k \log N)$ time-domain samples and
\[
O\!\left(\frac{k \log^2 N}{\log k}\right)
\]
arithmetic operations, or triggers the dense fallback at Step~3 after spending no more than that sparse-attempt cost. This bound assumes nothing about root-index decoding beyond exact real-RAM arithmetic, exact comparisons, and exact $\arg(\cdot)$.
\end{theorem}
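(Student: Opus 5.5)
The plan is to reuse, term for term, the stage-by-stage accounting in the proof of \Cref{thm:sparse-path-complexity}, changing only the charges that depend on the decoding model. The sample count does not depend on the model. The sampling operator reads $3\sum_\ell p_\ell \le 3Lck = O(k\log N/\log k) = O(k\log N)$ entries, exactly as in \Cref{thm:sampling-bluestein}. The oracle-free items also carry over unchanged in model~(b), since they use only exact complex arithmetic and comparisons:
\begin{itemize}
\item the three shift-FFTs of size $p_\ell$ cost $O(k\log k)$ per stage;
\item the occupied equal-magnitude pre-screen is $O(1)$ per bin, using exact modulus computation and equality tests;
\item the intersection over $O(k)$ labels across $L$ stages costs $O(k\log N/\log k)$, since labels are integers in $[0,N)$ compared exactly;
\item coefficient reads cost $O(kL)$.
\end{itemize}
Summed over $L=O(\log_k N)$ stages, these items cost $O(k\log N)$ arithmetic.

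The only term that changes is the per-bin label read of \Cref{def:total-recovery}. By \Cref{def:label-decoding-model}(b) it costs $O(\log N)$ instead of $O(1)$:
\begin{itemize}
\item form $w$ with $O(1)$ complex operations;
\item compute exact $\arg(w)$;
\item locate $g_0=\operatorname{round}(N\arg(w)/2\pi)\bmod N$ by binary search over the $N$ admissible integers, which takes $O(\log N)$ exact order comparisons;
\item compute $\omega^{g_0}$ from $\omega$ by repeated squaring in $O(\log N)$ exact multiplications;
\item decide root membership by the exact test $\omega^{g_0}=w$, and class consistency by $g_0\bmod p_\ell=r$, computed in $O(1)$ (or $O(\log N)$ by comparisons, which is still within budget).
\end{itemize}
The number of decoded bins per stage is at most $\min(k,p_\ell)\le k$, because every candidate-singleton is occupied (\Cref{lem:decimation-preserves-sparsity}). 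Label decoding therefore costs $O(k\log N)$ per stage and $L\cdot O(k\log N)=O(k\log^2 N/\log k)$ in total. This dominates the $O(k\log N)$ of the other items, since $\log N/\log k\ge L\ge 2$. So the completed arm costs $O(k\log^2N/\log k)$ arithmetic.

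The soundness of the emitted labels, and the bound $|\mathcal{G}|=O(k)$, need no new argument. The costed procedure decides both emit conditions by exact comparisons, so it produces exactly the same $\mathcal{G}_\ell$ as the semantic rule, and \Cref{thm:global-label-count} applies verbatim. For the fallback arm, every charge above is an input-independent per-executed-stage bound. An abort at stage $\ell_0\le L$ therefore spends only a partial sum of these charges, which is bounded by the full-chain total.

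The main obstacle is justifying the $O(\log N)$ rounding step with no rounding primitive. The value $t=N\arg(w)/2\pi$ is an exact real, and finding the nearest integer in $[0,N)$ needs comparisons of $t$ against midpoints $m+\tfrac12$. Bisection has to generate these thresholds by exact arithmetic: halve an integer interval, then compare $t$ against the real $m+1/2$, where $m$ is built by $O(1)$ additions from the previous endpoints. This takes $\lceil\log_2 N\rceil+O(1)$ comparisons. I also need to handle the wraparound at $\arg(w)=\pm\pi$ by a single sign comparison before reducing mod $N$. If that step is granted as stated in \Cref{def:label-decoding-model}(b), everything else is bookkeeping.
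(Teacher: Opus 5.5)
Your proposal is correct and follows the paper's proof essentially step for step. Like the paper, you quote the oracle-free charges (shift-FFTs, pre-screen, intersection, coefficient reads, and the $3\sum_\ell p_\ell$ sample count) from \Cref{thm:sparse-path-complexity}. You then charge each decoded bin $O(\log N)$ for the binary-search rounding and the repeated-squaring root power, cap the decoded bins at $\min(k,p_\ell)\le k$ per stage via \Cref{lem:decimation-preserves-sparsity}, multiply by $L\le\log N/\log k$, and bound the fallback arm by a partial sum of per-stage charges.

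Your floor-free justification of the nearest-integer search goes beyond the paper, which simply takes that $O(\log N)$ charge from \Cref{def:label-decoding-model}(b). If you keep it, run the bisection on an interval of power-of-two length, or fix the bits of $g_0$ greedily against precomputed powers of two. Otherwise the integer midpoint of an odd-length interval would itself require a floor.
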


\begin{proof}
The stage count obeys $L \le \log N / \log k$, from $k^L \le N$ (\Cref{def:transform-family}). Every component of the accounting in \Cref{thm:sparse-path-complexity} other than the per-bin label decode uses neither removed primitive, so its cost is unchanged and is quoted rather than re-derived: per-stage FFT work $O(k \log k)$ over $L$ stages totals $O(k \log N)$; screening all $p_\ell \le ck$ bins costs $O(k)$ per stage, totaling $O(k \log N / \log k)$; the per-candidate intersection costs $O(L)$ over the $O(k)$ candidates, totaling $O(k \log N / \log k)$; coefficient reads cost $O(kL)$. Sample reads involve no decoding arithmetic, so the sample count is that of \Cref{thm:sparse-path-complexity}: $3 \sum_{\ell} p_\ell \le 3 L c k = O(k \log N)$.

For the decode, \Cref{def:label-decoding-model}(b) charges each candidate-singleton $O(\log N)$: the nearest-index binary search costs $O(\log N)$ exact comparisons, the repeated-squaring evaluation of $\omega^{g_0}$ costs $O(\log N)$ exact multiplications, and the residue check $g_0 \equiv r \pmod{p_\ell}$ remains $O(1)$ exact integer arithmetic. A candidate-singleton is occupied, hence nonzero, and each stage has at most $\min(k, p_\ell) \le k$ nonzero bins (\Cref{lem:decimation-preserves-sparsity}), so decoding costs $O(k \log N)$ per stage and
\[
L \cdot O(k \log N) \;\le\; \frac{\log N}{\log k} \cdot O(k \log N) \;=\; O\!\left(\frac{k \log^2 N}{\log k}\right)
\]
in total. Since $\log N \ge \log k$ for every admissible instance ($N \ge k$), every other term satisfies $O(k \log N) \subseteq O(k \log^2 N / \log k)$, so the decode term dominates and the arithmetic total is $O(k \log^2 N / \log k)$. The fallback arm is bounded exactly as in \Cref{thm:sparse-path-complexity}: every charge, including the $O(\log N)$-per-bin decode, is an input-independent per-executed-stage bound, so a Step~3 dispatch at stage $\ell_0 \le L$ spends a partial sum of the same charges, at most the full-chain totals above.
\end{proof}

\begin{remark}[Relation between the two cost theorems]
\label{rem:cost-model-relation}
\Cref{thm:cost-comparison-model} states the cost once the unit-cost decoding primitives are withdrawn (its transform-family hypothesis is exactly that of \Cref{thm:sparse-path-complexity}); adopting the three unit-cost primitives of \Cref{def:label-decoding-model}(a) strengthens its $O(k \log^2 N / \log k)$ to the $O(k \log N)$ of \Cref{thm:sparse-path-complexity}, which is why the standing convention of \Cref{def:label-decoding-model} ties every $O(k \log N)$ figure to model (a). Neither theorem is a bit-complexity statement: both models retain exact real arithmetic and unit-cost exact $\arg(\cdot)$, and the gap between them measures the root-index read alone (\Cref{def:label-decoding-model}(b)).
\end{remark}

\noindent\emph{Scope of the conditional qualifier.} Both cost theorems bound the cost of candidate construction only, and their relationship, together with the fact that neither is a bit-complexity statement, is \Cref{rem:cost-model-relation}. The $O(k)$ candidate count they quote is the proved \Cref{thm:global-label-count} and holds for every input. What is conditional is the usefulness of the constructed set, not its size: containment $\mathcal{S} \subseteq \mathcal{G}$ requires \Cref{cond:all-stage-survival} (\Cref{thm:global-label-completeness}), and the screen is one-sided (\Cref{rem:one-sided-normative}), so a measure-zero screen-passing phantom (\Cref{prop:two-tone-phantom}) may be admitted. Exactness is therefore verifier-gated: a candidate is returned only on a PASS of the sound final verifier (\Cref{thm:verifier-soundness}, \Cref{def:verifier}) and is otherwise routed to the dense $O(N \log N)$ fallback (\Cref{thm:hybrid-correctness}). The conditional-sparse-path qualifier in the abstract refers to this operating envelope for sparse completion (engineered divisor compatibility, \Cref{def:transform-family}, together with all-stage genuine-singleton survival, \Cref{cond:all-stage-survival}), and never to randomized sampling, randomized stage selection, or any algorithmic coin flip.

\subsection{Hybrid Unconditional Correctness}

\noindent\emph{Informally:} the algorithm never trusts the fast path; an accepted output has passed a sound exactness proof, and everything else is recomputed densely.

\begin{theorem}[Hybrid Unconditional Correctness]
\label{thm:hybrid-correctness}
In the noiseless, on-grid, exact-arithmetic model of \Cref{def:signal-model}, for every input $x \in \mathbb{C}^N$ that is at most $k$-sparse on-grid, the hybrid algorithm of \Cref{alg:multistage_oklogn} returns the exact spectrum $X$, by exactly one of two deterministic branches:
\begin{enumerate}
\item Verified sparse path (PASS). If the sound final verifier of \Cref{def:verifier}, reading $k + |\mathcal{G}|$ consecutive residual samples on the global-label candidate set $\mathcal{G}$ of \Cref{thm:global-label-count}, returns $V(x,\hat{X}) = \mathrm{PASS}$ on the reconstructed sparse candidate $\hat{X}$, then $\hat{X} = X$ exactly (\Cref{thm:verifier-soundness}), and $\hat{X}$ is returned.
\item Dense fallback (FAIL or out-of-envelope). Otherwise, on a verifier FAIL, on any sparse-attempt abort (a stage emits no class-consistent global label, or the global-label budget is exceeded, namely $|\mathcal{G}_\ell| > ck$ at some stage or, once the stage chain is complete, $|\mathcal{G}| > k$ on the accumulated set), or on any violation of the operating envelope for sparse completion, the algorithm dispatches the standard $O(N \log N)$ dense FFT, which returns $X$ exactly on every input of the model.
\end{enumerate}
Exactly one output is accepted (the sparse output on verifier PASS, otherwise the dense-fallback output), the routing test (the verifier of \Cref{def:verifier} together with the existing screen and budget checks) is deterministic, and both branches output $X$; hence correctness holds with probability one, unconditionally within the noiseless, on-grid, at-most-$k$, exact-arithmetic model of \Cref{def:signal-model} and independent of the fast-path survival condition and of candidate-construction success. In particular no probabilistic, association, or verifier-existence assumption is needed for correctness, and the total runtime is bounded above by $O(N \log N)$: a rejected sparse attempt incurs at most its candidate-construction and, if reached, verification cost, $O(k \log N) + O(k^2)$ in model (a) (\Cref{def:label-decoding-model}), after which the dense fallback runs in $O(N \log N)$, and the sum remains $O(N \log N)$, because the model's standing $L \ge 2$ gives $N \ge k(k+1) > k^2$ (\Cref{def:transform-family}), so both $k^2$ and $k \log N$ are $O(N)$. The displayed sparse-attempt figure is the model-(a) bound; model (b) is handled analogously, a rejected attempt costing $O(k \log^2 N / \log k) + O(k^2)$ (\Cref{thm:cost-comparison-model}, \Cref{prop:verifier-cost}), and since $k < \sqrt{N}$ both terms are again $O(N \log N)$, leaving the total unchanged.

\noindent\emph{Cost and scope of the guarantee.} The verifier discharges correctness only, not the runtime conditions: the sparse path, when taken and verified, uses $O(k \log N)$ samples and $O(k \log N) + O(k^2)$ arithmetic in model (a) of \Cref{def:label-decoding-model} ($O(k \log^2 N / \log k) + O(k^2)$ in the comparison real-RAM model, \Cref{thm:cost-comparison-model}), with the $O(k^2)$ verifier term stated separately and not folded into $O(k \log N)$ (\Cref{prop:verifier-cost}). A sparse attempt spends at most $O(k \log N)$ under the transform family alone, whether it completes candidate construction or dispatches the fallback (\Cref{thm:sparse-path-complexity}); what additionally requires all-stage genuine-singleton survival (\Cref{cond:all-stage-survival}) is containment of the support (\Cref{thm:global-label-completeness}), and hence the sparse path completing rather than routing to the dense fallback, so the end-to-end $O(k \log N)$ sparse-path runtime remains conditional on the full operating envelope for sparse completion; the $O(k)$ candidate count it requires is the proved \Cref{thm:global-label-count}, so the runtime is not conditional on any bounded-association hypothesis. Only correctness is unconditional. The PASS-branch exactness is supplied by the proved \Cref{thm:verifier-soundness}.

\noindent\emph{Applicability range.} The availability constraint $L \le \pi([k,ck])$ in \Cref{def:transform-family} is non-restrictive at every sparsity: the prime sets cap the number of stages, so the largest engineered length is $N = \prod_{\ell=1}^{L} p_\ell$ over the available primes. For $k = 10$ the four primes $\{11,13,17,19\}$ give $N = 46{,}189$ at $L=4$; at the intended-regime boundary $k = 20$, $\{23,29,31,37\}$ give $N = 765{,}049$, and at $k = 50$ ten primes in $[50,100]$ allow much larger $N$ (admissible instances at and beyond the regime boundary; \Cref{rem:engineering-regime}, \Cref{tab:prime-availability}). Widening the prime interval (larger $c$) admits further stages.
\end{theorem}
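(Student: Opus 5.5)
The plan is to argue by exhaustive case analysis on the deterministic control flow of \Cref{alg:multistage_oklogn}, and then add up costs. First, every execution ends in exactly one of two places. One is Step~5 with a PASS, which returns $\hat{X}$. The other is a dispatch to DENSE-FALLBACK, which happens in four situations: a length outside the exact-product family at input, a stage with no class-consistent label or a budget overflow at Step~3, or a FAIL at Step~5. Every test along the way is an exact equality or an integer comparison: the pre-screen, root membership, class consistency, the budget guards and the residual zero-test. So the routing is a deterministic function of $x$, and exactly one output is accepted.

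Correctness is checked branch by branch. In the PASS branch, the verifier of \Cref{def:verifier} runs only when its precondition holds, because the budget guard ensures $|\mathcal{G}| \le k$ before it is invoked. Then \Cref{thm:verifier-soundness} applies to $\hat{X}$, which is supported on $\mathcal{G}$, and gives $\hat{X} = X$. That theorem assumes only $|\mathrm{supp}(X)| \le k$. It needs neither \Cref{cond:all-stage-survival} nor containment, and it tolerates phantom labels (\Cref{rem:phantom-labels}). In the fallback branch, the dense FFT computes the DFT of \Cref{def:dft} exactly under exact arithmetic, for every $x \in \mathbb{C}^N$. Both branches therefore output $X$. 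Since nothing in the argument uses randomness, association or survival, correctness is unconditional within the model.

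The runtime bound splits the same way. If the run aborts at Step~3, the spend before fallback is at most the sparse-attempt cost of \Cref{thm:sparse-path-complexity}, which is $O(k\log N)$ in model (a), or of \Cref{thm:cost-comparison-model}, which is $O(k\log^2 N/\log k)$ in model (b). The dense FFT then adds $O(N\log N)$. If the run reaches Step~5, we add the verifier cost from \Cref{prop:verifier-cost}: $O(k^2)$ in model (a) and $O(k^2 + k\log N)$ in model (b). Coefficient reading at Step~4 costs $O(kL)$.

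To absorb these terms into $O(N\log N)$, use $N \ge k(k+1) > k^2$ from \Cref{def:transform-family}. This gives $k^2 = O(N)$ and $k\log N \le N$. It also gives $k\log^2N/\log k \le \sqrt{N}\log^2 N/\log k$, which is $O(N\log N)$ because $\sqrt N\log N = O(N)$. The main obstacle is not mathematical depth but bookkeeping. Every exit of the algorithm, including the out-of-envelope dispatch at input, must be mapped to one of the two branches, and the verifier's precondition must be confirmed to hold whenever it runs. The remaining remarks of the theorem, on the scope of the $O(k\log N)$ cost and on the applicability range, follow by citing \Cref{thm:global-label-completeness}, \Cref{thm:global-label-count} and \Cref{lem:prime-existence}; they need no new argument.
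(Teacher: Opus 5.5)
Your proposal is correct and follows essentially the same route as the paper's proof: a case split on the deterministic routing of \Cref{alg:multistage_oklogn}, followed by three pieces. PASS-branch exactness comes from \Cref{thm:verifier-soundness}, which, as you rightly note, needs only $|\mathrm{supp}(X)| \le k$ and $\hat{X}$ supported on $\mathcal{G}$, not survival or phantom-freeness. Fallback exactness comes from the dense FFT. The runtime is the sparse-attempt cost of \Cref{thm:sparse-path-complexity} or \Cref{thm:cost-comparison-model} plus \Cref{prop:verifier-cost}, absorbed into $O(N\log N)$ via $N > k^2$.

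Your explicit handling of the out-of-envelope input exit and your spelled-out model-(b) absorption are somewhat more careful bookkeeping than the paper's proof, which leaves the absorption argument to the theorem statement itself.
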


\begin{proof}
PASS branch (exact). When the sparse path completes and the final verifier of \Cref{def:verifier} returns PASS, the candidate set is the global-label set $\mathcal{G} = \bigcap_\ell \mathcal{G}_\ell$ emitted by the total per-bin rule of \Cref{def:total-recovery} and bounded in size by the proved \Cref{thm:global-label-count}; its budget precondition $|\mathcal{G}| \le k$ is enforced before $V$ is invoked (\Cref{def:verifier}). Each candidate-singleton emits at most one in-range, on-grid global frequency label; for a genuine singleton that label is its exact full frequency $g = f \in [0, N)$, recovered directly from the de-rotated three-shift phasors by \Cref{lem:global-label-recovery} (the co-prime increments $\{31,32\}$ pin $g$ modulo $N$, so no residue-tuple CRT reassembly is invoked), and the singleton coefficient is read in closed form as $\hat{X}[g] = \tfrac{N}{p_\ell} z_0$ (\Cref{lem:global-label-recovery}(4)), exact in the noiseless setting. By \Cref{thm:verifier-soundness} a PASS verdict implies $\hat{X} = X$; if instead a screen-passing phantom survived the intersection and was used in the reconstruction, the reconstruction differs from $X$, the residual is nonzero, $V$ returns FAIL, and the input is routed to the dense fallback rather than returned (\Cref{rem:phantom-labels}).

FAIL / fallback branch (exact). An individual bin failing the pre-screen, or a candidate-singleton emitting no class-consistent global label under the total per-bin rule of \Cref{def:total-recovery} (a screen-inconsistent bin), is a normal per-bin event and does not by itself trigger fallback.

The operative routing rule is stage- and chain-level: if some stage emits no class-consistent global label, or the global-label budget is exceeded ($|\mathcal{G}_\ell| > ck$ at some stage or, once the stage chain is complete, $|\mathcal{G}| > k$ on the accumulated set), or the final verifier of \Cref{def:verifier} does not return PASS, or the input falls outside the operating envelope for sparse completion, the algorithm routes to the dense-FFT fallback (the screen and budget conditions are detected deterministically at Step 3 of \Cref{alg:multistage_oklogn}; the verifier condition at Step 5). The dense FFT computes all $N$ coefficients in $O(N \log N)$ deterministic time, whose correctness is classical, producing the exact spectrum unconditionally.

\needspace{4\baselineskip}\emph{Unconditional, deterministic correctness.} Exactly one of the two branches completes successfully; both produce the exact spectrum. The screen (candidate-singleton screening plus the global-label class-consistency check of \Cref{def:total-recovery}), the budget guard, and the verifier are each deterministic, so the routing decision is deterministic. The PASS branch's exactness rests on the proved \Cref{thm:verifier-soundness} rather than on an assumption, so correctness holds with probability one with no probabilistic or association hypothesis, independent of the sparse-path completion profile, which is characterized only conditionally by \Cref{cond:all-stage-survival} and \Cref{thm:global-label-completeness}. The cost figures are \Cref{prop:verifier-cost} added to \Cref{thm:sparse-path-complexity}.
\end{proof}

\paragraph{Result hierarchy.} The screened framework's correctness claims are organized in three tiers:
\begin{itemize}
  \item All inputs of the model (\Cref{def:signal-model}: noiseless, on-grid, at most $k$ true tones, exact arithmetic): exact recovery via the sparse path or dense fallback (\Cref{thm:hybrid-correctness}). No support-structure assumption beyond the at-most-$k$ model; for spectra with more than $k$ tones the verifier's residual-support bound does not apply and no claim is made.
  \item Completed sparse attempts (every stage emits a class-consistent global label and both budgets hold): candidate construction completes in $O(k \log N)$ arithmetic operations under the engineered exact-product transform family and in model (a), and in $O(k \log^2 N / \log k)$ in model (b) (\Cref{thm:sparse-path-complexity}, \Cref{thm:cost-comparison-model}, \Cref{def:label-decoding-model}); verification adds $O(k^2)$ in model (a) and $O(k^2 + k \log N)$ in model (b) (\Cref{prop:verifier-cost}).
  \item Aborted sparse attempts (collision-heavy or adversarial inputs): dense fallback returns the exact spectrum in $O(N \log N)$ time, after a sparse-attempt spend bounded as in \Cref{thm:sparse-path-complexity} (\Cref{thm:hybrid-correctness}).
\end{itemize}
The superseded accumulated cascade (\Cref{app:tuple-negative}) is well-defined for all inputs, but the operative sparse-path complexity bound, carried by the global-label engine, completes its construction arm only on completed sparse attempts.

\needspace{12\baselineskip}\section{Implementation and Computational Considerations}
\label{sec:computational}

\subsection{Parameter Selection for Multi-Stage Algorithm}

Reference Stage Configuration:\\
For signal length $N$ and sparsity $k$:
\begin{enumerate}
\item Stage Count: $L = O(\log_k N)$ with the exact-product model $N = \prod_{i=1}^L p_i$ by construction (with $k^L \leq N \leq (ck)^L$ when $p_i \in [k,ck]$)
\item Prime Selection: Choose primes $p_i \in [k, ck]$ for $c \ge 2$ (special case $c=2$) for each stage, ensuring pairwise coprimality
\item Implementation Batching Length: Use mixed-radix FFTs of length $m_i = p_i \times 2^6 \in [64k, 64ck]$ ($[64k, 128k]$ in the special case $c = 2$); the formal per-stage transform remains the $p_i$-point DFT (\Cref{rem:64k-implementation})
\item Shift Count: $S = 3$ shifts per stage for the magnitude-equality (normalized-MAD) test (constant overhead)
\end{enumerate}

\begin{definition}[Normalized MAD Singleton Test (Implementation Surrogate)]
\label{def:cv-singleton-test}
(Implementation surrogate.) The fielded implementation of the occupied equal-magnitude pre-screen of \Cref{def:exact-screen}(1) computes, for bin $r$ with amplitudes $A = \{|X_M[r; 0]|, |X_M[r; 31]|, |X_M[r; 63]|\}$ across the $S = 3$ co-prime-increment shifts $\{0, 31, 63\}$ (where $\gcd(31, 32) = 1$), the normalized median absolute deviation
\begin{equation}
\mathrm{nMAD}(r) = \frac{\text{MAD}(A)}{\text{median}(A)}
\end{equation}
where $\text{MAD}(A) = \text{median}(\{|a - \text{median}(A)| : a \in A\})$, and treats bin $r$ as a candidate-singleton if $\mathrm{nMAD}(r) < \tau_{cv}$ (the tolerance $\tau_{cv}\in(0,1)$ is chosen by the implementer; this paper calibrates no value for it, reports no measurements bearing on it, and no formal screen statement uses it). $\mathrm{nMAD}(r) = 0$ is a necessary condition for exact magnitude equality, not an equivalence: $\mathrm{MAD} = 0$ does not force constancy (e.g.\ the multiset $A = \{a, a, b\}$ with $a \ne b$ has $\mathrm{MAD}(A) = 0$), so the surrogate is not itself an exact-equality test (\S\ref{sec:adversarial} exhibits a three-tone collision admitted by the surrogate and rejected by the exact screen). The occupancy convention of \Cref{def:exact-screen} applies to the surrogate as well: a bin with $\mathrm{median}(A) = 0$ (in particular the all-zero bin) is declared non-candidate outright rather than evaluated through a zero denominator, and a surrogate-admitted candidate-singleton is handed to the ratio formation of \Cref{def:total-recovery} only when all three observations are nonzero, $z_0, z_{31}, z_{63} \ne 0$ (the \emph{all-observations-nonzero guard}); a bin failing the guard emits no label. Normalized MAD alone is insufficient: the operative formal screen is the exact ratio/root screen of \Cref{def:exact-screen}, whose remaining conditions are executed by the total rule of \Cref{def:total-recovery}.
\end{definition}

\noindent\emph{Why the guard is needed on the surrogate path.} $\mathrm{MAD}$-based acceptance does not force nonzero observations: the magnitude triple $(1, 0, 1)$ has median $1$ and $\mathrm{MAD} = 0$, hence $\mathrm{nMAD} = 0 < \tau_{cv}$, while the ratio $u_{32} = z_{63}/z_{31}$ of \Cref{def:total-recovery} would be undefined. A guard failure routes as any other no-label bin (\Cref{thm:hybrid-correctness}). The exact pre-screen needs no such guard, since it already forces $|z_0| = |z_{31}| = |z_{63}| > 0$ on every passing bin (\Cref{def:exact-screen}).

\begin{remark}[Implementation surrogates for the formal screen and transform length]
\label{rem:impl-surrogates}
The formal singleton screen of \Cref{tab:oklogn-assumptions} is stated with exact predicates (exact magnitude equality and exact ratio/root-membership conditions, \Cref{def:exact-screen}); a fielded implementation replaces them with finite-precision surrogates that are not part of the formal model: the normalized-MAD magnitude test of \Cref{def:cv-singleton-test} with threshold $\mathrm{nMAD} < \tau_{cv}$ and an unwrapped-affine-phase regression test with $R^2 > 1 - \varepsilon$ (\Cref{def:phase-unwrapping}), with an optional frequency-consistency cross-check, and with the all-observations-nonzero guard of \Cref{def:cv-singleton-test} preceding any ratio or phase formation. Likewise the formal per-stage transform is the $p_\ell$-point DFT (\Cref{subsec:sampling-operator}); a fielded implementation may batch it as a mixed-radix FFT of length $m_\ell = p_\ell \times 2^6 = 64 p_\ell$ (the design-stage batching convention of \Cref{rem:64k-implementation}). The tolerances $\tau_{cv}$ and $\varepsilon$ are implementer-chosen and are deliberately left unvalued here: this paper reports no calibration data or measurements that would fix them. They and the $64$ batching factor are engineering choices outside the formal cost claims of \Cref{thm:sparse-path-complexity,thm:hybrid-correctness}; they do not change the residue modulus $p_\ell$, the proved $O(k)$ candidate count (\Cref{thm:global-label-count}), or verifier soundness (\Cref{thm:verifier-soundness}).
\end{remark}

\begin{remark}[Implementation note: the $64 p_\ell$ mixed-radix factor is not part of the theorem]\label{rem:64k-implementation}
The formal sparse-path bound of \Cref{thm:sparse-path-complexity} requires only that per-stage transform sizes satisfy $m_\ell = \Theta(p_\ell) = \Theta(k)$. The engineered batching convention adopted in this paper (a design-stage choice; no implementation or measurements are reported) sets $m_\ell = p_\ell \times 2^6 = 64 p_\ell$ for mixed-radix butterflies (radix-$p_\ell$ outer stage, radix-$2$ inner stages); the constant $64$ is a nominal cache-alignment design rationale and is not part of the formal cost bound. Any other implementation choice with $m_\ell = \Theta(p_\ell)$ (for instance, $m_\ell = p_\ell \times 2^t$ with $t \in \{4, 5, 6, 7\}$, or pure radix-$p_\ell$ Bluestein) yields the same $O(k \log N)$ conditional sparse-path complexity. In particular, the formal per-stage transform of \Cref{subsec:sampling-operator} is the $p_\ell$-point DFT of the decimated operator; the $64 p_\ell$ mixed-radix length is an implementation batching of that $p_\ell$-point transform, never the formal DFT length.
\end{remark}

The adaptive-moduli scaling extension (an $N$-dependent choice of the moduli scaling factor $\theta$ and view count $V$ for the keyed multi-view reconstruction, together with its illustrative occupancy/yield design targets) sits outside the formal divisor-compatibility envelope and is collected, as a non-theorem implementation extension, in \Cref{app:keyed-multiview}.

\needspace{6\baselineskip}\emph{Example 1 (beyond the intended engineering regime, \Cref{rem:engineering-regime}: $k = 50$ is an admissible instance of \Cref{def:transform-family}; shown for parameter-selection illustration):} $N_{\mathrm{target}} = 10^6$, $k = 50$
\begin{itemize}
\item Stage count: $L = \lceil \log_2(10^6) / \log_2(50) \rceil = \lceil 19.93 / 5.64 \rceil = 4$ stages
\item Primes in $[50, 100]$: $p_1 = 53$, $p_2 = 59$, $p_3 = 61$, $p_4 = 67$
\item Product: $M = 53 \times 59 \times 61 \times 67 = 12{,}780{,}049 \ge N_{\mathrm{target}}$. The engineered transform length is the exact product $N_{\mathrm{eng}} = 12{,}780{,}049$ (matching the $k{=}50$, $L{=}4$ entry of \Cref{tab:prime-availability}), not the nominal target $N_{\mathrm{target}} = 10^6$.
\item Implementation batching lengths: $m_1 = 3{,}392$, $m_2 = 3{,}776$, $m_3 = 3{,}904$, $m_4 = 4{,}288$ (avg: 3,840)
\item Total candidate-construction arithmetic complexity: $4 \times O(k \log k)$ stages $= O(k \log N_{\mathrm{eng}})$ in the oracle model and $O(k \log^2 N_{\mathrm{eng}} / \log k)$ in the comparison model (\Cref{def:label-decoding-model}); the sample count is $O(k \log N_{\mathrm{eng}})$ in both models
\end{itemize}

\emph{Example 2:} $N_{\mathrm{target}} = 10^6$, $k = 15$
\begin{itemize}
\item Stage count: choose $L$ stage primes so the engineered length $N_{\mathrm{eng}} = \prod_{\ell=1}^{L} p_\ell$ first exceeds the target $N_{\mathrm{target}} = 10^6$; here $L = 5$, since the partial products reach $215{,}441 < 10^6$ at $L{=}4$ and $6{,}678{,}671 \ge 10^6$ at $L{=}5$. The transform length is then the exact product $N_{\mathrm{eng}} = 6{,}678{,}671$.
\item Primes in $[15, 45]$ (widened interval to ensure availability): $p_1 = 17$, $p_2 = 19$, $p_3 = 23$, $p_4 = 29$, $p_5 = 31$. We widen to $[k,3k]$ here to satisfy $\pi([k,3k]) \geq L$; this is consistent with the ``widen interval'' option noted earlier.
\item Product: $M = 17 \times 19 \times 23 \times 29 \times 31 = 6{,}678{,}671 \ge N_{\mathrm{target}}$. The engineered transform length is $N_{\mathrm{eng}} = 6{,}678{,}671$ (matching the $k{=}15$, $L{=}5$ entry of \Cref{tab:prime-availability}), a valid choice under the stated count-triggered widening rule (the widening to $[k,3k]$ is licensed here because $\pi([15,30]) = 4 < L = 5$); no minimality among admissible prime sets is claimed.
\item Implementation batching lengths: $m_i \in [1{,}088, 1{,}984]$ (exactly $64 \times [17, 31]$; note $1{,}984 > 128k = 1{,}920$, an instance of the general $[64k, 64ck]$ range at $c = 3$)
\item Total candidate-construction arithmetic complexity: $O(k \log N_{\mathrm{eng}})$ in the oracle model and $O(k \log^2 N_{\mathrm{eng}} / \log k)$ in the comparison model (\Cref{def:label-decoding-model}); the sample count is $O(k \log N_{\mathrm{eng}})$ in both models, with very small per-stage FFTs at $k=15$
\end{itemize}

\subsection{Shift Count Selection}

\emph{Design note.} The formal screen uses three observations to obtain two independent modular phase ratios. The core noiseless screen therefore uses $S = 3$ co-prime-increment shifts $\{0, \Delta_1, \Delta_1 + \Delta_2\}$ with $\gcd(\Delta_1, \Delta_2) = 1$. A larger shift count $S \in [4, 8]$ is an implementation-time design parameter; selecting a specific $S$ for noisy or off-grid operation is outside the noiseless model of this paper and is not characterized here.

\subsection{Deterministic Set Intersection}

\noindent The accumulation $\mathcal{G} \leftarrow \mathcal{G} \cap \mathcal{G}_\ell$ of Step~3 of \Cref{alg:multistage_oklogn} is a deterministic intersection of integer sets and needs no hashing. Each stage emits its labels into a sorted integer array, sorted once at $O(|\mathcal{G}_\ell| \log |\mathcal{G}_\ell|)$ or maintained sorted by insertion, and the accumulated set $\mathcal{G}$ is kept sorted by the same convention. Two sorted arrays are then intersected by a single merge pass with two indices, at cost $O(|\mathcal{G}| + |\mathcal{G}_\ell|)$ comparisons and no auxiliary storage beyond the output.

\noindent By \Cref{thm:global-label-count} both operands are $O(k)$ at every stage ($|\mathcal{G}_\ell| \le p_\ell \le ck$ and $|\mathcal{G}| \le p_1 \le ck$), so each stage's intersection costs $O(k)$ and the $L$ intersections cost $O(kL) = O(k \log N / \log k)$ in total, exactly the intersection charge already carried in the proofs of \Cref{thm:sparse-path-complexity,thm:cost-comparison-model}; the $L$ sorts add $O(k \log k)$ per stage, absorbed by the per-stage $O(k \log k)$ transform charge. This is an implementation fact consistent with the count theorem, not an additional formal claim: the $O(k)$ operand sizes are supplied by \Cref{thm:global-label-count}, and nothing here re-derives or strengthens them.

\subsection{Root-Index Decoding}

\noindent The one step of the sparse path whose cost depends on the machine model is the per-bin root-index read $g = \log_\omega w$ of \Cref{def:total-recovery}: it costs $O(1)$ under the unit-cost primitives of model (a) of \Cref{def:label-decoding-model} and $O(\log N)$ in the comparison model (b), which is exactly the gap between \Cref{thm:sparse-path-complexity} and \Cref{thm:cost-comparison-model}. A fielded implementation sits in model (b) unless it can supply the unit-cost primitives; the practical discussion, including the two implementation postures excluded by design (a $\Theta(N)$ precomputed root table, and a floating-point $\arg$ read in place of the exact emission comparisons), is collected in \Cref{app:cost-models}.

\subsection{Memory Complexity}

\begin{theorem}[Space Complexity]
\label{thm:space-complexity}
The multi-stage global-label algorithm uses $O(k + L) = O(k + \log N / \log k)$ working memory in sequential operation, and $O(L k) = O(k \log N / \log k)$ when all $L$ stages are retained in parallel.
\end{theorem}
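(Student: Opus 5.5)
The plan is a direct storage audit of \Cref{alg:multistage_oklogn}. I will count every object the algorithm holds at any moment, in units of complex or integer words, in the real-RAM models of \Cref{def:label-decoding-model}. Neither model is bit-level, so a word is unit-cost storage; in model (a) the label read uses $O(1)$ storage and no root table. The input $x$ is read-only time-domain access and is not charged. I will split the working memory into three parts: the per-stage transient buffers, the persistent cross-stage state, and the reconstruction-plus-verification tail.

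\emph{Per-stage transient buffers.} At stage $\ell$ the sampling operator $S_{p_\ell}^{(\sigma)}$ of \Cref{subsec:sampling-operator} fills three buffers of length $p_\ell \le ck$, and the size-$p_\ell$ DFTs can be computed into buffers of the same size. The formal transform is the $p_\ell$-point DFT, and even the implementation length $64p_\ell$ is $\Theta(k)$ (\Cref{rem:64k-implementation}). The pre-screen and the total rule $\textsc{Recover}$ of \Cref{def:total-recovery} then use $O(1)$ scratch per bin. The stage label set $\mathcal{G}_\ell$ has size at most $p_\ell \le ck$ by \Cref{thm:global-label-count}, and the sorted merge-intersection described under Deterministic Set Intersection needs only the output array. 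Every one of these objects is $O(k)$ and can be freed before stage $\ell+1$ starts.

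\emph{Persistent state.} The accumulated set $\mathcal{G}$ satisfies $|\mathcal{G}|\le p_1\le ck = O(k)$ by \Cref{thm:global-label-count}, and this bound holds for every input. The stage primes $p_1,\dots,p_L$ are kept as an $O(L)$ list. Step~4 needs the zero-shift value $z_0$ for each label, and I will store it alongside the label so that no stage's full spectrum has to be retained; this costs $O(|\mathcal{G}|)=O(k)$. The verifier (\Cref{def:verifier}) computes its residual values one at a time, or in a buffer of $W = k+|\mathcal{G}| = O(k)$ samples. Its root-power update factors number $|\hat S| = O(k)$. So the sequential peak is $O(k)+O(L) = O(k+\log N/\log k)$, using $L\le\log_k N$ from \Cref{def:transform-family}.

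\emph{Parallel mode.} When all $L$ stages run concurrently, their $O(k)$ buffers and label sets are live at the same time, giving $L\cdot O(k) = O(kL) = O(k\log N/\log k)$; the $O(k)$ intersection and verification tail are absorbed.

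The main obstacle is Step~4 in the sequential mode. It reads $X_{p_\ell}^{(0)}[g \bmod p_\ell]$ after the stage chain is complete, and a naive implementation would keep one stage's full spectrum or re-read samples. I will handle this by carrying each label together with its coefficient through the intersection, so the storage is tied to the $O(k)$ label count rather than to the number of stages. A second point to check is the dense fallback: its $\Theta(N)$ memory belongs to the fallback branch and sits outside the sparse-path claim. I will state this explicitly, in the same way the cost theorems separate the fallback's $O(N\log N)$ time.
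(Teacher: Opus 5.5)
Your proposal is correct and follows the same storage audit as the paper. Both count one reusable $O(k)$ stage buffer, the $O(k)$ accumulated set $\mathcal{G}$ (with the transient $\mathcal{G}_\ell$ intersected and then discarded) and the $O(L)$ list of stage primes, with $L$ concurrent copies of the $O(k)$ per-stage state in parallel mode; your extra bookkeeping (carrying each label's zero-shift value $z_0$ through the intersection so Step~4 needs no retained spectrum, charging the $O(k)$ verifier window, and explicitly excluding the dense fallback's $\Theta(N)$ memory) fills in points the paper's proof leaves implicit without changing the bound.
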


\begin{proof}
Sequential Processing (Reusing Buffers): The global-label engine (\Cref{subsec:global-label-engine}) never materializes residue tuples or a CRT-reconstruction workspace; its working set is
\begin{itemize}
\item FFT computation buffer: $O(k)$ for the largest stage transform $m_\ell = \Theta(p_\ell) = \Theta(k)$, reused across stages;
\item the accumulated global-label set $\mathcal{G}$ with $|\mathcal{G}| \le p_1 \le ck = O(k)$ (\Cref{thm:global-label-count}), plus, while a single stage is processed, the per-stage label set $\mathcal{G}_\ell$ with $|\mathcal{G}_\ell| \le p_\ell = O(k)$, which is intersected into $\mathcal{G}$ and then discarded;
\item the $L$ stage indices / primes: $O(L)$.
\end{itemize}
The total sequential working memory is therefore $O(k) + O(L) = O(k + L) = O(k + \log N / \log k)$; no per-stage residue lists and no incremental-CRT workspace are stored.

Parallel Processing (No Buffer Reuse): If all $L$ stages are computed concurrently, each retains its own $O(k)$ FFT buffer and $O(k)$ label set $\mathcal{G}_\ell$, giving $O(L k) = O(k \log N / \log k)$ total.

For small $k$ (the regime of \Cref{rem:engineering-regime}), both figures remain modest even at large engineered lengths (e.g., at the regime boundary $k=20$, the $L=4$ engineered instance $N = 765{,}049$ of \Cref{tab:prime-availability} gives a sequential working set on the order of $k + L \approx 24$ entries); for larger $k$ both grow only as the stated $O(k + \log N/\log k)$ and $O(k \log N/\log k)$ bounds.
\end{proof}

\section{Adversarial Examples and the Dense Fallback}
\label{sec:adversarial}

\subsection{Adversarial-Support Example: A Collision that Passes the nMAD Surrogate and is Rejected at Root-Membership}

To exercise the sparse/dense hybrid, take $N = 46{,}189 = 11 \cdot 13 \cdot 17 \cdot 19$ (engineered transform length), $k = 4$, primes $\{11, 13, 17, 19\}$ ($L = 4$), and shifts $\{0, 31, 63\}$. Screen convention for this example: shifts $\{0,31,63\}$; formal screen is the exact ratio/root screen of \Cref{def:exact-screen} (occupied equal-magnitude bin, $D_{\max} = 0$, root membership, class consistency), with $\mathrm{nMAD} < \tau_{cv}$ as the finite-precision implementation surrogate; label emission follows the total rule of \Cref{def:total-recovery} executed by the costed procedure of \Cref{def:label-decoding-model}. Under this shift set, measure-zero two-tone screen-passing phantoms are possible (\Cref{prop:two-tone-phantom}), unlike consecutive-shift conventions with different blind spots (\Cref{rem:consecutive-shifts-blind-spots}). Suppose the support is the residue-collision set
\begin{equation*}
\begin{aligned}
\mathcal{S} = \{\;& f_1 = 11 \cdot 13 = 143, \quad f_2 = 11 \cdot 17 = 187, \\
                 & f_3 = 13 \cdot 17 = 221, \quad f_4 = 11 \cdot 13 \cdot 17 = 2431 \,\},
\end{aligned}
\end{equation*}
which is constructed so that every pair shares at least one residue modulo a stage prime. Assign equal coefficients $X[f] = 1$ for all four support frequencies. At stage $\ell = 1$ ($p = 11$), the residues are $f_1 \bmod 11 = 0$, $f_2 \bmod 11 = 0$, $f_3 \bmod 11 = 1$, $f_4 \bmod 11 = 0$: bin $r = 0$ contains $\{f_1, f_2, f_4\}$ (a $3$-tone collision) while bin $r = 1$ contains the singleton $\{f_3\}$.

\emph{The three complex bin-$0$ observations (PLUS convention).} The bin-$0$ observation is the multi-tone phasor sum $z_\sigma \coloneqq X_{p_1}^{(\sigma)}[0] = \tfrac{p_1}{N}\sum_{f \in \{143,187,2431\}} X[f]\, e^{+j 2\pi f \sigma / N}$, with the common prefactor $p_1/N = 11/46{,}189$. Under the PLUS convention the three shifted observations are, numerically,
\begin{equation*}
\begin{aligned}
z_0    &= 7.1446\times 10^{-4}, \\
z_{31} &= 2.0271\times 10^{-4} + 1.2879\times 10^{-4}\,j, \\
z_{63} &= -2.263\times 10^{-5} + 6.8022\times 10^{-4}\,j,
\end{aligned}
\end{equation*}
whose inner phasor-sum magnitudes (after dividing out $p_1/N$) are $\bigl| \sum_f e^{+j 2\pi f \sigma / N} \bigr| = (3.000,\; 1.008,\; 2.858)$ at $\sigma \in \{0,31,63\}$.

\emph{The formal screen rejects it; only the implementation surrogate admits it.} Which test admits this bin must be stated precisely. The formal exact equal-magnitude condition of \Cref{def:exact-screen} is the vanishing normalized range $D_{\max}(0) = (\max_\sigma|z_\sigma| - \min_\sigma|z_\sigma|)/\max_\sigma|z_\sigma|$; here $D_{\max}(0) \approx 0.664 \ne 0$, so the bin fails the formal exact-equality screen (a true singleton would pass with $D_{\max} = 0$). Under the formal screen this bin is therefore rejected before any candidate admission; it is not admitted by the formal model. What admits it is only the implementation nMAD surrogate: the median-based surrogate evaluates to $\mathrm{nMAD}(0) \approx 0.050$, so the bin passes the surrogate, and is not rejected by the implementation magnitude test, at every tolerance $\tau_{cv} > 0.050$. This is the surrogate's weakness made explicit: a genuine three-tone collision survives the $\mathrm{nMAD} < \tau_{cv}$ surrogate at any such tolerance even though it fails the formal $D_{\max} = 0$ screen. The example therefore motivates the formal exact-equality screen over the $\mathrm{nMAD}$ surrogate, and, because the surrogate (not the formal screen) admits it, motivates the sound verifier as the actual guarantor in the fielded implementation.

\emph{The B\'ezout combination is not even on the unit circle: the total rule rejects at root-membership.} Suppose the surrogate-admitted bin $0$ (the formal screen having already rejected it) is handed to the global-label engine. Forming the per-step ratios $u_{31} = z_{31}/z_0$, $u_{32} = z_{63}/z_{31}$ and the B\'ezout combination $w = u_{32}\,u_{31}^{-1}$ (which does not assume the bin is a singleton) gives, at high precision, $w = 7.5083 + 3.8338\,j$ with $|w| \approx 8.4305$. Since $|w| \ne 1$, $w$ is not an $N$th root of unity and no integer $g$ with $w = \omega^{g}$ exists: condition~1 of \Cref{def:total-recovery} (exact root-membership) fails, so the bin emits no label, before the class-residue check is ever reached. The costed decoding procedure of \Cref{def:label-decoding-model} still produces the rounded phase index $g_0 = \operatorname{round}(N \arg(w)/2\pi) = 3471$, but this is a guess only, discarded by the exact root-membership comparison $\omega^{g_0} \ne w$; it never becomes a label. (Secondarily, even the discarded guess is class-inconsistent: $g_0 \bmod 11 = 6 \ne 0 = r$.)

\emph{Resolution: the total rule rejects the bin outright, with the verifier as backstop.} The formal total recovery rule (\Cref{def:total-recovery}) rejects this bin outright at condition~1: $w$ is not an $N$th root of unity ($|w| \approx 8.4305$), so no label is emitted and the bin contributes nothing to $\mathcal{G}_\ell$, before any verifier call. This is the screen architecture working as designed on this instance. The rejection is a computed fact about this example, not a consequence of the unequal magnitudes alone: since $w = z_{63}\,z_0/z_{31}^2$, its modulus is the ratio $|w| = |z_{63}||z_0|/|z_{31}|^2$, which can equal one even when the three magnitudes differ; here the computed value is $|w| \approx 8.4305 \ne 1$, and the exact decoding rule rejects this bin deterministically.

Only an implementation that carried the $\mathrm{nMAD}$-surrogate-admitted candidate forward (bypassing the exact root-membership and class checks of the total rule and forcing the rounded guess $g_0 = 3471$ into the candidate set) would rely on the verifier to reject it: in that case the forced label produces a candidate reconstruction $\hat{X} \ne X$, the residual $x - \hat{x}$ is nonzero, the sound final verifier of \Cref{thm:verifier-soundness} returns $V(x,\hat{X}) = \mathrm{FAIL}$, and the input is routed to the dense FFT, exact in $O(N \log N) = O(46{,}189 \cdot 15.5) \approx 7.2 \times 10^5$ operations. (All displayed values in this subsection were recomputed at high precision under the PLUS convention.) This worked example demonstrates that a genuine collision can pass the $\mathrm{nMAD}$ surrogate yet fail the formal exact-equality screen ($D_{\max} \approx 0.664 \ne 0$), be admitted by the surrogate as a candidate-singleton, and be rejected by the total rule's exact root-membership check (condition~1 of \Cref{def:total-recovery}) or, if the total rule is bypassed entirely, by the verifier without sacrificing exact recovery; it is not a proof that every adversarial collision is rejected by the three-shift screen, and it does not claim a threshold-triggered fallback at this bin.

The cost theorem (\Cref{thm:sparse-path-complexity}) needs only the engineered transform family; it is containment of the support (\Cref{thm:global-label-completeness}), and hence the sparse path completing rather than falling back, that is conditional on all-stage genuine-singleton survival. The $O(k)$ candidate count is proved by \Cref{thm:global-label-count} rather than assumed. The adversarial support above falls outside the all-stage genuine-singleton-survival regime of \Cref{cond:all-stage-survival} (one of its tones collides at a stage); the surrogate admits the bin as a candidate-singleton, the total rule of \Cref{def:total-recovery} emits no label for it (root-membership failure, $|w| \ne 1$), and exact recovery of the full spectrum is in any case secured by the dense fallback rather than the three-shift screen. The hybrid theorem (\Cref{thm:hybrid-correctness}) guarantees exact recovery in this worst case at the cost of $O(N \log N)$ runtime, matching what an unconditional dense FFT would have done from the start.

\subsection{Exact Emission-Predicate-Passing Phantom and nMAD Surrogate False Accept}

The previous example has unequal shifted magnitudes ($D_{\max} \approx 0.664 \ne 0$): it fails the formal exact-equality screen yet still slips through the median $\mathrm{nMAD} < \tau_{cv}$ surrogate. The complementary and stronger direction is a non-singleton bin that passes the full exact screen ($D_{\max} = 0$ and exact singleton-consistent increment ratios), which no magnitude test, formal or surrogate, can exclude: this is the screen's irreducible one-sidedness (\Cref{rem:one-sided-normative}). Screen convention for this subsection: shifts $\{0,31,63\}$; formal screen $D_{\max} = 0$ with exact root membership and class consistency (\Cref{def:exact-screen}), $\mathrm{nMAD} < \tau_{cv}$ as the surrogate; label emission by \Cref{def:total-recovery}.

\emph{Formal exact false accept (two-tone witness).} The formal counterexample to screen sufficiency is the exact two-tone witness of \Cref{prop:two-tone-phantom}: $N = 765{,}049 = 23 \cdot 29 \cdot 31 \cdot 37$, stage prime $p = 23$, colliding tones $\{0, d\}$ with $d = N/31 = 24{,}679$ (both $\equiv 0 \bmod 23$), and coefficients $A_0 = -\alpha$, $A_d = 1 + \alpha$ with $\alpha = e^{+j 2\pi/31}$. By exact algebra the three shifted bin observations are $(1,\, 1,\, \alpha^2)$, byte-for-byte the triple of a genuine singleton at the phantom $h = 2d = 49{,}358$ (class-consistent, $h \equiv 0 \bmod 23$): $D_{\max} = 0$ exactly, the increment ratios are exactly those of that singleton, and the total rule of \Cref{def:total-recovery} emits the phantom label $h$. This is an exact algebraic identity, not a finite-precision event: the bin defeats the formal exact screen itself. Emitting the phantom label does not by itself invoke the verifier: if $h$ survives the global-label intersection and is used in the reconstructed candidate, the residual verifier rejects the candidate (\Cref{thm:verifier-soundness}) and the input is routed to the dense fallback.

\emph{Surrogate-level illustration (three tones, finite precision).} A separate, weaker phenomenon is a configuration that defeats only the implementation surrogate. Keep $N = 46{,}189$ and take a $p = 11$ bin holding the three tones $f \in \{0, 11, 22\}$ (all $\equiv 0 \bmod 11$) with real coefficients
\begin{equation*}
A_{0} = 1.00000, \qquad A_{11} = 0.99778, \qquad A_{22} = -0.20000 .
\end{equation*}
With $Y(\sigma) = \sum_f A_f\, e^{+j 2\pi f \sigma / N}$ (aliasing prefactor divided out), the three shifted magnitudes evaluate to
\begin{equation*}
\begin{aligned}
|Y(0)|  &= 1.7977800, \\
|Y(31)| &\approx 1.7977808, \\
|Y(63)| &\approx 1.7977800,
\end{aligned}
\end{equation*}
agreeing to about seven significant digits. Their median is $|Y(0)| = 1.79778$, and the deviations from the median are $\bigl|\,|Y(\sigma)| - 1.79778\,\bigr| = 0$, $8.0520 \times 10^{-7}$, and $1.46149 \times 10^{-10}$ at $\sigma = 0$, $31$, $63$; the deviations are stated directly because they are not recoverable from the displayed magnitudes: those are rounded in the $10^{-7}$ place, while the deviation that determines the $\mathrm{MAD}$ lies at the $10^{-10}$ scale. The median absolute deviation is the middle deviation, $\mathrm{MAD} = 1.46149 \times 10^{-10}$, so the surrogate statistic is $\mathrm{nMAD} = \mathrm{MAD}/\mathrm{median} \approx 8.1294 \times 10^{-11}$ and the surrogate admits the bin at every tolerance $\tau_{cv} > 8.13 \times 10^{-11}$, which covers the usual implementer choices; a smaller fixed tolerance, say $\tau_{cv} = 10^{-12}$, rejects it. The exact statistic, however, is nonzero: $D_{\max} \approx 4.48 \times 10^{-7} \ne 0$, so this configuration fails the formal exact-equality screen and is a false accept only at the surrogate level, not against the formal $D_{\max} = 0$ screen. (All values recomputed at high precision.)

\begin{remark}[The magnitude screen alone is not sound]
\label{rem:false-accept}
The two-tone witness above is the operative false accept against the formal screen: a bin holding $m = 2$ tones passes the full exact screen and emits a phantom label, so the full exact emission predicate does not certify singleton status (\Cref{prop:two-tone-phantom,prop:higher-order-phantom}; false accepts arise already at $m = 2$, cf.\ \Cref{ex:n15-mixed-tuple}, \Cref{rem:one-sided-normative}). The three-tone configuration illustrates the additional, weaker finite-precision gap of the $\mathrm{nMAD}$ surrogate: exact predicates and surrogate decisions must be reported separately, and this paper does so throughout. What prevents a wrong output in either case is the sound final verifier of \Cref{thm:verifier-soundness}, invoked once on the reconstructed candidate rather than per bin: if a phantom label survives the global-label intersection and is used in that candidate, the residual verifier rejects the candidate, after which the input is routed to the $O(N \log N)$ dense fallback. This is the concrete motivation for the verifier of \Cref{def:verifier} rather than relying on the three-shift screen alone.
\end{remark}

\section{Limitations and Open Directions}
\label{sec:limitations}

\subsection{Current Limitations}

The construction is exact only inside the model it assumes, and that model is a strong one. This section states where the guarantees stop: the measure-zero phantom slice that the one-sided screen cannot exclude, the exact-arithmetic requirement behind verifier soundness, the all-stage-survival specialization of the fast path, and the noiseless on-grid restriction. Correctness within the model is unconditional (\Cref{thm:verifier-soundness}, \Cref{thm:hybrid-correctness}); what follows concerns the boundary of the model and the runtime regime, not that correctness statement.

\subsubsection{Phantom Exclusion: Correctness is Verifier-Secured, not Screen-Secured}

The proved $O(k)$ candidate set is not phantom-free. The global-label engine bounds the count (\Cref{thm:global-label-count}), but a screen-passing phantom, a structured $m \ge 2$ collision (\Cref{prop:two-tone-phantom,prop:higher-order-phantom}), recovers a single phantom label that can survive the intersection (\Cref{rem:phantom-labels}). Correctness is therefore verifier-secured, not screen-secured: the $O(k)$ recovered labels are not all guaranteed true, and the sound verifier of \Cref{thm:verifier-soundness} (reading $k + |\mathcal{G}|$ consecutive residual samples, \Cref{def:verifier}) is what rejects the candidate carrying a surviving phantom and routes the input to the dense fallback. The open direction is a deterministic per-bin soundness rule that would exclude the measure-zero phantom slice at every $m \ge 2$ before verification; correctness does not depend on closing it, but a screen-only guarantee would.

\subsubsection{Finite-Precision Verification: the Exact-Arithmetic Requirement}

Verifier soundness is an exact-arithmetic, noiseless property (\Cref{rem:verifier-noiseless-floor}); exact arithmetic is the theorem model, and a practical finite-precision verification theory is future work (listed below), not a defect in \Cref{thm:verifier-soundness}. There is no universal residual lower bound: with arbitrary complex coefficients a nonzero residual can be scaled below any fixed tolerance, and coherent multi-tone cancellation can shrink the checked $k + |\mathcal{G}|$ samples further, so a fixed absolute float zero-test is unsound. The formal model is exact arithmetic; a fielded implementation must carry out the $k + |\mathcal{G}|$-sample residual in exact or rational arithmetic, or use coefficient-normalized, condition-aware thresholds, not a single absolute floor. The $\Theta(k/N^2)$ figure of \Cref{rem:verifier-noiseless-floor} is an illustrative near-miss scale only, not a safe threshold.

\subsubsection{All-Stage-Survival Specialization of the Fast Path}

The $O(k \log N)$ sparse path is specialized to support that survives as a genuine singleton at every stage. A true tone that collides in some stage $\ell$ is not guaranteed to be in the candidate set: if its bin fails the screen it is excluded outright, whereas a measure-zero screen-passing collision (\Cref{prop:two-tone-phantom}, \Cref{rem:one-sided-normative}) may instead emit a phantom or a different label under the total rule of \Cref{def:total-recovery}; should that label survive the intersection and be used in the reconstructed candidate, the sound verifier rejects the candidate (\Cref{thm:verifier-soundness}) and the input is routed to the $O(N \log N)$ dense fallback (\Cref{thm:hybrid-correctness}, \Cref{rem:alg1-semantics}). Recovery from partial singleton appearances rather than all-stage survival is left open. The regime consequence follows from the per-stage per-bin collision estimate derived in \Cref{rem:engineering-regime}, which does not vanish at larger admissible $k$ ($\approx 0.39$ at $c = 2$) because the stage primes scale with the sparsity: under any input model that makes bin loads comparable to uniform random support, the fraction of inputs taking the dense fallback rises with $k$, and the $O(k \log N)$ advantage is concentrated in the very-low-sparsity regime. Correctness is unaffected in either case (\Cref{thm:hybrid-correctness}); only the runtime regime shifts.

\subsubsection{Noiseless On-Grid Model}

The construction assumes noiseless observations and exact on-grid support, $f \in \{0, 1, \ldots, N-1\}$. Off-grid frequencies $f = f_0 + \delta$ with fractional offset $\delta \in (-0.5, 0.5)$ break the model in three ways: spectral leakage spreads energy across adjacent bins and violates singleton isolation; the residue $r_\ell = f \bmod p_\ell$ becomes ambiguous under peak-detection rounding; and the linear phase relation $\phi_v = 2\pi f \sigma_v / N$ no longer holds for fractional $f$. Likewise, additive noise inflates the normalized-MAD and degrades phase coherence, so the noiseless screen statistics do not transfer. A rigorous noisy and off-grid analysis is outside the present formal model and is not characterized here.

\needspace{16\baselineskip}
\subsection{Future Research Directions}
\nopagebreak[4]

The open directions below are listed in order of priority for elevating the conditional architecture toward a complete deterministic sparse FFT. They are research directions, not specified algorithms, and carry no performance promises.
\begin{enumerate}
\item Deterministic per-bin phantom exclusion. The open problem is soundness, not the count: the count is bounded deterministically for every input by the global-label engine (\Cref{thm:global-label-count}), while a measure-zero screen-passing phantom exists at every $m \ge 2$ (\Cref{prop:two-tone-phantom,prop:higher-order-phantom}) and can survive the intersection (\Cref{rem:phantom-labels}). A deterministic per-bin rule excluding that slice would make the $O(k)$ recovered labels all true before verification. This is a soundness-refinement direction only; correctness is already secured by the verifier (\Cref{thm:verifier-soundness}), which rejects any candidate built on a surviving phantom. The screen-only residue-tuple rule, whose tuple count multiplies across stages (\Cref{prop:assoc-count}), is not a route to it and is retained only as a negative result.
\item Partial-stage recovery. Recovery from partial singleton appearances rather than survival through the full stage chain, which would shrink the fraction of inputs routed to the dense fallback.
\item Finite-precision and noise model. Noise-tolerant screen and verifier statistics replacing the exact-arithmetic, noiseless assumptions; a fielded residual test in particular needs exact or rational arithmetic, or coefficient-normalized condition-aware thresholds, rather than an absolute floor.
\item Off-grid model. Sub-bin refinement for fractional frequencies, where spectral leakage breaks singleton isolation and the residue $f \bmod p_\ell$ becomes ambiguous under peak-detection rounding.
\item Arbitrary-length and non-divisor-compatible operation. Handling lengths outside the engineered exact-product envelope without dense fallback.
\item Empirical and hardware validation. Monte-Carlo benchmarking and hardware measurement, together with streaming and multidimensional variants.
\item Reduced-sample verifier variants. A divisor-compatible $2k$-separating CRT family (\Cref{rem:verifier-crt-variant}) that would read fewer samples than the consecutive-sample verifier of \Cref{def:verifier} when such a family can be engineered into $N$.
\end{enumerate}

\section{Conclusion}

This paper builds a deterministic global-label CRT sparse Fourier architecture for engineered exact-product transform lengths. The transform length is selected at design time as the exact product of small pairwise-coprime stage primes, so each stage transform is sized by the sparsity rather than by the ambient length. Co-prime-increment time shifts supply a one-sided singleton-consistency screen on the exact true-singleton shift signature (constant magnitude and exact unimodular increment ratios) of the divisor-compatible shifted-aliasing identity; each candidate-singleton is submitted to a total emission rule that emits at most one global frequency label, read in closed form from the three shifts, and the per-stage label sets are intersected into the candidate set. A consecutive-sample residual verifier gates acceptance, and a dense FFT is the fallback.

Four results carry the construction, all in the noiseless, on-grid, at-most-$k$-sparse, exact-arithmetic model. The candidate count is deterministically $O(k)$ for every input, proved rather than assumed (\Cref{thm:global-label-count}), and under all-stage genuine-singleton survival the support is contained in that candidate set (\Cref{cond:all-stage-survival}, \Cref{thm:global-label-completeness}). The consecutive-sample verifier of \Cref{def:verifier} is proved sound, so a passing sparse output is exact (\Cref{thm:verifier-soundness}), and the dense fallback is exact on every input of the model, which together make correctness unconditional within it (\Cref{thm:hybrid-correctness}). The costs are stated in named models: in the comparison real-RAM model of \Cref{def:label-decoding-model}, candidate construction uses $O(k \log N)$ samples and $O(k \log^2 N / \log k)$ arithmetic (\Cref{thm:cost-comparison-model}), and in the stronger exact label-decoding (root-index oracle) model this tightens to $O(k \log N)$ arithmetic (\Cref{thm:sparse-path-complexity}), with the verifier term reported separately rather than folded in ($O(k^2)$ in the oracle model, $O(k^2 + k \log N)$ in the comparison model), and $O(N \log N)$ worst case through the fallback; working memory is $O(k + \log N/\log k)$ in sequential operation (\Cref{thm:space-complexity}). The formal bounds hold asymptotically for every $k \ge 3$ under the engineered exact-product model of \Cref{def:transform-family}, while the practical sparse-path advantage is concentrated in the very-low-sparsity regime of \Cref{rem:engineering-regime}.

What remains open is visible in those same statements. The oracle-model $O(k \log N)$ rests on the root-index decoding oracle; the default proved figure without it is the comparison-model $O(k \log^2 N / \log k)$, and closing that gap is a decoding question rather than an architectural one. The candidate set is also not phantom-free: a measure-zero screen-passing phantom can survive the intersection (\Cref{prop:two-tone-phantom}, \Cref{rem:phantom-labels}), so the $O(k)$ recovered labels are not all guaranteed true and exactness stays verifier-gated rather than screen-gated. Beyond these, the operating model itself is narrow, and a finite-precision and noise model, an off-grid model, recovery from partial rather than all-stage singleton survival, arbitrary-length operation, and empirical validation are the principal directions left.

\section*{Acknowledgments}
The authors gratefully acknowledge the collaborative environment at SparseTech that made this research possible.

The theoretical and computational developments presented in this paper are part of an ongoing SparseTech research initiative on deterministic Sparse Fast Fourier Transform algorithms.

\textbf{Patent pending.}


\appendices
\crefalias{section}{appendix}

\section{Why Unlabeled Residue-Tuple Intersection Fails: The Exact Cartesian Tuple Count}
\label{app:tuple-negative}

This appendix analyzes the screen-only residue-tuple-intersection rule, superseded by the global-label engine (\Cref{thm:global-label-count}); it is retained as a negative result showing why per-bin global labels are required. It carries candidates as Cartesian residue tuples whose accepted-tuple count is the full rectangle $\prod_j |\mathcal{R}_j|$, which multiplies across stages, reaches $2^L$ at two-per-stage occupancy, and exceeds the algorithm's concrete budgets at attainable instances, exactly the obstruction the global-label engine removes. The companion per-bin B\'ezout recovery is also recorded here as part of the same superseded path; its measure-zero screen-passing phantoms are screen-soundness results and are stated, with their exact witnesses, in \Cref{app:screen-analysis} (\Cref{prop:two-tone-phantom,prop:higher-order-phantom}). The separate keyed multi-view and adaptive-modulus extensions are collected, as a non-operative extension, in \Cref{app:keyed-multiview}.

\begin{remark}[Why the $2^L$ blow-up of the tuple cascade cannot occur]
\label{rem:no-product-blowup}
The candidate count of the superseded accumulated-residue cascade (analyzed as a negative result in this appendix) would be
\[
  \bigl|\{(r_1,\dots,r_t) \in \mathcal{R}_1 \times \cdots \times \mathcal{R}_t : \mathrm{CRT} \in [0,N)\}\bigr| = \prod_{j\le t}|\mathcal{R}_j|,
\]
which multiplies per-stage bin sets and already reaches $2^t$ when each stage retains two screened bins (\Cref{prop:assoc-count}). \Cref{thm:global-label-count} avoids this entirely: it never forms a product. Each stage independently emits at most $p_\ell$ global labels, and the stages are combined by intersection, which can only shrink the set. The label count is linear per stage and the intersection is bounded by the smallest stage, so the accumulated count is $O(k)$ rather than a multiplicative per-stage product. The architectural difference is: the cascade carries candidates as residue tuples (Cartesian, multiplicative); the global-label engine carries them as recovered full frequencies (per-bin, additive then intersected). No Cartesian product $\mathcal{R}_1 \times \cdots \times \mathcal{R}_t$ is ever formed.
\end{remark}

\subsection{The Cartesian Residue-Tuple Cascade (Superseded)}

\noindent\emph{Setup (minimal tuple notation of the superseded cascade).} Let $p_1, \ldots, p_L$ be the pairwise-coprime stage primes of \Cref{def:transform-family} with accumulated modulus $Q_t \coloneqq \prod_{j \le t} p_j$, and let $\mathcal{R}_j \subseteq \{0, \ldots, p_j - 1\}$ be the pre-screen-passing bins retained at stage $j$ by the co-prime shift phase-consistency screen (\Cref{thm:phase-consistency-certificate}), realized via the exact conditions of \Cref{def:exact-screen}. The screen-only rule carries candidates as residue tuples in the Cartesian rectangle $U_t \coloneqq \mathcal{R}_1 \times \cdots \times \mathcal{R}_t$.

By CRT each tuple $\tau = (r_1, \ldots, r_t) \in U_t$ has a unique residue $\bar f_\tau \in [0, Q_t)$ with $\bar f_\tau \equiv r_j \pmod{p_j}$ for all $j \le t$, and its \emph{lift set} is $\mathrm{Lift}_t(\tau) \coloneqq \{ f \in [0, N) : f \equiv \bar f_\tau \pmod{Q_t} \}$. The rule accepts $\tau$ into the accepted-tuple set $T_t \subseteq U_t$ if and only if $\mathrm{Lift}_t(\tau) \neq \emptyset$. The induced accepted-frequency set is
\begin{equation}
C_t \coloneqq \{ f \in [0, N) : f \bmod p_j \in \mathcal{R}_j \text{ for all } j \le t \},
\label{eq:candidate-set}
\end{equation}
the union of the accepted lifts, refining $C_0 = [0,N) \supseteq C_1 \supseteq \cdots \supseteq C_L$. For a $k$-sparse support $\mathcal{S}$, write $\mathcal{S}^{\mathrm{cert}} \coloneqq \{ f \in \mathcal{S} : f \bmod p_j \in \mathcal{R}_j \text{ for all } j \le t \}$ for the screen-surviving support.

Two facts carry over from the fuller development this Setup condenses. Completeness: every $f \in \mathcal{S}^{\mathrm{cert}}$ has its residue tuple in $U_t$ by construction, hence accepted, so the rule loses no all-stage screen-surviving tone; at the final exact-product stage $Q_L = N$ each lift is a single frequency and every such tone is recovered. One-sidedness: acceptance certifies nothing, since $\mathcal{R}_j$ may contain screen-passing collision bins (\Cref{rem:one-sided-normative}); the defect quantified next is that the rule accepts far more than the support even when every screened bin holds a genuine singleton.

\begin{proposition}[Accepted-Tuple Count: Exact Form and Adversarial Lower Bound]
\label{prop:assoc-count}
With $T_t$, $U_t$, $\mathrm{Lift}_t$, $\mathcal{R}_j$, and $\mathcal{S}^{\mathrm{cert}}$ as in the Setup paragraph above:
\begin{enumerate}
  \item (Exact count.) $|T_t| = \big|\{\,(r_1,\dots,r_t)\in\mathcal{R}_1\times\cdots\times\mathcal{R}_t :\ \mathrm{Lift}_t(r_1,\dots,r_t)\neq\emptyset\,\}\big|$, the number of screened residue tuples with at least one frequency in $[0,N)$. Because $Q_t\le N$ along the cascade, $\bar f_\tau\in[0,Q_t)\subseteq[0,N)$ for every $\tau\in U_t$, so every residue tuple has a non-empty lift and the tuple count is the full rectangle at every stage, $|T_t|=\prod_{j\le t}|\mathcal{R}_j|=|U_t|$. What varies with $Q_t$ is the lift multiplicity, not the tuple count: because $Q_t=\prod_{j\le t}p_j$ is a prefix product of $N=\prod_{\ell\le L}p_\ell$ (\Cref{def:transform-family}), $Q_t\mid N$, so every accepted tuple lifts to exactly $N/Q_t$ frequencies in $[0,N)$, and the frequency count is $|C_t|=\sum_{\tau\in T_t}|\mathrm{Lift}_t(\tau)|=(N/Q_t)\prod_{j\le t}|\mathcal{R}_j|$, exceeding the tuple count by exactly the factor $N/Q_t$; at the final exact-product stage $Q_L=N$ this multiplicity is $N/Q_L=1$, so each tuple has exactly one lift, making $\tau\mapsto\bar f_\tau$ a bijection $T_L\to C_L$ and $|T_L|=|C_L|=\prod_{j\le L}|\mathcal{R}_j|$.
  \item (Spurious survivors.) At the final stage, $|T_L|-|\mathcal{S}^{\mathrm{cert}}|$ counts the spurious accepted frequencies in $C_L$: lifts $\bar f_\tau\in[0,N)$ that pass all-stage screens but are not true tones. Each is a non-true accepted frequency; the one-sided screen admits into $\mathcal{R}_j$ both genuine singleton bins of other true tones and screen-passing collision bins (\Cref{rem:one-sided-normative}).
  \item (Adversarial lower bound: the tuple count multiplies across stages and exceeds the algorithm's budgets.) Suppose the support contains at least two tones whose stage-$j$ residues are distinct, in such a way that the pre-screen-passing bin set has $|\mathcal{R}_j|\ge 2$ at every stage $j$ (it suffices that two fixed support tones $f\neq f'$ have $f\not\equiv f'\pmod{p_j}$ and both bins are pre-screen-passing at every stage, which holds whenever neither is involved in a stage-$j$ collision). Then by part~(1), at the final exact-product stage $Q_L=N$ the accepted-tuple count is the full rectangle product
  \[
    |T_L|=\prod_{j\le L}|\mathcal{R}_j|\ \ge\ 2^{L}.
  \]
  The count multiplies across stages: each added stage $j$ multiplies it by a further factor $|\mathcal{R}_j|\ge 2$, up to the family's maximal admissible stage count.

  For each fixed $k$ and $c$ the prime supply caps the stage count at $L \le \pi([k,ck])$. Across the admissible family, as $k \to \infty$ this cap grows without bound (\Cref{lem:prime-existence}), so $2^L$ genuinely diverges along admissible instances while the label engine's count stays $O(k)$ (\Cref{thm:global-label-count}). The separation is multiplicative versus additive at every instance: the accepted-tuple count multiplies across stages, $|T_L| = \prod_{j\le L}|\mathcal{R}_j| \ge 2^L$, while the global-label engine keeps $|\mathcal{G}_\ell| \le p_\ell$ at every stage of every instance (\Cref{thm:global-label-count}). At attainable instances the tuple count exceeds the algorithm's concrete budgets: for $k = 8$ with $c = 4$ the interval $[8,32]$ contains $\pi([8,32]) = 7$ primes, so the maximal admissible instance has $L = 7$ and $|T_L| \ge 2^{7} = 128 = 16k$; the chain-end candidate budget $k = 8$ is exceeded from $L = 4$ onward ($2^{4} = 16 > 8$), the per-stage budget $ck = 32$ from $L = 6$ onward ($2^{6} = 64 > 32$), and the maximal instance exceeds the chain-end budget by a factor of $16$, whereas the label engine's accumulated count is at most $p_1 = 11$ there. The screened residue rectangle $\mathcal{R}_1\times\cdots\times\mathcal{R}_L$ therefore contains, besides the diagonal tuples $(\,f\bmod p_1,\dots,f\bmod p_L\,)$ of true tones, mixed tuples that stitch residues of distinct true tones, each a valid lattice point of the rectangle but not a true tone, in numbers that multiply with each added stage. Consequently unlabeled tuple intersection cannot operate within the budgets ($k$ at chain end, $ck$ per stage) under which the label engine runs; this budget exceedance at attainable instances, compounded by the unbounded growth of $2^L$ across the family as $k$ grows, is the design-history ground for requiring labels. The smallest instance is the $N=15$, $\mathcal{S}=\{1,2\}$ example of \Cref{ex:n15-mixed-tuple}, where $|\mathcal{R}_1|=|\mathcal{R}_2|=2$ gives $|T_2|=2^2=4=2k$.
\end{enumerate}
\end{proposition}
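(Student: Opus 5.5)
The plan is to handle the three parts in order, relying only on the Chinese Remainder Theorem and the prefix-product structure of $Q_t$ inside the exact-product model of \Cref{def:transform-family}.

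\textbf{Part (1), exact count.} The displayed formula for $|T_t|$ is just the acceptance rule of the Setup restated. The substantive step is that every lift is nonempty. By CRT, $\bar f_\tau$ lies in $[0,Q_t)$. Since $Q_t=\prod_{j\le t}p_j$ divides $N=\prod_{\ell\le L}p_\ell$, we have $Q_t\le N$, so $\bar f_\tau\in[0,N)$ belongs to its own lift. Hence $T_t=U_t$ and $|T_t|=\prod_{j\le t}|\mathcal{R}_j|$.

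\textbf{Lift multiplicity.} Because $Q_t\mid N$, the residue class $\bar f_\tau+Q_t\mathbb{Z}$ meets $[0,N)$ in exactly $N/Q_t$ points. The lifts of distinct tuples are disjoint, since CRT uniqueness gives distinct residues modulo $Q_t$. Their union is exactly $C_t$, because $f\in C_t$ if and only if its residue tuple lies in $U_t$. Summing over tuples gives $|C_t|=(N/Q_t)\prod_{j\le t}|\mathcal{R}_j|$. At $t=L$ we have $Q_L=N$, so each lift is a singleton and $\tau\mapsto\bar f_\tau$ is a bijection $T_L\to C_L$.

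\textbf{Part (2), spurious survivors.} First show $\mathcal{S}^{\mathrm{cert}}\subseteq C_L$, which holds by the definitions. Then $|C_L\setminus\mathcal{S}^{\mathrm{cert}}|=|T_L|-|\mathcal{S}^{\mathrm{cert}}|$ by the bijection. It remains to check that these extra elements are not true tones. Any true tone lying in $C_L$ passes all screens by definition, so it already belongs to $\mathcal{S}^{\mathrm{cert}}$. The qualitative sentence about what enters $\mathcal{R}_j$ is \Cref{rem:one-sided-normative} together with \Cref{thm:phase-consistency-certificate}(1), applied to singleton bins of other tones.

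\textbf{Part (3), lower bound.} Under the hypothesis, two tones $f\ne f'$ are each uncollided at stage $j$. By \Cref{thm:phase-consistency-certificate}(1) both of their bins pass the pre-screen, and the bins are distinct because $f\not\equiv f'\pmod{p_j}$. So $|\mathcal{R}_j|\ge2$ at every stage, and part (1) gives $|T_L|\ge 2^L$. Adding a stage multiplies the count by a factor of at least 2. Unboundedness across the admissible family follows from \Cref{lem:prime-existence}, since $\pi([k,ck])\to\infty$; the contrasting $O(k)$ bound is \Cref{thm:global-label-count}.

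\textbf{Numerical instance.} The primes in $[8,32]$ are $\{11,13,17,19,23,29,31\}$, so $\pi([8,32])=7$ and the maximal instance has $L=7$. Checking the budgets gives $2^4=16>8$, $2^6=64>32$ and $2^7=128=16k$, while $p_1=11$ bounds the label engine. The $N=15$ case is \Cref{ex:n15-mixed-tuple}.

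\textbf{Main obstacle.} The step needing the most care is attainability at the maximal instance. One must exhibit, for the seven-prime product $N$, two tones that do not collide and are each alone in their bins at every stage. I would take $f=0$ and $f'=1$ with support $\{0,1\}$, which has $k'=2\le8$. Their difference is 1, so no stage prime divides it. Each is a singleton at every stage by the short-support argument of \Cref{cond:all-stage-survival}, and therefore $|\mathcal{R}_j|=2$ exactly.
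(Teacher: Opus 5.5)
Your proposal is correct and follows the paper's proof essentially step for step: nonempty lifts because $Q_t \mid N$, disjoint lifts of size $N/Q_t$, the bijection $T_L \to C_L$ at $Q_L = N$, and $|\mathcal{R}_j| \ge 2$ at every stage giving $2^L$. The paper's proof leaves two points implicit, and you supply both: the identity $\mathcal{S} \cap C_L = \mathcal{S}^{\mathrm{cert}}$ in part (2), and the explicit witness support $\{0,1\}$ showing the $L=7$, $k=8$ instance actually meets the part (3) hypothesis.
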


\begin{proof}
(1) Membership in $T_t$ is the single non-empty-lift condition of the Setup paragraph (screen consistency $r_j\in\mathcal{R}_j$ being automatic for $\tau\in U_t$), which is exactly the displayed set. Since $Q_t\le N$ along the cascade, $\bar f_\tau\in[0,Q_t)\subseteq[0,N)$ for every $\tau\in U_t$, so every tuple has a non-empty lift and $|T_t|=|U_t|=\prod_j|\mathcal{R}_j|$ at every stage.

Because $Q_t=\prod_{j\le t}p_j$ is a prefix product of the pairwise-coprime stage primes whose full product is $N$ (\Cref{def:transform-family}), $Q_t\mid N$, and the residue class of $\bar f_\tau$ modulo $Q_t$ meets $[0,N)$ in exactly $N/Q_t$ points, so $|\mathrm{Lift}_t(\tau)|=N/Q_t$ for every $\tau\in U_t$; the lifts of distinct tuples are disjoint (distinct tuples have distinct residues $\bar f_\tau\in[0,Q_t)$), and $C_t$ is their union, so $|C_t|=\sum_{\tau\in T_t}|\mathrm{Lift}_t(\tau)|=(N/Q_t)\prod_j|\mathcal{R}_j|$. At the final stage $Q_L=N$ the multiplicity is $N/Q_L=1$: the CRT map is a bijection $U_L\to C_L\subseteq[0,N)$, each tuple has exactly one lift, and $|T_L|=|C_L|=\prod_j|\mathcal{R}_j|$.

(2) By the completeness fact of the Setup paragraph, every tone of $\mathcal{S}^{\mathrm{cert}}$ is accepted (its residue tuple lies in the rectangle by construction), so at the final stage the surplus is non-true accepted frequencies in $C_L$.

(3) Take two fixed support tones $f\neq f'$ in $[0,N)$ with $f\not\equiv f'\pmod{p_j}$ for every stage $j$, as supplied by the hypothesis. At each stage $j$ the residues $f\bmod p_j$ and $f'\bmod p_j$ are distinct and both lie in $\mathcal{R}_j$ by hypothesis (two distinct pre-screen-passing bins per stage; the count requires nothing about what occupies them); hence $|\mathcal{R}_j|\ge 2$ at every stage. By part~(1), at the final stage $Q_L=N$ every tuple of the rectangle has a unique preimage, so $|T_L|=\prod_{j\le L}|\mathcal{R}_j|\ge 2^{L}$; the count multiplies by a further factor $|\mathcal{R}_j|\ge 2$ with each added stage. For each fixed $k$ and $c$ the stage count is capped at $L \le \pi([k,ck])$; across the admissible family that cap grows without bound with $k$ (\Cref{lem:prime-existence}); the multiplicative count already exceeds the algorithm's candidate budgets at small attainable instances (at $k = 8$, $c = 4$: $\pi([8,32]) = 7$; $2^{L} > k = 8$ from $L = 4$ onward, $2^{L} > ck = 32$ from $L = 6$ onward, and $2^{7} = 128$ exceeds both budgets at the maximal $k=8$ instance), and diverges along the family as $k$ grows. The surplus tuples are the mixed CRT combinations of residues drawn from distinct true tones, each a valid lattice point of the screened rectangle but not a true tone; thus at attainable instances the tuple rule exceeds the concrete budgets ($k$ at chain end, $ck$ per stage) within which the label engine operates.
\end{proof}

\begin{example}[Residue-set intersection admits mixed CRT tuples: $N=15$]
\label{ex:n15-mixed-tuple}
The unrestricted intersection of \Cref{eq:candidate-set} is not by itself a support-recovery mechanism: genuine singleton residues at every stage can still combine into mixed cross-frequency tuples absent from the support. Take illustrative small moduli (smaller than the engineered $[k, ck]$ regime, used purely to expose the mechanism) $N = 15$, $p_1 = 3$, $p_2 = 5$, with support $\mathcal{S} = \{1, 2\}$ and $k = 2$. The screened residue sets are $\mathcal{R}_1 = \{1, 2\}$ (from $1, 2 \bmod 3$) and $\mathcal{R}_2 = \{1, 2\}$ (from $1, 2 \bmod 5$); each residue is a genuine singleton. Then
\begin{equation*}
\begin{aligned}
C_2 &= \{ f \in [0,15) : f \bmod 3 \in \{1,2\} \;\wedge\; f \bmod 5 \in \{1,2\} \} \\
    &= \{1, 2, 7, 11\},
\end{aligned}
\end{equation*}
which contains the two spurious mixed tuples $(r_1, r_2) = (1, 2) \to f = 7$ and $(2, 1) \to f = 11$ in addition to the true support. Thus $|C_2| = 4 > k = 2$: all-stage singleton residues do not imply $k$ candidates.

In the tuple language of the Setup paragraph this is $T_2 = U_2$ with $|T_2| = |\mathcal{R}_1|\,|\mathcal{R}_2| = 4 = 2k$, the $L=2$ instance of the $2^L$ adversarial lower bound of \Cref{prop:assoc-count}(3). Under the screen-only intersection rule the tuple count multiplies across stages and already reaches $|T_2| = 2k$ here, exceeding the chain-end budget $k$; per-bin B\'ezout recovery instead reads one global label per bin, recovering a genuine singleton exactly (\Cref{lem:global-label-recovery}(3)), while a deterministic worst-case soundness rule is obstructed by the measure-zero phantom (\Cref{prop:two-tone-phantom}, \Cref{rem:bezout-no-discharge}). This is exactly the obstruction removed by the global-label engine (\Cref{thm:global-label-count}), which intersects per-bin global labels rather than multiplying per-stage residue sets.
\end{example}

\subsection{Per-Bin B\'ezout Recovery and the Measure-Zero Phantom (Superseded Path)}

The superseded accumulated-residue cascade tracked candidates by their accumulated CRT residues across stages rather than by per-stage singleton isolation alone. CRT tuple uniqueness on the full residue tuple $(r_1, \ldots, r_L)$ does not imply that any individual stage $\ell$ exhibits a singleton bin for every frequency: a frequency $f$ may collide with distinct frequencies in different stages while remaining unique modulo $Q_L = \prod_{j=1}^{L} p_j$. The per-bin B\'ezout recovery discussed in this subsection is the superseded per-bin reading rule; its measure-zero screen-passing phantom is the soundness gap that the final verifier of \Cref{thm:verifier-soundness}, not the cascade, closes.

\emph{The per-bin reading rule.} Consider a stage-$t$ bin screened as a candidate-singleton and read its de-rotated phasors at the three shifts $\{0,31,63\}$. The shift increments are $\Delta_1=31$ and $\Delta_2=32$ with $\gcd(31,32)=1$, so by \Cref{thm:coprime-bezout} the measured phasors of a genuine singleton recover its exact global frequency $g\in[0,N)$ by the B\'ezout combination $63g-31g$ and then $32g-31g\pmod N$; that exact-recovery statement is proved in the body as \Cref{lem:global-label-recovery}(3) and is not restated as an appendix result. What this subsection records is the complementary negative fact: the same reading rule is not worst-case sound, because at every $m\ge 2$ a measure-zero coefficient slice of $m$-tone collisions passes the full exact screen and makes the rule return a phantom; the two witness propositions are stated, as screen-soundness results, in \Cref{app:screen-analysis} (\Cref{prop:two-tone-phantom,prop:higher-order-phantom}). Generic rejection of collisions (\Cref{thm:generic-collision-rejection}) does not repair that, and is not needed for correctness (\Cref{rem:generic-rejection-not-needed}).

\begin{remark}[Generic rejection is not needed for correctness]
\label{rem:generic-rejection-not-needed}
For generic (Lebesgue-almost-every) complex coefficients on a fixed multi-tone support, an $m \ge 2$ bin fails the full exact screen; that statement is \Cref{thm:generic-collision-rejection}. It is recorded there and deliberately not used here, because neither of the two roles it might be asked to fill actually requires it.

\emph{Not needed for the count.} \Cref{thm:global-label-count} bounds the per-stage and accumulated label sets by $|\mathcal{G}_\ell| \le p_\ell \le ck$ and $|\mathcal{G}| \le p_1 \le ck$ deterministically for every input spectrum, with no association assumption and without assuming that any bin is a genuine singleton. That bound is unconditional, whereas a generic label count would hold only off a measure-zero coefficient slice, and off exactly the slice that \Cref{prop:two-tone-phantom,prop:higher-order-phantom} exhibit. No generic label-count statement is therefore made anywhere in this paper.

\emph{Not needed for correctness.} Correctness rests on one-sidedness and final verification rather than on rejecting collisions. A screen-passing phantom is admitted; it is one label among the $\le p_\ell$ already counted, and it is caught only if it survives the global-label intersection and is used in the reconstructed candidate, where the sound residual verifier rejects that candidate and the input is routed to the dense fallback (\Cref{rem:one-sided-normative}, \Cref{thm:verifier-soundness}, \Cref{thm:hybrid-correctness}). Generic rejection would raise the fraction of inputs on which the sparse path completes; it would not change what is guaranteed.

\emph{Two distinct negative facts.} The Cartesian tuple-rectangle count $|T_t| = \prod_{j\le t}|\mathcal{R}_j|$ multiplies deterministically across stages and at attainable instances exceeds the algorithm's concrete budgets, $k$ at chain end and $ck$ per stage (\Cref{prop:assoc-count}); separately, per-bin label recovery is not worst-case sound at any $m \ge 2$ (\Cref{prop:two-tone-phantom,prop:higher-order-phantom}). Neither affects the operative candidate count, which the global-label engine establishes as a proved $O(k)$ bound by intersecting the at-most-one global label emitted per candidate-singleton under the total rule of \Cref{def:total-recovery} (\Cref{thm:global-label-count}, \Cref{rem:no-product-blowup}), independently of any worst-case rectangle bound. The residual object is purely one of soundness: an injected phantom label, discharged by \Cref{thm:verifier-soundness}.
\end{remark}

\begin{remark}[B\'ezout per-bin recovery bounds the count but not the soundness]
\label{rem:bezout-no-discharge}
Promoting the per-bin B\'ezout reading rule of this subsection to one global label per candidate-singleton and intersecting the per-stage label sets does resolve the candidate-count role: the count is the proved $O(k)$ bound of \Cref{thm:global-label-count}. What per-bin recovery does not settle is soundness. The screen-passing phantoms of \Cref{prop:two-tone-phantom,prop:higher-order-phantom} are exactly the one-sidedness disclaimed in the \emph{Scope (componentwise only)} paragraph of \Cref{rem:bezout-scope} (following \Cref{thm:coprime-bezout}): that argument proves only that two distinct on-grid tones cannot individually satisfy both increment congruences; it does not rule out that a phasor sum of $m\ge 2$ tones reproduces, on a measure-zero coefficient slice, the three observations of a genuine singleton (the two-tone witness $N=765{,}049$, tones $\{0,d\}$, is the minimal such case). A phantom injected by such a bin is one label among the $\le p_\ell$ already counted, so it does not inflate the bound; and if it survives the intersection and enters the reconstructed candidate, that candidate is caught downstream, the final verifier of \Cref{thm:verifier-soundness} rejecting any reconstruction that is not the exact spectrum and routing the input to the dense fallback (\Cref{rem:false-accept} exhibits the companion magnitude-screen false accept that the verifier also catches). Correctness is therefore independently secured by \Cref{thm:verifier-soundness}; the only object the recovery leaves open is the deterministic per-bin soundness rule, not the count.
\end{remark}

\section{Deferred Screen Analysis: Supporting Lemmas, Genericity, and Escalation}
\label{app:screen-analysis}

This appendix collects the supporting material behind the five operative statements of \S\ref{sec:screens}: the B\'ezout co-primality theorem with its scope and intuition remarks (\Cref{thm:coprime-bezout}, \Cref{rem:bezout-scope}, \Cref{rem:coprime-intuition,rem:invariance-intuition,rem:multistage-intuition}), the generic (measure-zero-exceptional) collision-rejection theorem (\Cref{thm:generic-collision-rejection}), the shift-triple design notes with the arithmetic-shift Hankel/Prony escalation lemma (\Cref{lem:arith-shift-hankel-rank}) and its contrast with a full Prony solve (\Cref{rem:prony-relation}), the magnitude and phase surrogate lemmas (\Cref{lem:singleton-cv}, \Cref{lem:phase-linearity}), and the cross-stage collision-multiplicity estimates (\Cref{lem:collision-rare}). Nothing here is required for the correctness chain of \S\ref{sec:main-theorem}, which rests on one-sidedness (\Cref{rem:one-sided-normative}) and verifier-gated acceptance (\Cref{thm:verifier-soundness}, \Cref{thm:hybrid-correctness}).

\subsection{B\'ezout Co-Primality and Intuition}

\begin{theorem}[Co-prime Time Shifts Guarantee via B\'ezout's Identity]
\label{thm:coprime-bezout}
Let three time shifts be $\{0, \Delta_1, \Delta_1 + \Delta_2\}$ where $\gcd(\Delta_1, \Delta_2) = 1$, and let $f_1, f_2 \in \{0, 1, \ldots, N-1\}$ be on-grid frequencies. The two increment congruences
\begin{align}
(f_1 - f_2) \cdot \Delta_1 &\equiv 0 \pmod{N} \\
(f_1 - f_2) \cdot \Delta_2 &\equiv 0 \pmod{N}
\end{align}
hold simultaneously if and only if $f_1 = f_2$ (i.e., the frequencies cannot be distinct). The statement is made in the on-grid convention alone: frequencies are integers in $[0,N)$ and both congruences are taken modulo $N$, which is the convention in which the algorithm reads its labels. No hypothesis relating $\Delta_1$ or $\Delta_2$ to $N$ is required.
\end{theorem}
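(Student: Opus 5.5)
The plan is a direct B\'ezout argument carried out in the integers, followed by a size argument that uses the on-grid range. Set $d \coloneqq f_1 - f_2 \in \mathbb{Z}$. Since $f_1, f_2 \in \{0,\dots,N-1\}$, we have $|d| \le N-1 < N$.

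The reverse direction is immediate. If $f_1 = f_2$ then $d = 0$, and both congruences hold trivially.

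For the forward direction, assume both congruences hold, so $N \mid d\Delta_1$ and $N \mid d\Delta_2$. Because $\gcd(\Delta_1,\Delta_2)=1$, B\'ezout's identity (as recalled in \Cref{def:coprime-shifts}) supplies integers $a,b$ with $a\Delta_1 + b\Delta_2 = 1$; for the operative triple, $(a,b)=(-1,1)$ as in \Cref{lem:global-label-recovery}(3). Multiplying by $d$ gives
\[
  d \;=\; a\,(d\Delta_1) + b\,(d\Delta_2).
\]
Each summand on the right is divisible by $N$, and divisibility by $N$ is preserved under integer linear combinations, so $N \mid d$. Together with $|d| < N$, this forces $d = 0$, that is, $f_1 = f_2$.

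The only point that needs care, and the nearest thing to an obstacle, is to make clear why no hypothesis tying $\Delta_1$ or $\Delta_2$ to $N$ is required. No modular inverse of $\Delta_1$ or $\Delta_2$ modulo $N$ is ever used: the B\'ezout combination is formed over $\mathbb{Z}$ and only afterwards reduced modulo $N$. This step is valid for every $N$, whereas the modular inverse $\Delta_2^{-1} \bmod N$ used for label decoding in \Cref{lem:global-label-recovery} needed $N$ odd. The on-grid range is what turns ``$d \equiv 0 \pmod N$'' into ``$d=0$''. I would also add a sentence noting that this is the integer form of the angular corollary in \Cref{def:coprime-shifts}: the congruences modulo $N$ are the same as $(\nu_1-\nu_2)\Delta_i \equiv 0 \pmod{2\pi}$ with $\nu_i = 2\pi f_i/N$.
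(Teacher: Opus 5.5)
Your proposal is correct and follows the paper's proof essentially step for step. The B\'ezout combination $a\Delta_1 + b\Delta_2 = 1$, formed over $\mathbb{Z}$, gives $N \mid (f_1 - f_2)$, and the on-grid bound $|f_1 - f_2| < N$ then forces $f_1 = f_2$; your remark that no modular inverse of $\Delta_1$ or $\Delta_2$ is ever used correctly explains why no hypothesis linking the increments to $N$ is needed.
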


\begin{proof}
If $f_1 = f_2$ then $f_1 - f_2 = 0$ and both congruences hold. Conversely, suppose both hold. Since $\gcd(\Delta_1, \Delta_2) = 1$, B\'ezout's identity supplies integers $a, b$ with $a\Delta_1 + b\Delta_2 = 1$. Multiplying the first congruence by $a$, the second by $b$, and adding them gives
\begin{align}
f_1 - f_2 \;=\; (f_1 - f_2)(a\Delta_1 + b\Delta_2) \;\equiv\; 0 \pmod{N} .
\end{align}
Both frequencies lie in $[0, N)$, so $|f_1 - f_2| < N$, and the only multiple of $N$ of magnitude less than $N$ is $0$. Hence $f_1 = f_2$.
\end{proof}

\begin{remark}[Scope of the B\'ezout read]
\label{rem:bezout-scope}
\emph{Angular restatement.} The same B\'ezout step in the normalized angular convention (frequencies $\nu \in [0, 2\pi)$, congruences modulo $2\pi$) yields $\nu_1 = \nu_2$. For on-grid frequencies the two readings are the same statement, since $\nu = 2\pi f/N$ turns $(\nu_1 - \nu_2)\Delta \equiv 0 \pmod{2\pi}$ into $(f_1 - f_2)\Delta \equiv 0 \pmod{N}$; the theorem is stated on the grid because that is where the algorithm applies it.

\emph{Scope (componentwise only).} This argument proves uniqueness componentwise for a single pair $(f_1, f_2)$: two distinct on-grid frequencies cannot individually satisfy both increment congruences. It does not by itself exclude every multi-tone phasor-sum match, i.e. it does not rule out that a superposition of $m \ge 2$ tones reproduces, on a measure-zero coefficient slice, the three singleton-consistent observations of a genuine singleton (\Cref{prop:two-tone-phantom} exhibits an explicit two-tone phantom, so a two-tone false accept is not excluded by this lemma). That one-sidedness is discharged by the sound verifier (\Cref{thm:verifier-soundness}), not by this lemma.

\emph{Practical Implementation:} With shifts $\{0, 31, 63\}$ where $31 + 32 = 63$ and $\gcd(31, 32) = 1$, generic multi-tone collisions fail the exact screen, a fact established from the screen's own ratio/root conditions in \Cref{thm:generic-collision-rejection}, and not inherited from the pairwise argument of this lemma, whose scope is stated above. Degenerate or unresolved collisions whose bins fail the $S=3$ screen emit no label (optionally re-examined by arithmetic-shift Prony escalation); the dense FFT fallback is triggered at stage or chain level per \Cref{thm:hybrid-correctness}.
\end{remark}

\begin{remark}
\label{rem:coprime-intuition}
B\'ezout's identity is the engine behind the co-prime shift screen. Geometrically, two frequencies $f_1 \neq f_2$ can produce phase coincidence at any single time shift (there is always some offset at which their rotating phasors happen to align), and they can even coincide at two time shifts if the shifts are commensurate. The coprimality condition $\gcd(\Delta_1, \Delta_2) = 1$ blocks the third coincidence: B\'ezout constructs an integer combination $a \Delta_1 + b \Delta_2 = 1$ that forces $f_1 = f_2$ whenever all three alignments hold. In short, coprime shifts make it arithmetically impossible for a single distinct pair to imitate a singleton at all three shifts; this is a componentwise, pairwise statement and does not by itself exclude a multi-tone phasor sum (\Cref{rem:one-sided-normative}).
\end{remark}

\begin{remark}
\label{rem:invariance-intuition}
Operationally, a singleton is a clean tone whose decimated magnitude is set once by its underlying amplitude and then stays fixed no matter how the time-window is shifted: shifting the window only rotates phase, it does not change amplitude. Generic collisions break this invariance because the sum of two rotating phasors oscillates in magnitude as the shift varies, producing a detectable signature (the measure-zero exception is \Cref{prop:two-tone-phantom}). The normalized-MAD implementation surrogate (\Cref{def:cv-singleton-test}) is a quantitative exploit of this qualitative distinction.
\end{remark}

\begin{remark}
\label{rem:multistage-intuition}
The correctness argument has an intuitive shape: every frequency $f$ has a unique residue tuple under pairwise-coprime moduli (that is just the Chinese Remainder Theorem); and as long as the product of moduli reaches $N$, two distinct frequencies cannot share all of their residues. What the multi-stage construction adds is that each stage refines the candidate set by intersecting with the global labels emitted by screened bins at that stage (cost $O(k \log k)$ per stage on $\Theta(k)$-sized DFTs), so that by the end of the $L$-stage chain, whose primes multiply to exactly $N$ by design (\Cref{def:transform-family}), the candidate set has been narrowed to an $O(k)$-size set, a count proved by the global-label engine (\Cref{thm:global-label-count}). No accumulated modulus is tracked at run time; the coverage is fixed when $N$ is chosen.

Under all-stage genuine-singleton survival of the support this set contains the true support (\Cref{thm:global-label-completeness}); a true tone that collides at some stage is not guaranteed in it and is recovered by the dense fallback. The formal guarantee is deliberately containment rather than equality: for inputs outside the survival condition, surplus global labels and measure-zero screen-passing phantoms (\Cref{prop:two-tone-phantom}) can survive in $\mathcal{G}$ (under survival itself no phantom can arise, an $m \ge 2$ collision of support tones being excluded by definition), and because the algorithm cannot certify survival from the data, exactness of the returned spectrum is secured by the verifier-gated acceptance of \Cref{thm:verifier-soundness} (PASS) or the dense fallback (FAIL), not by candidate construction alone.

The final guarantee is therefore both correctness (verifier-gated, with the dense fallback catching any corner case, \Cref{thm:hybrid-correctness}) and efficiency ($L = O(\log_k N)$ stages suffice, and the work per stage is $k$-sized, giving $O(k \log N)$).
\end{remark}

\subsection{Generic Collision Rejection}

\begin{theorem}[Generic Collision Rejection (Measure-Zero Exceptional Slice)]
\label{thm:generic-collision-rejection}
Fix the operative shift triple $\{0, \Delta_1, \Delta_1 + \Delta_2\} = \{0, 31, 63\}$, so $\Delta_1 = 31$, $\Delta_2 = 32$, $\gcd(\Delta_1, \Delta_2) = 1$ and $\Delta_2 - \Delta_1 = 1$; the theorem is stated for this triple alone, the only one the algorithm uses and the only one for which the exact screen of \Cref{def:exact-screen} is defined (general triples need a further hypothesis and are not claimed here; scope note at the end of Step~3). Fix any support placing $m \ge 2$ distinct on-grid frequencies in bin $r$; since frequencies lie on the finite grid, genericity is over the coefficients for each fixed support, not over frequency placements. In the noiseless on-grid model, the coefficient vectors $A \in \mathbb{C}^m$ for which the bin passes the full exact screen of \Cref{def:exact-screen} form a Lebesgue-measure-zero exceptional set in $\mathbb{C}^m$.

The componentwise pairwise argument of \Cref{thm:coprime-bezout} alone does not control phasor sums (\Cref{rem:bezout-scope}); this is a one-sided rejection only, since phantoms persist on the measure-zero coefficient slice at every $m \ge 2$ (\Cref{prop:two-tone-phantom,prop:higher-order-phantom}) and exactness is secured by the sound final verifier of \Cref{thm:verifier-soundness}.
\end{theorem}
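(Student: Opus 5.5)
The plan is to drop the magnitude and class conditions and keep only root membership, which alone confines the passing coefficients to a finite union of zero sets of nonzero polynomials. Write $\zeta_i \coloneqq e^{+j2\pi f_i/N}$ for the $m$ distinct support frequencies $f_1,\dots,f_m$ in bin $r$. These are pairwise distinct $N$th roots of unity, because the $f_i$ are distinct points of $[0,N)$. By the aliasing identity of \S\ref{subsec:sampling-operator},
\[
z_\sigma = \tfrac{p_\ell}{N}\sum_{i=1}^m A_i\,\zeta_i^{\sigma}, \qquad \sigma\in\{0,31,63\}.
\]
Suppose a coefficient vector $A$ passes the full exact screen of \Cref{def:exact-screen}. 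Condition~1 makes $z_0,z_{31}\neq 0$. Condition~2 then requires $w = u_{32}u_{31}^{-1} = z_{63}z_0/z_{31}^2$ to equal $\omega^g$ for some $g\in\{0,\dots,N-1\}$. Hence the passing set lies inside
\[
\bigcup_{g=0}^{N-1} Z(P_g), \qquad P_g(A) \coloneqq \Bigl(\sum_i A_i\zeta_i^{63}\Bigr)\Bigl(\sum_i A_i\Bigr) - \omega^{g}\Bigl(\sum_i A_i\zeta_i^{31}\Bigr)^{2},
\]
where the common factor $(p_\ell/N)^2$ has been cleared. Conditions~1 and~3 are simply discarded; passing the screen can only shrink this set further.

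The next step uses the standard fact that the zero set of a polynomial on $\mathbb{C}^m\cong\mathbb{R}^{2m}$ that is not identically zero has Lebesgue measure zero. (View it as a real polynomial in the real and imaginary parts. Its real and imaginary parts cannot both vanish identically, and the zero set of a nonzero real polynomial is null, by Fubini and induction on the number of variables.) A finite union of null sets is null, so it remains to show that each $P_g$ is a nonzero polynomial when $m\ge 2$.

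This nonvanishing is the only substantive step, and I would settle it by reading off the diagonal monomials. $P_g$ is a quadratic form in $A$. The coefficient of $A_i^2$ is
\[
\zeta_i^{63} - \omega^{g}\zeta_i^{62} \;=\; \zeta_i^{62}\bigl(\zeta_i - \omega^{g}\bigr).
\]
Here the identity $63-62 = \Delta_2-\Delta_1 = 1$ for the operative triple is exactly what is used. Since $\zeta_i^{62}\neq 0$, this coefficient vanishes only if $\zeta_i = \omega^g$. The $\zeta_i$ are pairwise distinct and $m\ge 2$, so at most one index $i$ satisfies $\zeta_i=\omega^g$. Some other $A_{i'}^2$ therefore carries a nonzero coefficient, and because distinct monomials are linearly independent, $P_g\not\equiv 0$.

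This also shows why the statement is confined to $\{0,31,63\}$. For a general triple the diagonal coefficient becomes $\zeta_i^{2\Delta_1}\bigl(\zeta_i^{\Delta_2-\Delta_1} - \omega^g\bigr)$. Nonvanishing would then need the powers $\zeta_i^{\Delta_2-\Delta_1}$ to be distinct, which is an extra hypothesis. This matches the scope note in the statement.

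The conclusion is one-sided, since nothing here excludes the slice itself; \Cref{prop:two-tone-phantom} shows that $Z(P_g)$ can meet the passing set nontrivially. That is consistent with measure zero, and correctness is left to \Cref{thm:verifier-soundness}. The main obstacle is confirming that every $P_g$, rather than just a generic one, is a nonzero polynomial. The diagonal-coefficient argument handles all $g$ uniformly, and I expect no other difficulty.
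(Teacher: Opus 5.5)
Your proof is correct and follows the paper's argument essentially step for step. The paper also reduces a screen pass to the vanishing of one of the finitely many quadratics $Q_g = z_{63}z_0 - \omega^{g} z_{31}^2$, and shows each $Q_g$ is nonzero through the diagonal coefficient $\kappa^2\alpha_i^{62}(\alpha_i-\omega^{g})$. It then concludes from the nullity of the finite union of zero sets, and its Step~1 parametrization of the passing triples is not needed for the measure-zero conclusion.
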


\begin{proof}
\emph{Step 1: the exact passing set, computed from \Cref{def:exact-screen} itself.} Suppose bin $r$ passes and emits the label $g$. Condition~1 gives $|z_0| = |z_{31}| = |z_{63}| > 0$, so $z_0 \ne 0 \ne z_{31}$ and the ratios are defined; condition~2 then reads
\[
  w \;=\; u_{32}\,u_{31}^{-1} \;=\; \frac{z_{63}}{z_{31}}\cdot\frac{z_0}{z_{31}} \;=\; \frac{z_{63}\,z_0}{z_{31}^{2}} \;=\; \omega^{g},
\]
and condition~3 restricts $g$ to $g \equiv r \pmod{p_\ell}$. Conversely, putting $c = z_0 \ne 0$ and $q = z_{31}/z_0$ (of unit modulus by condition~1), the triples meeting conditions~1 and~2 with label $g$ are exactly
\begin{equation}
  \mathcal{P}_g \;=\; \bigl\{\, c\,(1,\ q,\ q^{2}\omega^{g}) \;:\; c \in \mathbb{C}\setminus\{0\},\ |q| = 1 \,\bigr\},
  \label{eq:passing-set}
\end{equation}
a set of real dimension three in $\mathbb{C}^3$. This is strictly larger than the set of singleton patterns: $\mathcal{P}_g$ meets the singleton pattern of the tone $g$ only at the single value $q = \omega^{31 g}$, while every other $q$ on the unit circle passes as well. The proof below therefore uses \eqref{eq:passing-set} and never a singleton-pattern description.

\emph{Step 2: the passing set is cut out by finitely many quadratics.} By Step~1, passing with label $g$ forces $z_{63}\,z_0 - \omega^{g} z_{31}^{2} = 0$. In the noiseless on-grid model the three observations are linear in the coefficient vector $A = (A_1, \ldots, A_m) \in \mathbb{C}^m$ carried by the fixed support $f_1, \ldots, f_m$ of bin $r$: by the decimation aliasing identity (\Cref{subsec:sampling-operator}), $z_\sigma = \kappa \sum_{i=1}^{m} A_i \alpha_i^{\sigma}$ with $\kappa = p_\ell/N \ne 0$ and $\alpha_i = \omega^{f_i}$. Define the homogeneous quadratic
\[
  Q_g(A) \;\coloneqq\; z_{63}(A)\,z_0(A) \;-\; \omega^{g}\,z_{31}(A)^{2},
\]
a holomorphic polynomial on $\mathbb{C}^m$. The pass set of bin $r$ is contained in $\bigcup_{g=0}^{N-1}\{A \in \mathbb{C}^m : Q_g(A) = 0\}$, a union of finitely many such zero sets.

\emph{Step 3: no $Q_g$ vanishes identically.} Reading off the coefficient of $A_i^2$ in $Q_g$,
\[
  [A_i^2]\,Q_g \;=\; \kappa^{2}\bigl(\alpha_i^{63} - \omega^{g}\alpha_i^{62}\bigr) \;=\; \kappa^{2}\,\alpha_i^{62}\,\bigl(\alpha_i - \omega^{g}\bigr),
\]
which vanishes if and only if $\alpha_i = \omega^{g}$, i.e.\ $f_i = g$, since $\alpha_i \ne 0$ and $t \mapsto \omega^{t}$ is a bijection of $\mathbb{Z}_N$ onto the $N$th roots of unity. Because $m \ge 2$ and the $f_i$ are pairwise distinct, at most one of them equals $g$, so at least one diagonal coefficient is nonzero and $Q_g \not\equiv 0$ for every $g \in \{0, \ldots, N-1\}$, with no rank hypothesis on the observation map.

\emph{Scope note on general triples.} The exponents $63$ and $62 = 2\Delta_1$ are those of the operative triple, where $\Delta_2 - \Delta_1 = 1$. A general co-prime-increment triple gives $[A_i^2]\,Q_g = \kappa^2 \alpha_i^{2\Delta_1}\bigl(\alpha_i^{\Delta_2 - \Delta_1} - \omega^{g}\bigr)$ and would need the further hypothesis $\gcd(\Delta_2 - \Delta_1, N) = 1$, absent which $f_i \mapsto f_i(\Delta_2 - \Delta_1) \bmod N$ is many-to-one and several diagonal coefficients can vanish together. We expect the conclusion to extend to any triple meeting that hypothesis; it is not proved here.

\emph{Step 4: conclusion.} A holomorphic polynomial not identically zero on $\mathbb{C}^m$ has zero set of Lebesgue measure zero in $\mathbb{C}^m \cong \mathbb{R}^{2m}$. Each $\{Q_g = 0\}$ is therefore Lebesgue-null, and so is the finite union containing the pass set. Hence, for each fixed support of $m \ge 2$ distinct on-grid frequencies in bin $r$, Lebesgue-almost-every coefficient vector fails the exact screen. The exceptional set is nonempty (the witnesses named in the statement inhabit it), so the rejection is one-sided and exactness remains verifier-gated (\Cref{thm:verifier-soundness}).
\end{proof}

\begin{remark}[Screen Failure, Escalation, and the Two Collision Outcomes]
\label{rem:collision-escalation}
A collision that fails the exact screen of \Cref{def:exact-screen} is excluded from the stage's screened set and its bin emits no label, a normal per-bin event; the input reaches the $O(N\log N)$ dense-FFT fallback only through the stage- and chain-level routing rule of \Cref{thm:hybrid-correctness} (a no-consistent-label stage, a budget overflow, a verifier FAIL, or an envelope violation); for higher-order $m \ge 3$ collisions the algorithm may instead escalate to an arithmetic-shift Hankel/Prony rank test of order $S \ge 2m - 1$, on an increment satisfying the anti-aliasing condition $\gcd(\Delta, N) = 1$ (a separate sample acquisition; see \Cref{lem:arith-shift-hankel-rank}). That escalation is optional and lies outside the primary algorithm and its cost theorems: \Cref{alg:multistage_oklogn} never invokes it, it needs a second acquisition at arithmetic shifts, and it is charged in none of \Cref{thm:sampling-bluestein,thm:sparse-path-complexity,thm:cost-comparison-model,thm:hybrid-correctness}. An implementation that enables it must re-derive those bounds; no guarantee of this paper depends on it. The screen is one-sided, however: a measure-zero coefficient slice at any $m \ge 2$ passes the full screen (the two-tone witness of \Cref{prop:two-tone-phantom}), and such a screen-passing phantom is not caught here at all. It is admitted into the candidate set and caught only by the sound final verifier of \Cref{thm:verifier-soundness}, which rejects the non-exact reconstruction and routes to the dense fallback.
\end{remark}

The screen is therefore a fast one-sided screen on the co-prime-increment shift design: it is necessary (every true singleton passes, \Cref{thm:phase-consistency-certificate}) and rejects generic collisions (\Cref{thm:generic-collision-rejection}), but it is not sufficient at any $m \ge 2$. Measure-zero screen-passing phantoms exist already at $m = 2$ (the explicit two-tone witness of \Cref{prop:two-tone-phantom}), so a passing bin is only a candidate-singleton and correctness is secured not by this screen but by the sound final verifier of \Cref{thm:verifier-soundness}. It is not described as a Hankel/Prony rank-one test in the increment-coprime regime, because the triple $\{0,\Delta_1,\Delta_1+\Delta_2\}$ of \Cref{def:hankel-matrix} is not an arithmetic progression.

\subsection{Screen-Passing Phantoms: Measure-Zero Witnesses at Every Multiplicity}

These two propositions record the exact screen-soundness limit of the per-bin reading rule: at every collision multiplicity $m \ge 2$ a measure-zero adversarial coefficient slice passes the full exact screen of \Cref{def:exact-screen} and injects a phantom label. They are the witnesses behind the one-sided normative reading of \Cref{rem:one-sided-normative}, and the superseded per-bin recovery record of \Cref{app:tuple-negative} cites them as the reason no deterministic worst-case per-bin soundness rule follows from the screen.

\begin{proposition}[Exact Two-Tone Screen-Passing Phantom]
\label{prop:two-tone-phantom}
The per-bin reading rule does not exclude two-tone false accepts. There is a structured, measure-zero adversarial complex-coefficient configuration on $m=2$ colliding true tones whose bin passes the full exact screen of \Cref{def:exact-screen} ($D_{\max} = 0$ and exactly singleton-consistent increment ratios) yet is a non-singleton, so B\'ezout recovery reads a phantom frequency and injects it into the per-bin recovered-label set. Two is the minimal such multiplicity, a genuine singleton being read exactly (\Cref{lem:global-label-recovery}(3)).

\emph{Two-tone witness ($m=2$; exact algebraic identity).} Take $N=765{,}049=23\cdot 29\cdot 31\cdot 37$, stage prime $p=23$, residue class $r=0$, and $d=N/31=24{,}679$, with the two colliding true tones $\{0,d\}$ (both $\equiv 0\bmod 23$). Let $\alpha=e^{+j2\pi/31}$ and choose coefficients $A_0=-\alpha$, $A_d=1+\alpha$. The bin value at shift $\sigma$ is $A_0+A_d\,e^{+j2\pi\sigma d/N}=A_0+A_d\,e^{+j2\pi\sigma/31}$ (since $d/N=1/31$), giving the triple $(1,1,\alpha^2)$ at $\{0,31,63\}$. This triple is byte-for-byte identical to that of a genuine singleton at the phantom $h=2d=49{,}358$ with coefficient $1$ (since $e^{+j2\pi\sigma h/N}$ equals $1,1,\alpha^2$ at the same shifts): the exact screen statistic is $D_{\max} = 0$, the increment ratios are exactly those of that singleton, and the per-bin B\'ezout recovery returns $g=49{,}358$, which is neither true tone ($h\notin\{0,24{,}679\}$, while $h\equiv 0\bmod 23$). Thus a measure-zero two-tone slice already false-accepts.
\end{proposition}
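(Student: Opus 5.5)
The plan is to verify the two-tone witness by direct computation and then run the pass/emit chain of \Cref{def:exact-screen} and \Cref{def:total-recovery} on it. First I check the arithmetic setup. We have $N=23\cdot29\cdot31\cdot37=765{,}049$ and $d=N/31=23\cdot29\cdot37=24{,}679$, so $d\equiv0\pmod{23}$; trivially $0\equiv0$ as well. Both tones therefore lie in class $r=0$ of the $p=23$ stage, and bin $0$ is a genuine two-tone collision. Dividing out the prefactor $p/N$, the aliasing identity of \S\ref{subsec:sampling-operator} gives the bin value at shift $\sigma$ as $Y(\sigma)=A_0+A_d\,\omega^{\sigma d}=-\alpha+(1+\alpha)\alpha^{\sigma}$, because $\omega^{d}=e^{+j2\pi/31}=\alpha$.

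Next I evaluate $Y$ at the three shifts, using $\alpha^{31}=1$. For $\sigma=0$: $Y(0)=-\alpha+1+\alpha=1$. For $\sigma=31$: $Y(31)=-\alpha+(1+\alpha)\cdot1=1$. For $\sigma=63=2\cdot31+1$: $\alpha^{63}=\alpha$, so $Y(63)=-\alpha+(1+\alpha)\alpha=\alpha^2$. The scaled triple is therefore $(p/N)(1,1,\alpha^2)$. I then compare it with a genuine singleton at $h=2d$ with coefficient $1$. By \Cref{thm:singleton-observation-identity} such a singleton has observations $(p/N)\,\omega^{\sigma h}=(p/N)\,\alpha^{2\sigma}$, which equal $(p/N)(1,\alpha^{62},\alpha^{126})=(p/N)(1,1,\alpha^2)$ since $62\equiv0$ and $126\equiv2\pmod{31}$. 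The two triples coincide exactly.

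The observations determine every predicate, so the bin behaves identically to that singleton. All three magnitudes equal $p/N>0$, so $D_{\max}=0$ and condition~1 holds. The ratios are $u_{31}=1$ and $u_{32}=\alpha^2$, hence $w=\alpha^2=\omega^{2d}=\omega^{h}$, an exact $N$th root of unity with unique index $h=49{,}358$ (condition~2). Finally $h=2d\equiv0=r\pmod{23}$ (condition~3). Alternatively, \Cref{thm:phase-consistency-certificate}(1) applied to the identical singleton triple gives all three conditions at once. So \textsc{Recover} emits $h$. Since $h\ne0$ and $h\ne d$, this label is a phantom, and because $A_0,A_d\ne0$ the bin truly holds $m=2$ tones.

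Two claims remain. For the measure-zero claim I would cite \Cref{thm:generic-collision-rejection}, applied to the fixed support $\{0,d\}$: it shows passing coefficient vectors lie in a null set, and this witness belongs to that slice. For minimality, a one-tone bin is a genuine singleton, which \Cref{lem:global-label-recovery}(3) reads exactly, so $m=2$ is minimal. The main obstacle is keeping the residues of the exponents mod $31$ correct ($63\equiv1$, $62\equiv0$, $126\equiv2$); after that everything is substitution.
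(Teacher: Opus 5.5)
Your proposal is correct. The verification of the witness is the same direct computation the paper does: the reductions $63\equiv 1$, $62\equiv 0$, $126\equiv 2 \pmod{31}$, the triple $(1,1,\alpha^2)$, the ratios $u_{31}=1$ and $u_{32}=\alpha^2$, the label $h=2d\equiv 0\pmod{23}$, and the point that the predicates of \Cref{def:exact-screen} depend only on the observed triple and the class $r$. One small gap: when you assert $A_d\neq 0$, say why. It holds because $\alpha\neq -1$, since $31$ is odd.

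The genuine difference is the measure-zero clause.

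\textbf{The paper's route.} It proves the clause in line by linear algebra. The coefficient-to-observation map is the $3\times m$ generalized Vandermonde matrix $E$ on exponents $\{0,31,63\}$. It has rank at least $2$ on distinct tones, because proportional columns would force $(x_i/x_j)^{31}=(x_i/x_j)^{63}=1$, hence $x_i=x_j$ by $\gcd(31,32)=1$. So the impersonation set $\{A: EA\in\mathrm{span}(b_h)\}$ has complex dimension at most $m-1$, a proper linear subspace of $\mathbb{C}^m$.

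\textbf{Your route.} You cite \Cref{thm:generic-collision-rejection}, whose null-set argument is that the diagonal coefficients of the quadratics $Q_g$ do not vanish. This is not circular: that theorem invokes the present proposition only for the nonemptiness of the exceptional set, not for its measure.

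\textbf{What each buys.} Yours gives a stronger conclusion. The entire screen-passing set on the support $\{0,d\}$ is Lebesgue-null, over all labels $g$, including passing triples $c\,(1,q,q^2\omega^g)$ that are not singleton patterns. It does so at the cost of leaning on a heavier appendix theorem. The paper's route is self-contained and identifies the witness's slice concretely as a positive-codimension linear subspace, which is what ``structured'' refers to. The same rank argument is then reused for the $m\ge 3$ witnesses of \Cref{prop:higher-order-phantom}.
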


\begin{proof}
The map from coefficients to the observed triple $(c_0,c_{31},c_{63})\in\mathbb{C}^3$ is, for $m$ distinct tones, the $3\times m$ generalized-Vandermonde $E$ with entries $E_{s,i}=x_i^{\sigma_s}$, $x_i=e^{+j2\pi g_i/N}$, on the row exponents $\sigma_s\in\{0,31,63\}$; on distinct on-grid tones $\operatorname{rank}(E)\ge 2$, since rank one would make two columns proportional, forcing $(x_i/x_j)^{31}=(x_i/x_j)^{63}=1$ and hence $x_i=x_j$ by $\gcd(31,32)=1$ (\Cref{thm:coprime-bezout}). For a target phantom $h$ the singleton-pattern vector $b_h=(1,\eta^{31},\eta^{63})$ with $\eta=e^{+j2\pi h/N}$ defines the \emph{impersonation set} $\{A:EA\in\mathrm{span}(b_h)\}$, the coefficient vectors whose bin reproduces the pattern of a singleton at $h$ (a proper subset of the full screen-passing set characterized in \Cref{thm:generic-collision-rejection}), and at $m=2$ the displayed witness solves $EA=b_h$ exactly: with $g_1=0$, $g_2=d=N/31$ and $h=2d$ one has $e^{+j2\pi\sigma d/N}=e^{+j2\pi\sigma/31}$, so the two colliding tones contribute $(1,1,1)$ and $(1,1,\alpha)$ at $\sigma\in\{0,31,63\}$ while $b_h=(1,1,\alpha^2)$; the choice $A_0=-\alpha$, $A_d=1+\alpha$ gives $-\alpha+(1+\alpha)=1$ at $\sigma=0$ and $\sigma=31$ and $-\alpha+(1+\alpha)\alpha=\alpha^2$ at $\sigma=63$, which is $b_h$ entrywise. The three magnitudes are then $|1|=|1|=|\alpha^2|=1$, equal and strictly positive, so the bin is occupied with $D_{\max}=0$ (and $\mathrm{nMAD}=0$); the increment ratios are $u_{31}=1$ and $u_{32}=\alpha^2$, so $w=u_{32}u_{31}^{-1}=\alpha^2=e^{+j2\pi\,2d/N}$ is exactly an $N$th root of unity with index $g=2d=49{,}358$, and $49{,}358\equiv 0\bmod 23$ is class-consistent. The bin therefore passes every condition of \Cref{def:exact-screen} in exact arithmetic and emits $g=h\notin\mathcal{S}$. These configurations form a positive-codimension (measure-zero) slice of coefficient space: with $\operatorname{rank}(E) \ge 2$ the impersonation set $\{A : EA \in \mathrm{span}(b_h)\}$ has dimension at most $m-1$ over $\mathbb{C}$ and is therefore a proper linear subspace of $\mathbb{C}^m$, so Lebesgue-almost-every coefficient vector on the same tones avoids it. The noiseless on-grid model admits arbitrary complex coefficients, so the witness is a valid adversarial input; hence no deterministic worst-case per-bin soundness rule follows from the screen at $m=2$.
\end{proof}

\begin{proposition}[Higher-Order Screen-Passing Phantoms at Every \texorpdfstring{$m\ge 3$}{m>=3}]
\label{prop:higher-order-phantom}
The same failure occurs at every higher collision multiplicity, and it does so inside the operating model rather than only at one fixed instance. For every $m\ge 3$ there is an admissible instance of \Cref{def:transform-family} with sparsity $k \ge m$, a stage prime $p$, and a residue class carrying $m$ distinct colliding true tones with all coefficients nonzero, whose bin passes the full exact screen of \Cref{def:exact-screen} yet is a non-singleton, so B\'ezout recovery again reads a phantom frequency $h \notin \mathcal{S}$. Together with \Cref{prop:two-tone-phantom} this exhibits a screen-passing phantom at every $m\ge 2$. The quantifier is over collision multiplicities, each realized at its own admissible instance; no single fixed $N$ is claimed to host every $m$, and indeed none can, the model forcing $m \le k \le \min_\ell p_\ell$.

\emph{Three-tone witness ($m=3$; exact by construction).} Take $N=46{,}189=11\cdot 13\cdot 17\cdot 19$, stage prime $p=11$, residue class $r=5$, and the three colliding true tones $\{16,27,38\}$ (all $\equiv 5 \bmod 11$; admissible at $k=8$, $c=3$). The $3\times 3$ generalized-Vandermonde system $E A = b_h$, with rows the $\{0,31,63\}$ powers of the tone phasors and target the singleton pattern of the phantom $h=82$ (also $\equiv 5 \bmod 11$), is invertible for this configuration (nonzero determinant, proved exactly in the cyclotomic field $\mathbb{Q}(\zeta_N)$ by the power-basis argument in the proof; exact cyclotomic arithmetic, not a floating-point check), and the exact solution $A = E^{-1} b_h$ makes the bin's measured triple equal, by construction and exactly in exact arithmetic, to that of a single exponential at $h = 82$: the exact screen statistics are $D_{\max} = 0$ and exactly singleton-consistent increment ratios, and the B\'ezout-recovered frequency is $g=82$, which is not a true tone. All three coefficients are nonzero, so the bin really does hold three tones: the same power-basis argument applied to the three Cramer numerators gives exact nonvanishing, and numerically $A_{16} \approx 9.57494 + 2.76693j$, $A_{27} \approx -23.23567 - 5.54910j$, $A_{38} \approx 14.66073 + 2.78217j$. Such a bin injects the phantom $h=82$ into the per-bin recovered-label set: the failure is one of soundness (a recovered label that is not a true tone), not of the candidate count, which \Cref{thm:global-label-count} bounds unconditionally.
\end{proposition}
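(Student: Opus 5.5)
The plan is to reduce the $m$-tone impersonation problem to a null-vector problem for a generalized Vandermonde matrix with $m+1$ columns, and then to choose the instance so that every $3\times 3$ minor of that matrix is nonzero. Concretely, fix $m\ge 3$. I will pick a sparsity $k\ge m+1$ large enough that $\pi([k,ck])\ge L$ for a stage count $L\ge 2$ chosen below (\Cref{lem:prime-existence}), and form $N=\prod_\ell p_\ell$ as in \Cref{def:transform-family}. Take the stage prime $p=p_1$, a class $r$, and $m+1$ distinct frequencies $f_i=r+ip$, $i=0,\dots,m$, all in class $r$. Set $\mathcal{S}=\{f_1,\dots,f_m\}$ and let $h=f_0$ be the phantom, so $h\notin\mathcal{S}$ and $h\equiv r\pmod p$.

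Write $x_i=\omega^{f_i}$ and let $M=[\,b_h\mid E\,]$ be the $3\times(m+1)$ matrix with entries $M_{s,i}=x_i^{\sigma_s}$, $\sigma_s\in\{0,31,63\}$, where column $0$ is the singleton pattern $b_h$ and columns $1,\dots,m$ form $E$. A coefficient vector $A$ with $EA=\lambda b_h$ for some $\lambda\ne 0$ makes the bin triple equal to $\lambda\kappa^{-1}\cdots$ times the pattern of a single exponential at $h$. By the computation in the proof of \Cref{prop:two-tone-phantom}, such a bin then passes all three conditions of \Cref{def:exact-screen}: the magnitudes are equal and positive, $w=\omega^h$, and $h\equiv r$. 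It therefore emits $h$.

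The problem thus becomes finding a null vector $v=(-\lambda,A_1,\dots,A_m)$ of $M$ with every coordinate nonzero. Suppose every $3\times 3$ minor of $M$ is nonzero. Then $M$ has rank $3$ and its kernel has dimension $m-2\ge 1$. The kernel lies inside the hyperplane $\{v_i=0\}$ exactly when the other $m$ columns have a kernel of dimension $m-2$, that is, rank $m-(m-2)=2$. But any three of those columns have a nonzero minor, so their rank is $3$, a contradiction. Hence the kernel is contained in none of the $m+1$ coordinate hyperplanes. A complex vector space is not a finite union of proper subspaces, so a null vector with full support exists. This gives all $A_i\ne 0$ and $\lambda\ne 0$. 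For $m=3$ the kernel is one-dimensional and this is just Cramer's rule with nonzero minors, which matches the stated three-tone witness.

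The main obstacle is proving that all $3\times 3$ minors $\det\bigl[x_a^{\sigma},x_b^{\sigma},x_c^{\sigma}\bigr]$ are nonzero at these specific roots of unity. I would first try a small-angle argument. Write $x_i=e^{j\theta_i}$ with $\theta_i=2\pi f_i/N$. By choosing $r$ small and $L$ large, all $\theta_i\le 2\pi(r+mp)/N$, which can be made as small as needed because $N\ge k^L$ grows with $L$ while $p\le ck$ stays fixed. Expanding the rows $e^{j\sigma\theta}$ in $\theta$, the minor equals
\[
C\cdot\prod_{a<b}(\theta_b-\theta_a)\cdot\bigl(1+O(\max_i\theta_i)\bigr),
\]
with a nonzero constant $C$. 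Here $C$ is, up to the factor $j^3/2$, the Vandermonde determinant $31\cdot 63\cdot 32$ of the exponents. The product of differences is nonzero because the $\theta_i$ are distinct, and the error term is uniform, so for $L$ large every minor is nonzero. If this estimate proves awkward to control uniformly, the fallback is the exact cyclotomic power-basis argument already cited for the $m=3$ witness in $\mathbb{Q}(\zeta_N)$, applied minor by minor.

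Admissibility then follows directly. The instance has $k\ge m$, $L\ge 2$, and primes in $[k,ck]$, and $\mathcal{S}$ is an $m$-sparse on-grid support. Together with \Cref{prop:two-tone-phantom}, this covers every $m\ge 2$.
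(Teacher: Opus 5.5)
Your argument is correct, but it proves the general-$m$ existence claim by a different route from the paper.

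\textbf{The paper's route.} The paper works at $L=2$ with $k=\max(m,505)$ and $N=pp'$. It takes the support $\{ap : a\in\{0,1,2\}\cup\{4,\dots,m\}\}$ and the phantom $h=3p$. It then certifies only four $3\times 3$ minors: the block $E_T$ on $\{0,p,2p\}$ and its three Cramer numerators. The tool is an exact criterion: if the six exponents $31g_{\pi(2)}+63g_{\pi(3)}$ are pairwise distinct and below $\varphi(N)$, the resulting $\pm1$ polynomial cannot be divisible by $\Phi_N$. The remaining $m-3$ tones are absorbed by the one-parameter perturbation $A_U=\varepsilon\mathbf{1}$, $A_T=E_T^{-1}(b_h-\varepsilon E_U\mathbf{1})$.

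\textbf{Your route.} You make all $\binom{m+1}{3}$ minors nonzero at once through the divided-difference (Wronskian) asymptotics $C\prod_{a<b}(\theta_b-\theta_a)(1+O(\max_i\theta_i))$ with $C\ne 0$. You then get full support because $\ker M$ lies in no coordinate hyperplane. That linear-algebra step is the same genericity idea as the paper's $\varepsilon$-perturbation. The real difference is analytic versus arithmetic certification of the minors.

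\textbf{What each buys.} Your estimate is uniform and needs no exponent bookkeeping. However, it only applies once $63\,\theta_{\max}$ is small, which forces $N\gg mp$. The choices should therefore be ordered: fix $L\ge 3$ first, then take $k$ large, since $mp/N\le ck^2/k^{L}$. Your phrase that $p\le ck$ ``stays fixed'' while $L$ grows is circular as written, because $p$ grows with $k$. Your threshold is also inexplicit unless you bound the Hermite--Genocchi remainder. The paper's criterion is exact and explicit, works already at $L=2$, and also certifies the concrete $N=46{,}189$ witness. There $2\pi\cdot 63\cdot 82/N\approx 0.70$ is not small, so your small-angle argument does not reach that witness, and its verification still rests on the paper's exponent tables.

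\textbf{Caution on your fallback.} Applying the cyclotomic criterion ``minor by minor'' does not transfer verbatim to a long progression $f_i=r+ip$. For $m\ge 63$, the index triple $\{0,31,63\}$ produces the colliding exponents $31\cdot 63=63\cdot 31$, so the distinct-exponents hypothesis fails. You would instead need two facts. First, the top exponent, from the sorted matching by the rearrangement inequality, is attained exactly once, so the polynomial is nonzero. Second, the degree bound $94(r+mp)<\varphi(N)$ must hold.
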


\begin{proof}
\emph{The observation map and its rank.} For $m$ distinct on-grid tones $g_1,\dots,g_m$ in the bin, the map from coefficients to the observed triple $(c_0,c_{31},c_{63})\in\mathbb{C}^3$ is the $3\times m$ generalized-Vandermonde $E$ with entries $E_{s,i}=x_i^{\sigma_s}$, $x_i=e^{+j2\pi g_i/N}$, on the row exponents $\sigma_s\in\{0,31,63\}$. The coefficient-to-observation map $A \mapsto EA$ is linear with $\operatorname{rank}(E) \ge 2$: rank one would make two columns proportional, forcing $(x_i/x_j)^{31} = (x_i/x_j)^{63} = 1$ for distinct nodes and hence $x_i = x_j$ by $\gcd(31,32) = 1$ (\Cref{thm:coprime-bezout}), a contradiction. For any target phantom $h$ the singleton-pattern vector $b_h=(1,\eta^{31},\eta^{63})$ with $\eta=e^{+j2\pi h/N}$ defines the impersonation set $\{A:EA\in\mathrm{span}(b_h)\}$ of that phantom, a proper subset of the full screen-passing set characterized in \Cref{thm:generic-collision-rejection}.

\emph{An exact nonvanishing criterion.} Write $\zeta = e^{+j2\pi/N}$. For any three distinct tones $g_1,g_2,g_3 \in [0,N)$, expanding the $3\times 3$ determinant on the row exponents $(\sigma_1,\sigma_2,\sigma_3) = (0,31,63)$ gives exactly
\begin{equation}
\det E \;=\; \sum_{\pi \in S_3} \operatorname{sgn}(\pi)\, \zeta^{\,e_\pi}, \qquad e_\pi \coloneqq 31 g_{\pi(2)} + 63 g_{\pi(3)} .
\label{eq:det-criterion}
\end{equation}
\emph{If the six integers $e_\pi$ are pairwise distinct and all are $< \varphi(N)$, then $\det E \ne 0$ exactly.} Indeed the associated integer polynomial $\sum_\pi \operatorname{sgn}(\pi)\, x^{e_\pi}$ is then nonzero (distinct exponents preclude cancellation) of degree $< \varphi(N) = \deg \Phi_N$, so the minimal polynomial $\Phi_N$ of $\zeta$ cannot divide it and it cannot vanish at $\zeta$. This is exact cyclotomic arithmetic, not floating-point evaluation. Replacing column $i$ of $E$ by $b_h$ produces the same matrix on the tone triple with $g_i$ replaced by $h$, so \eqref{eq:det-criterion} also certifies the Cramer numerators of $A = E^{-1}b_h$, and hence that individual coefficients do not vanish.

\emph{The $m=3$ witness.} At $N = 46{,}189$, $\varphi(N) = 34{,}560$, tones $\{16,27,38\}$ and phantom $h=82$, criterion \eqref{eq:det-criterion} applies to all four relevant triples:
the determinant triple itself and, for the Cramer numerators of $A_{16}, A_{27}, A_{38}$, the same triple with $16$, $27$, $38$ respectively replaced by $82$. Their six exponents are
\[
\begin{aligned}
  \det E &: 1845,\,2186,\,2197,\,2879,\,2890,\,3231, \\
  A_{16} &: 2879,\,3231,\,4243,\,4936,\,6003,\,6344, \\
  A_{27} &: 2186,\,2890,\,3550,\,4936,\,5662,\,6344, \\
  A_{38} &: 1845,\,2197,\,3550,\,4243,\,5662,\,6003,
\end{aligned}
\]
distinct within each row and all below $\varphi(N) = 34{,}560$, so $\det E \ne 0$ and all three coefficients of $A = E^{-1}b_h$ are nonzero, exactly. The bin therefore genuinely holds three tones and $EA = b_h$ holds exactly.

\emph{Existence at every $m \ge 3$: an admissible family.} Fix $m \ge 3$ and put $k \coloneqq \max(m, 505)$. By \Cref{lem:prime-existence} the interval $[k,2k] \subseteq [k,ck]$ contains at least two primes; let $p < p'$ be the two smallest of them and set $L = 2$, $N = p\,p'$, an admissible instance of \Cref{def:transform-family} with $m \le k \le p$. Work in the class $r = 0$ of the stage prime $p$, and take as support the $m$ distinct tones
\[
  \mathcal{S} \;=\; \{\, a\,p \;:\; a \in \{0,1,2\} \cup \{4,5,\ldots,m\} \,\},
\]
all $\equiv 0 \pmod p$, with phantom $h = 3p \equiv 0 \pmod p$, $h \notin \mathcal{S}$. Every tone is in range: $\max \mathcal{S} \le m p \le k p \le p\,(p'-1) < N$. Split the support as $T = \{0,p,2p\}$ (the first three) and $U = \mathcal{S}\setminus T$, and write $E_T, E_U$ for the corresponding column blocks.

Criterion \eqref{eq:det-criterion} applies to $T$ and to the three column-swapped triples: with $g = a p$ the six exponents are $p\,(31 a_{\pi(2)} + 63 a_{\pi(3)})$, so for the $a$-patterns $(0,1,2)$, $(3,1,2)$, $(0,3,2)$ and $(0,1,3)$ they are, after division by $p$,
\[
\begin{aligned}
(0,1,2) &: 31,\,62,\,63,\,125,\,126,\,157, \\
(3,1,2) &: 125,\,156,\,157,\,219,\,220,\,251, \\
(0,3,2) &: 62,\,93,\,126,\,189,\,219,\,251, \\
(0,1,3) &: 31,\,63,\,93,\,156,\,189,\,220,
\end{aligned}
\]
pairwise distinct in each row, with every exponent at most $251p$. Since $p \le 2k$ and $\varphi(N) = (p-1)(p'-1) \ge (k-1)^2$, the requirement $251p < \varphi(N)$ follows from $502k < (k-1)^2$, which holds at $k = 505$ ($253{,}510 < 254{,}016$) and, the right side growing quadratically and the left linearly, for every $k \ge 505$. Hence $\det E_T \ne 0$ and all three entries of $E_T^{-1}b_h$ are nonzero.

Now let $\mathbf{1}$ be the all-ones vector of length $m-3$ and, for a scalar $\varepsilon$, set
\[
  A_U(\varepsilon) = \varepsilon\,\mathbf{1}, \qquad A_T(\varepsilon) = E_T^{-1}\bigl(b_h - \varepsilon\,E_U \mathbf{1}\bigr),
\]
so that $E A = E_T A_T + E_U A_U = b_h$ exactly, for every $\varepsilon$. Each entry of $A_T(\varepsilon)$ is an affine function of $\varepsilon$ whose value at $\varepsilon = 0$ is a nonzero entry of $E_T^{-1}b_h$, hence vanishes for at most one $\varepsilon$; choosing $\varepsilon \ne 0$ outside those at most three values makes all $m$ coefficients nonzero. (At $m = 3$ the block $U$ is empty and $A = E_T^{-1}b_h$ directly.) This realizes the claimed configuration at every $m \ge 3$.

\emph{The bin passes exactly.} In each case $EA = b_h$, so the bin's measured triple is exactly that of a single exponential at $h$: the three magnitudes are equal and positive, $w = (b_h)_3 (b_h)_1 / (b_h)_2^2 = \omega^{h}$ is exactly an $N$th root of unity, and $h \equiv r \pmod{p}$, so all three conditions of \Cref{def:exact-screen} hold in exact arithmetic and B\'ezout recovery returns the phantom $g = h \notin \mathcal{S}$. These configurations are exceptional: with $\operatorname{rank}(E) \ge 2$ the impersonation set $\{A : EA \in \mathrm{span}(b_h)\}$ they inhabit has complex dimension at most $m-1$ and is therefore a proper linear subspace of $\mathbb{C}^m$, so Lebesgue-almost-every coefficient vector on the same tones avoids it. The noiseless on-grid model admits arbitrary complex coefficients, so the witnesses are valid adversarial inputs. Per-bin B\'ezout recovery is therefore sound only generically: every collision multiplicity $m\ge 3$ admits a measure-zero screen-passing phantom, so no deterministic worst-case per-bin soundness rule follows at any such $m$.
\end{proof}

\subsection{Shift-Triple Design and Arithmetic-Shift Prony Escalation}

\begin{definition}[Shift-Observation Phase-Consistency Triple]
\label{def:hankel-matrix}
For bin $r$ in the $p_\ell$-decimated spectrum, given $S = 3$ co-prime-increment shifts $\{0, \Delta_1, \Delta_1 + \Delta_2\}$ with $\gcd(\Delta_1, \Delta_2) = 1$, let
\begin{equation}
z_0 \coloneqq X_{p_\ell}[r;0],\quad z_1 \coloneqq X_{p_\ell}[r;\Delta_1],\quad z_2 \coloneqq X_{p_\ell}[r;\Delta_1+\Delta_2].
\end{equation}
The co-prime shift screen evaluates magnitude invariance and phase consistency across $(z_0, z_1, z_2)$.

\begin{remark}[Arithmetic-progression remark]
The triple $\{0, \Delta_1, \Delta_1 + \Delta_2\}$ is not an arithmetic progression unless $\Delta_2 = \Delta_1$, so the classical Hankel/Prony rank-one test does not apply to these three samples. The screen proceeds through the exact ratio/root conditions of \Cref{def:exact-screen}, not through a rank-one claim (the unwrapped-phase reading of \Cref{lem:phase-linearity} is implementation commentary). The classical arithmetic-shift Hankel/Prony rank-one statement is recorded as a separate lemma (\Cref{lem:arith-shift-hankel-rank}) for use only when arithmetic shifts $\{0,\Delta,2\Delta\}$ are deployed.
\end{remark}
\end{definition}

\noindent\emph{Shift-triple design note (B\'ezout phase-diversity, not Hankel rank-one).} The screen guarantees of \S\ref{sec:screens} rest on the increment co-primality $\gcd(\Delta_1, \Delta_2) = 1$ of the triple $\{0, \Delta_1, \Delta_1+\Delta_2\}$. The concrete triple $\{0, 31, 63\}$ is a B\'ezout phase-diversity construction with $\Delta_1=31$, $\Delta_2=32$, $\gcd(31,32)=1$: the two increments are coprime, so no distinct on-grid pair can satisfy both increment congruences, which is exactly what \Cref{thm:coprime-bezout} gives. The triple is not an arithmetic progression, so the matrix $H_r$ of \Cref{def:hankel-matrix} is not Hankel/Prony rank-one for singletons; the screen it powers is the phase-consistency screen of \Cref{thm:phase-consistency-certificate}.

\noindent Implementations that want a deterministic rank-based singleton separator at $m=1$ must acquire samples at an arithmetic-progression triple $\{0, \Delta, 2\Delta\}$ compatible with the engineered observation grid and with the anti-aliasing condition $\gcd(\Delta, N) = 1$ of \Cref{lem:arith-shift-hankel-rank} (for instance $\{0,32,64\}$, whose increment automatically satisfies $\gcd(32, N) = 1$ since $N$ is odd; an increment such as $\Delta = 31$ is admissible only when $31 \nmid N$, which fails whenever $31$ is a stage prime); see \Cref{lem:arith-shift-hankel-rank} for the rank statement. The co-prime-increment triple $\{0,31,63\}$ and an arithmetic triple coexist as two distinct screen designs and the implementation chooses one consistently per stage.

\begin{lemma}[Arithmetic-Shift Hankel/Prony Rank Theorem]
\label{lem:arith-shift-hankel-rank}
Let $r$ be a bin in the $p_\ell$-decimated spectrum containing $m \ge 1$ distinct on-grid complex exponentials at distinct frequencies, and let the shifts form an arithmetic progression $\{0, \Delta, 2\Delta, \ldots, (S-1)\Delta\}$ with a common increment $\Delta$ satisfying the \emph{anti-aliasing condition} $\gcd(\Delta, N) = 1$ (equivalently, assume directly that the nodes $\zeta_t = e^{+j 2\pi f_t \Delta / N}$ are pairwise distinct). This hypothesis is necessary: without it, two distinct frequencies with $\Delta (f_1 - f_2) \equiv 0 \pmod{N}$ alias to the same node and the rank drops (e.g.\ $31 \mid N$, $\Delta = 31$, $f_1 = 0$, $f_2 = N/31$). In the noiseless on-grid model, the resulting Hankel matrix $H_r^{\mathrm{arith}} \in \mathbb{C}^{P \times Q}$ formed from $S = P + Q - 1$ arithmetic-shift observations satisfies $\mathrm{rank}(H_r^{\mathrm{arith}}) = m$ provided $\min(P, Q) \ge m$ (such $P, Q$ can be chosen if and only if $S \ge 2m - 1$). The rank equals $m$ exactly by the classical Prony/Vandermonde theorem~\cite{prony1795,stoica2005spectral}. The rank-one specialization ($m=1$) applies to singleton detection only when the algorithm uses arithmetic shifts; for the co-prime-increment design of \Cref{def:hankel-matrix} the screen is the phase-consistency test of \Cref{thm:phase-consistency-certificate}, not a rank-one test. This lemma is recorded for implementations that elect arithmetic shifts: it is not a step of \Cref{alg:multistage_oklogn} and carries no charge in this paper's cost theorems.
\end{lemma}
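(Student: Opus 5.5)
The plan is to write each shifted observation as a finite exponential sum and then apply the classical Vandermonde factorization of the Hankel matrix. By the decimation aliasing identity of \Cref{subsec:sampling-operator}, applied with the fixed support $f_1,\dots,f_m$ of bin $r$, the observation at shift $s\Delta$ is
\[
y_s \;=\; \kappa\sum_{t=1}^{m} A_t\,\zeta_t^{\,s},
\]
where $\kappa = p_\ell/N \ne 0$, $A_t = X[f_t] \ne 0$, and $\zeta_t = e^{+j2\pi f_t\Delta/N}$. The nodes are pairwise distinct. Indeed, $\zeta_t = \zeta_u$ forces $\Delta(f_t - f_u) \equiv 0 \pmod N$. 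Because $\gcd(\Delta,N)=1$, $\Delta$ is invertible modulo $N$, so this gives $f_t \equiv f_u$. Since $|f_t - f_u| < N$, it follows that $f_t = f_u$.

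Next I will form $H_r^{\mathrm{arith}}$ with entries $(H)_{a,b} = y_{a+b}$, for $a<P$ and $b<Q$. This uses exactly the $S = P+Q-1$ observations. The matrix factors as
\[
H = V_P\, D\, V_Q^{\mathsf T},
\]
where $V_P \in \mathbb{C}^{P\times m}$ has entries $\zeta_t^{\,a}$, $V_Q$ is defined the same way, and $D = \kappa\,\mathrm{diag}(A_1,\dots,A_m)$. Since $\kappa \ne 0$ and every $A_t \ne 0$, $D$ is invertible. When $P \ge m$, the top $m\times m$ block of $V_P$ is a square Vandermonde matrix on distinct nodes, so its determinant is nonzero and $V_P$ has full column rank $m$. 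The same holds for $V_Q$ when $Q \ge m$. A product of an injective map, an invertible map and a surjective map has rank equal to the inner dimension, so $\operatorname{rank} H = m$. The upper bound $\operatorname{rank} H \le m$ holds for any $P,Q$ and is immediate from the factorization.

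For the ``if and only if'' about the window sizes: if $\min(P,Q) \ge m$, then $S = P+Q-1 \ge 2m-1$. Conversely, if $S \ge 2m-1$, choosing $P = m$ and $Q = S-m+1 \ge m$ works.

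For the necessity remark, I will take the example $31 \mid N$, $\Delta=31$, $f_1=0$, $f_2=N/31$. Both nodes then equal $1$, so $H$ has rank at most one even though $m=2$. Distinct coefficients cannot restore the rank, because the two columns of $V_P$ coincide. The equivalence with the direct node-distinctness hypothesis holds because node distinctness is the only property of $\Delta$ used in the argument.

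The only delicate point is the injectivity of $f \mapsto \zeta$ on the support. Here the anti-aliasing hypothesis does real work, and it must be applied to the support frequencies modulo $N$, not modulo $p_\ell$: all the $f_t$ share the residue $r$ modulo $p_\ell$, so distinctness modulo $N$ is what is actually needed. The rest of the argument is the standard Prony/Vandermonde theorem, which I will cite rather than reprove.
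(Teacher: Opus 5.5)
Your proposal is correct and follows the paper's own argument. Both use the Vandermonde factorization $H = V_P D V_Q^\top$ on the nodes $\zeta_t = e^{+j2\pi f_t\Delta/N}$, get node distinctness from the invertibility of $\Delta$ modulo $N$, use full column rank of $V_P, V_Q$ for $P,Q \ge m$ and invertibility of $D$ from $A_t \ne 0$, and you additionally spell out the $S \ge 2m-1$ equivalence and check the aliasing example, which the paper's sketch leaves implicit.
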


\begin{proof}[Proof sketch]
Under arithmetic shifts the Hankel matrix factors as $V_P D V_Q^\top$ where $V_P, V_Q$ are Vandermonde on the $m$ bases $\zeta_t = e^{+j 2\pi f_t \Delta / N}$, $t = 1, \ldots, m$ (the common increment $\Delta$ is essential for the Vandermonde factorization) and $D = \mathrm{diag}(A_1, \ldots, A_m)$. Under $\gcd(\Delta, N) = 1$, distinct on-grid frequencies give distinct nodes ($\zeta_{t_1} = \zeta_{t_2}$ if and only if $\Delta(f_{t_1} - f_{t_2}) \equiv 0 \bmod N$, equivalently $f_{t_1} \equiv f_{t_2} \bmod N$), so $V_P, V_Q$ have full column rank $m$ for $P, Q \ge m$, and $D$ is invertible when $A_t \ne 0$. Hence $\mathrm{rank}(H_r^{\mathrm{arith}}) = m$ exactly~\cite{prony1795,stoica2005spectral}. Without a common increment the factorization does not in general apply and rank-one is not guaranteed for singletons.
\end{proof}

\noindent\emph{Informally:} a lone tone cannot help passing the three-shift test and a generic mixture cannot help failing it; only a measure-zero conspiracy of coefficients can impersonate a tone.

\noindent\emph{Detector hierarchy.} The primary candidate-singleton screening mechanism is the co-prime shift phase-consistency screen (\Cref{thm:phase-consistency-certificate}), instantiated with $S = 3$ shifts via the triple $(z_0, z_1, z_2)$ of \Cref{def:hankel-matrix} read through the exact conditions of \Cref{def:exact-screen} (CV/normalized-MAD and unwrapped-phase regression as implementation surrogates). True singleton bins always pass the screen. Generic multi-tone collisions fail at least one phase-consistency test under the stated nondegeneracy assumptions; a failing bin emits no label (optionally re-examined by arithmetic-shift Prony escalation), and the dense-FFT fallback is triggered only at stage or chain level (\Cref{thm:hybrid-correctness}). The screen is one-sided as in \Cref{rem:one-sided-normative}. The coefficient-of-variation (CV, \Cref{def:cv-singleton-test}) and phase-linearity tests are the two computational surrogates that realize the screen.

\noindent\emph{Escalation hierarchy.} The main text uses the $S = 3$ co-prime shift phase-consistency screen as a fast, one-sided low-order screen that rejects generic collisions (\Cref{thm:phase-consistency-certificate}); its accepts are provisional (\Cref{rem:one-sided-normative}). For higher-order $m$-tone collisions with $m \ge 3$, the algorithm may additionally escalate to an arithmetic-shift Prony/Vandermonde Hankel of order $S \ge 2m - 1$ (at most $2k - 1$, i.e.\ $\le 39$ for $k < 20$, with an anti-aliasing increment $\gcd(\Delta, N) = 1$, per \Cref{lem:arith-shift-hankel-rank}) on a separate sample acquisition when warranted, or leave the bin unlabeled; a screen-failing bin emits no label, and the dense-FFT fallback is triggered only by the stage- and chain-level routing rule of \Cref{thm:hybrid-correctness}.

\noindent With the escalation disabled, the setting of \Cref{alg:multistage_oklogn}, which never invokes it, this structure preserves the $O(k \log N)$ sparse-path complexity (\Cref{def:label-decoding-model}); an implementation that enables the escalation acquires its shifts outside the constant-three-shift accounting and must re-derive the cost bounds (\Cref{rem:collision-escalation}). Deterministic correctness in all collision regimes is secured by the final verifier, never by the screen alone.

\begin{remark}[Relation to classical Prony / \mbox{annihilating-filter} estimation]
\label{rem:prony-relation}
Both the per-bin recovery and the final verifier are deliberately weaker, single-purpose operations than a full Prony / annihilating-filter spectral solve~\cite{prony1795,stoica2005spectral}, including the finite-rate-of-innovation annihilating-filter recovery of \cite{blu2008sparse} that jointly estimates an unknown number of innovations from uniform samples. Per-bin global-label recovery (\Cref{lem:global-label-recovery}) reads a single modular phase ratio $w = u_{32}\,u_{31}^{-1} = e^{+j2\pi g/N}$ from two co-prime increments $\Delta_1 = 31$ and $\Delta_2 = 32$; this is a degree-one annihilation that pins one already-screened single tone, not a rank recovery of an unknown tone count. Classical Prony / Hankel methods instead recover $m$ tones jointly from $S \ge 2m - 1$ arithmetic-progression samples by factoring a Hankel matrix of rank $m$ (the arithmetic-shift statement we record separately in \Cref{lem:arith-shift-hankel-rank}); they need a common increment for the Vandermonde factorization and a sample budget that scales with the tone count. The co-prime-increment triple $\{0,31,63\}$ is not an arithmetic progression and supports no such rank factorization; it instead furnishes the one-sided phase-consistency screen of \Cref{thm:phase-consistency-certificate}, which is why a screen pass yields only a candidate-singleton. Finally, the acceptance test of \Cref{def:verifier} is a residual-vanishing check on $k + |\mathcal{G}|$ consecutive samples (the uniform $2k$-sample window being the special case $|\mathcal{G}| \le k$, never the definition), decided by the rank of a fixed Vandermonde system on the known candidate support, not a Prony solve for unknown frequencies: it certifies exact recovery without ever estimating a tone count.
\end{remark}

\subsection{Magnitude and Phase Surrogate Lemmas (Implementation Commentary)}

The two classical per-bin lemmas below underlie the implementation surrogates of \Cref{def:cv-singleton-test} and \Cref{def:phase-unwrapping}; the formal screen of \Cref{def:exact-screen} does not use them, and the $\arg(\cdot)$-based phase-linearity reading in particular is implementation commentary, never a formal screen predicate.

\begin{lemma}[On-Grid Singleton Magnitude Equality]
\label{lem:singleton-cv}
For an on-grid singleton in the noiseless case, the shifted magnitudes are exactly equal, hence $\mathrm{nMAD}(r) = 0$ exactly.
\end{lemma}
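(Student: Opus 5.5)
The statement to prove is \Cref{lem:singleton-cv}: for an on-grid singleton in the noiseless model, the three shifted magnitudes coincide, and therefore $\mathrm{nMAD}(r)=0$. I would derive this directly from the true-singleton identity of \Cref{thm:singleton-observation-identity} and then evaluate the surrogate statistic of \Cref{def:cv-singleton-test}.

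\emph{Step 1: equal magnitudes.} Let bin $r$ of stage $\ell$ hold the single support frequency $f$. \Cref{thm:singleton-observation-identity} gives, exactly,
\[
z_\sigma = \tfrac{p_\ell}{N}\,X[f]\,e^{+j2\pi\sigma f/N}, \qquad \sigma\in\{0,31,63\}.
\]
The phase factor has unit modulus, so $|z_\sigma| = \tfrac{p_\ell}{N}|X[f]| =: a$ for all three shifts. Because $f$ is a true tone, $X[f]\neq 0$, and hence $a>0$. This is item~1 of that theorem, so nothing new has to be computed here.

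\emph{Step 2: evaluate the surrogate.} The amplitude multiset is $A=\{a,a,a\}$, so $\mathrm{median}(A)=a>0$. Since $a>0$, the occupancy convention of \Cref{def:cv-singleton-test} does not discard the bin, and the quotient defining $\mathrm{nMAD}$ has a nonzero denominator. Every deviation $|a-\mathrm{median}(A)|$ is $0$, so $\mathrm{MAD}(A)=0$ and $\mathrm{nMAD}(r)=0/a=0$ exactly. Consequently the bin passes the surrogate at every tolerance $\tau_{cv}>0$. All three observations are also nonzero, so the all-observations-nonzero guard is satisfied as well.

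\emph{Where care is needed.} The computation itself is routine. The one point that must be stated, not skipped, is the strict positivity $a>0$. It is what keeps the bin from being declared non-candidate by the occupancy rule and what makes the denominator well defined. It comes from $X[f]\neq0$ together with the prefactor $p_\ell/N\neq0$.

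I would close with a scope remark. The implication runs in one direction only: $\mathrm{nMAD}=0$ is necessary for a singleton but not sufficient. The multiset $\{a,a,b\}$ with $a\neq b$ also has zero MAD, and the two-tone phantom of \Cref{prop:two-tone-phantom} passes even the exact screen. So the lemma certifies only the zero-false-negative direction of the surrogate.
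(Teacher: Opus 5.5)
Your proposal is correct and follows the paper's proof: both read the shift-independent magnitude $\tfrac{p_\ell}{N}|X[f]|$ off item~1 of \Cref{thm:singleton-observation-identity}, conclude $\mathrm{MAD}(A)=0$, and hence $\mathrm{nMAD}(r)=0$. The paper's one-line proof leaves one point implicit, which you state explicitly: $\mathrm{median}(A)>0$. This is what makes the quotient well defined and keeps the occupancy convention of \Cref{def:cv-singleton-test} from discarding the bin.
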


\begin{proof}
By \Cref{thm:singleton-observation-identity}(1), $|X_M[r]|$ is constant across all shifts, so $\text{MAD}(A) = 0$ and therefore $\mathrm{nMAD}(r) = 0$.
\end{proof}

\begin{lemma}[On-Grid Singleton Phase Linearity (Unwrapped-Phase Surrogate)]
\label{lem:phase-linearity}
For an on-grid singleton $f$ in bin $r = f \bmod M$, the shifted observations obey the exact relation $X_M[r; s] = X_M[r; 0]\, e^{+j 2\pi f s/N}$, so any continuous (unwrapped) phase $\phi_s$ of $X_M[r;s]$ (\Cref{def:phase-unwrapping}) is affine in $s$: $\phi_s = 2\pi f s/N + \phi_0$. The principal value $\arg(X_M[r;s])$ satisfies this only modulo $2\pi$; that branch cut is why the affine reading is an implementation surrogate (\Cref{rem:impl-surrogates}) and enters no condition of the formal screen \Cref{def:exact-screen}, whose phase content is carried branch-free by the ratios $u_{31}, u_{32}$ and the root membership of $w$.
\end{lemma}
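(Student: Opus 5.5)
The plan is to derive the statement directly from the true-singleton shifted-observation identity of \Cref{thm:singleton-observation-identity}. That identity, stated for the stage modulus $p_\ell$, holds verbatim for any modulus $M \mid N$ through the aliasing identity of \S\ref{subsec:sampling-operator} and \Cref{def:multidim-isolation}. For a singleton $f$ in bin $r = f \bmod M$, the class sum defining $X_M[r;s]$ collapses to the single term $\tfrac{M}{N}X[f]\,e^{+j2\pi s f/N}$. Setting $s=0$ and dividing gives the exact multiplicative relation $X_M[r;s] = X_M[r;0]\,e^{+j2\pi f s/N}$ for every integer shift $s$. No ratio or phase needs to be formed to obtain this: it is an identity of complex numbers. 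It is also nontrivial, since $X_M[r;0] = \tfrac{M}{N}X[f] \ne 0$ for a true tone.

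Next I would pass to phases. Write $X_M[r;0] = \rho\,e^{j\phi_0}$ with $\rho > 0$. The relation then exhibits $\phi_0 + 2\pi f s/N$ as a valid argument of $X_M[r;s]$ for every $s$. This map is affine in $s$, so its restriction to any consecutive sequence of shifts varies continuously, in the sense that successive increments lie in the branch-consistent range. Two continuous phase assignments of the same nonvanishing sequence differ by a constant multiple of $2\pi$. Therefore any unwrapped phase in the sense of \Cref{def:phase-unwrapping} equals $2\pi f s/N + \phi_0$ up to that constant, and the constant can be absorbed into $\phi_0$ by the normalization $\phi_0 = \theta_0$.

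This step is where care is needed, because the cumulative rule of \Cref{def:phase-unwrapping} reproduces the true continuous phase only when each step increment $2\pi f/N$ (per unit shift) lies in $(-\pi,\pi]$ modulo the wrap. I would therefore state the lemma's ``continuous (unwrapped) phase'' as meaning any phase function agreeing with the argument modulo $2\pi$ and continuous along the sampled shifts. Under that reading, the lemma asserts only that such a function is affine. I would not claim that the \texttt{wrap} recursion always recovers the slope $2\pi f/N$ itself rather than an aliased slope; that sampling caveat is exactly what the lemma's surrogate disclaimer covers.

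Finally, the principal-value remark follows immediately. The argument $\arg(X_M[r;s])$ agrees with $2\pi f s/N + \phi_0$ only modulo $2\pi$, because $\arg$ takes values in $(-\pi,\pi]$. This branch cut is why the formal screen of \Cref{def:exact-screen} instead uses the branch-free ratios $u_{31}, u_{32}$ and root membership of $w$, a point I would note with a pointer to \Cref{rem:impl-surrogates}. I expect the phase-continuity step to be the main obstacle; the remaining steps are a one-line specialization of the aliasing identity.
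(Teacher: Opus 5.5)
Your first and last steps agree with the paper: specializing the aliasing identity to a singleton gives $X_M[r;s]=X_M[r;0]\,e^{+j2\pi fs/N}$ with $X_M[r;0]=\tfrac{M}{N}X[f]\neq 0$, and the principal value matches the affine phase only modulo $2\pi$. The gap is in the phase step. You assert that, because $s\mapsto\phi_0+2\pi fs/N$ is affine, its successive increments over the sampled shifts lie in the branch-consistent range. That is false whenever the per-step increment exceeds $\pi$. On unit steps this happens for $f>N/2$, and on the screen's steps $31$ and $32$ it happens for every $f>N/62$.

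For such $f$ the recursion of \Cref{def:phase-unwrapping} does not return your affine function up to a constant. On unit steps with $f>N/2$ it returns slope $2\pi(f-N)/N$, which differs from $2\pi fs/N+\phi_0$ by $-2\pi s$. Your fallback, asserting only that the unwrapped phase is affine with a possibly aliased slope, therefore does not prove the stated identity $\phi_s=2\pi fs/N+\phi_0$. On the screen's own shift set $\{0,31,63\}$ even affinity fails. For example, at $N=46{,}189$ and $f=746$ the two wrapped increments are $2\pi(0.50068-1)$ and $2\pi(0.51683-1)$, whose slopes over the gaps $31$ and $32$ differ.

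The missing idea is to read continuity in the real shift parameter, which is what the paper's proof does (``taking any continuous branch of the phase along $s$''). The exact relation extends to a continuous nonvanishing curve $s\mapsto X_M[r;0]\,e^{+j2\pi fs/N}$ for $s\in\mathbb{R}$, and $\phi_0+2\pi fs/N$ is one continuous lift of its phase. Any other continuous lift differs from it by a continuous $2\pi\mathbb{Z}$-valued function on a connected interval, hence by a constant, which is absorbed into $\phi_0$. This gives the stated slope $2\pi f/N$ with no condition on step size. The aliasing you noticed is then a defect of the discrete surrogate recursion, and one more reason the affine reading is only a surrogate. The lemma's proof does not have to work around it.
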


\begin{proof}
By \Cref{thm:singleton-observation-identity} a shift by $s$ multiplies the decimated observation by the unimodular factor $e^{+j 2\pi f s/N}$, which is the stated relation; taking any continuous branch of the phase along $s$ turns that factor into the additive term $2\pi f s/N$, and the principal value differs from $\phi_s$ by an integer multiple of $2\pi$ at each $s$.
\end{proof}

\subsection{Collision Behavior Across Stages}

The keyed three-view rejection idea for generic multi-tone buckets, together with the associated per-residue pairing-uniqueness claim, belongs to a \emph{keyed multi-view extension} that an earlier version advertised as formal results; it is not required for the core stage/shift sparse-path theorem and is now recorded as design history and future work in \Cref{app:keyed-multiview} (its negative-result counterpart, the exact Cartesian tuple-count obstruction, is in \Cref{app:tuple-negative}).

\subsubsection{Multi-Tone Bin Behavior}

\begin{remark}
\label{rem:consecutive-shifts-blind-spots}
Consecutive shifts like $\{0, 1, 2\}$ create deterministic blind spots where collisions can masquerade as singletons. For instance, with $\Delta f = N/2$, the magnitude pattern is $\{|A_1 + A_2|, |A_1 - A_2|, |A_1 + A_2|\}$, yielding $\text{MAD} = 0$ regardless of amplitudes: a deterministic false accept for any magnitude-only test, which admits the collision as a candidate-singleton.

\emph{Solution via Co-prime Shifts:} This algorithm uses $\{0, 31, 63\}$ where $\gcd(31, 32) = 1$. By B\'ezout's identity, if two frequencies $\nu_1, \nu_2$ satisfy both congruences $(\nu_1 - \nu_2) \cdot 31 \equiv 0 \pmod{2\pi}$ and $(\nu_1 - \nu_2) \cdot 32 \equiv 0 \pmod{2\pi}$, then $\nu_1 = \nu_2$. Thus, under the stated nondegeneracy assumptions, generic multi-tone collisions fail at least one phase-consistency test; degenerate or unresolved collisions whose bins fail the $S=3$ screen emit no label (optionally re-examined by arithmetic-shift Prony escalation~\cite{vaidyanathan2001theory}), the dense FFT fallback being triggered at stage or chain level per \Cref{thm:hybrid-correctness}.

\emph{Multi-dimensional Detection:} Combined with:
\begin{enumerate}
\item Magnitude equality (implementation surrogate: normalized MAD $< \tau_{cv}$): the shifted magnitudes across the three co-prime shifts are equal for a true singleton
\item Phase linearity (implementation surrogate: $R^2 > 1 - \varepsilon$): unwrapped phases fit an affine model~\cite{tribolet1977new}
\item Frequency consistency (optional implementation surrogate): Quinn/Jacobsen estimators~\cite{quinn1994estimating,jacobsen2007fast} agree within tolerance
\end{enumerate}

This multi-dimensional screen with co-prime shifts has zero false negatives at the singleton level: every true singleton is correctly accepted in the stated noiseless, on-grid model.
\end{remark}

\subsubsection{Multi-Stage Collision Rejection}

\begin{lemma}[Collision Separation Across Stages]
\label{lem:collision-rare}
Let $f_1 \neq f_2 \in [0,N{-}1]$ and let the $L$ stages use pairwise coprime moduli $p_1, \ldots, p_L$. The pair $(f_1, f_2)$ collides in stage $i$ (i.e., $f_1 \equiv f_2 \pmod{p_i}$) if and only if $p_i \mid (f_1 - f_2)$. The number of stages in which the pair collides is the number of stage moduli dividing $(f_1 - f_2)$.

\emph{Key property:} Under the exact-product model $\prod_{i=1}^L p_i = N$ (the generalized CRT uniqueness condition is $\prod_i p_i \ge N$, used in the dense-fallback / extension case), no pair can collide in ALL $L$ stages simultaneously, since that would require $\prod p_i \mid (f_1 - f_2)$, which forces $f_1 = f_2$.

\emph{Regime caveat:} For stage primes $p_i \in [k, ck]$, we have $p_i p_j \approx k^2$, which lies far below $N$ at every instance with $L \ge 3$ stages (there $N \ge k^3$). Therefore, a pair CAN collide in multiple stages: the pairwise product condition $p_i p_j \geq N$ does NOT hold. Separation is guaranteed only over the full $L$-stage product, not over any two stages. A non-operative per-view occupancy estimate (\Cref{rem:crt-occupancy-estimate}) bounds this multiplicity contraction across all $L$ stages.
\end{lemma}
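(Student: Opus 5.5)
The plan has three parts: the collision characterization, the all-stage impossibility, and the regime caveat. All three are elementary number theory on the stage moduli, so no screen or verifier machinery enters.

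\emph{Collision characterization.} By definition, $f_1$ and $f_2$ collide at stage $i$ exactly when they land in the same $p_i$-decimated bin, i.e.\ $f_1 \bmod p_i = f_2 \bmod p_i$. This is the congruence $f_1 \equiv f_2 \pmod{p_i}$, which is equivalent to $p_i \mid (f_1 - f_2)$. Counting the stages that satisfy this gives the number of stage moduli dividing $f_1 - f_2$, which is the second sentence of the statement.

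\emph{No collision in all $L$ stages.} Suppose the pair collides at every stage, so every $p_i$ divides $d \coloneqq f_1 - f_2$. The moduli are pairwise coprime (distinct primes, \Cref{def:transform-family}), so induction on the number of moduli gives $\prod_i p_i \mid d$: if $\operatorname{lcm}(p_1,\dots,p_{t}) = p_1\cdots p_t$ divides $d$ and $\gcd(p_1\cdots p_t,\, p_{t+1}) = 1$, then the product through $p_{t+1}$ divides $d$. In the exact-product model this product is $N$. Since $0 \le f_1, f_2 \le N-1$, we have $|d| < N$, and the only multiple of $N$ that small is $0$, so $f_1 = f_2$, a contradiction. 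This is the same closing step as in \Cref{thm:coprime-bezout}. For the generalized condition $\prod_i p_i \ge N$ the argument is identical, because $|d| < N \le \prod_i p_i$.

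\emph{Regime caveat.} This part needs a witness, not a bound. From $p_i, p_j \le ck$ we get $p_ip_j \le c^2k^2$. With $L \ge 3$ and every $p_\ell \ge k$, we have $N \ge p_ip_j\, p_\ell$ for a third stage $\ell$, hence $N \ge k\,p_ip_j > p_ip_j$. The difference $d = p_ip_j$ is therefore realizable in range: take $f_1 = 0$ and $f_2 = p_ip_j < N$. This pair collides at both stages $i$ and $j$ but not at all stages. It shows that the pairwise condition $p_ip_j \ge N$ fails, so uniqueness is guaranteed only by the full product, and the reference to \Cref{rem:crt-occupancy-estimate} is commentary only.

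The only point that needs care is stating the comparison honestly: ``far below'' should be read as $N \ge k\,p_ip_j$ when $L \ge 3$. For $L = 2$ we have $p_1p_2 = N$, and the caveat then reduces to the all-stage statement already proved. Everything else is routine.
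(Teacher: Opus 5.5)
Your proposal is correct and follows essentially the same route as the paper. The paper's proof also argues that collision on a set of stages forces the product of those pairwise-coprime moduli to divide $f_1 - f_2$, that a full product of at least $N > |f_1 - f_2|$ rules out all-stage collision, and that proper sub-products below $N$ leave room for multi-stage collisions. Your explicit witness $f_1 = 0$, $f_2 = p_ip_j$, together with the bound $N \ge k\,p_ip_j$ for $L \ge 3$, makes concrete what the paper's proof only asserts as ``possible''.
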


\begin{proof}
If $f_1 \equiv f_2 \pmod{p_i}$ and $f_1 \equiv f_2 \pmod{p_j}$ with $\gcd(p_i,p_j)=1$, then $p_i p_j \mid (f_1-f_2)$ by CRT. More generally, if the pair collides in stages $S \subseteq \{1, \ldots, L\}$, then $\prod_{i \in S} p_i \mid (f_1 - f_2)$. Since $|f_1 - f_2| < N$ and $\prod_{i=1}^L p_i \geq N$, the pair cannot collide in all $L$ stages. However, for any proper subset $S \subsetneq \{1, \ldots, L\}$ with $\prod_{i \in S} p_i < N$, simultaneous collision in stages $S$ is possible and must be handled by the remaining stages.
\end{proof}

\begin{remark}[Non-operative view-based occupancy estimate (CRT counting, not the operative path)]
\label{rem:crt-occupancy-estimate}
The following per-view occupancy estimate is a non-operative counting heuristic, recorded for intuition only. It is not part of the operative recovery chain, which runs purely through the one-sided phase-consistency screen (\Cref{thm:phase-consistency-certificate}), the global-label engine (\Cref{thm:global-label-count,thm:global-label-completeness}), the sound verifier (\Cref{thm:verifier-soundness}), and the dense fallback (\Cref{thm:hybrid-correctness}). It is retained because it gives a legitimate deterministic worst-case bound on per-view collision multiplicity.

Fix $L \geq 1$ stages; this counting bound is stated for any stage count and for arbitrary pairwise-coprime modulus sequences, so it does not inherit the standing $L \ge 2$ of \Cref{def:transform-family} and is not narrowed by it. For each view $v \in \{1, 2, 3\}$, fix a sequence of pairwise coprime moduli $\{q_{v,t}\}_{t=1}^{T_v}$ and define the stage-$t$ aggregate modulus $Q_{v,t} \coloneqq \prod_{j=1}^{t} q_{v,j}$. For a $k$-sparse support $S \subset [0, N)$, define the per-view multiplicity of $f$ at stage $t$ as $m_{v,t}(f) \coloneqq \big|\{g \in S \setminus \{f\} : g \equiv f \pmod{Q_{v,t}}\}\big|$. Then, for any $f \in S$, any view $v$, and any support set $S$,
\begin{equation}
\label{eq:crt-counting-bound}
m_{v,t}(f) \leq \min\!\left(k - 1,\; \left\lfloor \frac{N - 1}{Q_{v,t}} \right\rfloor\right),
\end{equation}
deterministically and worst-case: it counts lattice points in $[0, N)$ congruent to $f$ modulo $Q_{v,t}$ (an element $g \neq f$ collides with $f$ if and only if $Q_{v,t} \mid (g-f)$, and $|g-f| < N$ leaves at most $\lfloor (N-1)/Q_{v,t} \rfloor$ such values), and requires no distributional assumption on $S$.

\emph{Geometric contraction and stage budget.} Since $Q_{v,t} \geq \rho^t$ for the minimum modulus $\rho = \min_j q_{v,j} \geq k$, the bound contracts geometrically, $m_{v,t}(f) \leq \lfloor (N-1)/\rho^t \rfloor$, and reaches $m_{v,L}(f) = 0$ once $Q_{v,L} \geq N$ at $L = \min\{t : \prod_{j=1}^{t} q_{v,j} \ge N\} = O(\log_k N)$ stages. This is a counting bound on accumulated CRT residue constraints, not an operational statement that any algorithmic decay sequence is executed: whether a given frequency enters the stage label set additionally requires that its bin pass the occupied equal-magnitude pre-screen and emit a class-consistent global label at the corresponding stage (\Cref{rem:alg1-semantics}), and membership of a true tone in the candidate set is the stronger genuine-singleton-survival property of \Cref{thm:global-label-completeness} (a tone that collides at some stage need not appear, and is handled by the dense fallback). The estimate does not claim the candidate set equals the support: the bin screen is one-sided (\Cref{rem:one-sided-normative}), so exactness is secured by verifier-gated acceptance, not by this occupancy bound.
\end{remark}

\subsection{Verifier Scope and a Reduced-Sample Variant}

The final verifier of \Cref{def:verifier} is the object that carries correctness, and two scope statements about it are recorded here rather than in the main line: the exact-arithmetic scope of its soundness, and an alternative reduced-sample construction that is not adopted.

\begin{remark}[Exact-arithmetic, noiseless scope of verifier soundness]
\label{rem:verifier-noiseless-floor}
\Cref{thm:verifier-soundness} is an exact-arithmetic, noiseless, on-grid property, and no robustness to noise is claimed. To gauge the scale of the residual a near-miss can produce, consider one unit-amplitude tone placed in a bin adjacent to a true tone: its nonzero residual on the $k + |\mathcal{G}|$-sample window is of order $\Theta(1/N^2)$, and a coherent $k$-tone near-miss is of order $\Theta(k/N^2)$. This $\Theta(k/N^2)$ figure is an illustrative near-miss scale, not a universal residual lower bound.

With arbitrary complex coefficients a nonzero residual can be scaled below any fixed tolerance, and coherent multi-tone cancellation can shrink the particular checked samples further. Consequently a fixed absolute float tolerance is unsound: there is no single ``floor'' below which a fixed absolute zero test is safe. The formal model is therefore exact arithmetic; a fielded implementation requires exact or rational arithmetic on the $k + |\mathcal{G}|$ samples (or coefficient-normalized, condition-aware thresholds), not a single absolute floor. The high-degree Vandermonde conditioning of the system is irrelevant to the matrix rank (which is exactly $s$ for distinct nodes, since $\det = \prod_{i<j}(\omega_j - \omega_i) \neq 0$); the soundness statement is exact and has no gap in exact arithmetic, the float caveat being parallel to the $\mathrm{nMAD} < \tau_{cv}$ surrogate already used for the magnitude screen.
\end{remark}

\begin{remark}[Optional reduced-sample CRT variant]
\label{rem:verifier-crt-variant}
An alternative verifier reads decimated views $X_{m_j}$ of $x$ at pairwise-coprime divisor-moduli $m_j \mid N$ and passes if and only if every residual view $R_j[\rho] = (m_j/N)\sum_{f \equiv \rho} \hat{r}[f]$ is zero. By the aliasing identity of \Cref{subsec:sampling-operator}, an isolated residual frequency forces its view bin nonzero, so soundness holds provided the family is \emph{$2k$-separating} (every $\le 2k$-subset has each member isolated in some view). A worst-case $2k$-separating family requires $J = \Theta(k\log_k N)$ views, a factor $\Theta(k)$ above the $L = \Theta(\log_k N)$ recovery stages, and each view must divide $N$, so such a family does not in general fit on the engineered recovery length; \Cref{def:verifier} is therefore preferred for the unconditional guarantee. The CRT variant is retained only as a reduced-sample option when a divisor-compatible $2k$-separating family is engineered into $N$.
\end{remark}

\section{Detailed Global-Label Worked Example: $N = 46{,}189$, $k=8$}
\label{app:detailed-example}

This appendix presents the complete per-stage global-label trace of a single representative frequency ($f = 67$) for the running $N = 46{,}189$, $k=8$ example introduced in \Cref{sec:algorithm-overview}, following the operative engine of \Cref{subsec:global-label-engine} (\Cref{lem:global-label-recovery,def:total-recovery,thm:global-label-count}). The engineered transform length $N = 11 \cdot 13 \cdot 17 \cdot 19 = 46{,}189$ satisfies the divisor-compatibility envelope of \Cref{thm:sampling-bluestein} as an equality $Q_L = N$.

\emph{Scope of this appendix.} The scope disclaimer of \Cref{ex:n1024_k8} applies unchanged: this is a single-frequency trace, not a full $k = 8$ support example. At each stage $\ell$ the three shifted observations of the screened bin are read by the B\'ezout phase rule (\Cref{lem:global-label-recovery}) into a single global frequency label $g$, the per-stage label sets $\mathcal{G}_\ell$ are intersected, and the surviving candidate is coefficient-estimated and verifier-checked. No residue tuple is ever formed (\Cref{rem:no-product-blowup}); the Garner/CRT reconstruction plays no part in the trace and is recorded, as a non-operative cross-check only, in a footnote.\footnote{\emph{Non-operative CRT cross-check (not the demonstrated path).} The operative engine recovers the global label $g = 67$ directly per bin and forms no residue tuple. For completeness only, the residue tuple $(1,2,16,10)$ also identifies $67$ by CRT injectivity via the standard two-step Garner lift~\cite{garner1959,knuth1997}, $f = r_1 + M_1 \cdot [(r_2 - r_1) \cdot M_1^{-1} \bmod M_2]$ applied incrementally ($11^{-1}\bmod 13 = 6$, $143^{-1}\bmod 17 = 5$, $2431^{-1}\bmod 19 = 18$, assembling $a_2 = 1 + 11\cdot 6 = 67$ and $a_3 = a_4 = 67$). This tuple route is the superseded screen-only path of \Cref{app:tuple-negative}; it is a textbook cross-check and is not the engine the algorithm runs.}

Concretely, the trace exhibits: the per-stage B\'ezout phase read $u_{31}, u_{32}, w$ and the exact root-of-unity index $g = \log_\omega w$; the class check $g \equiv r_\ell \pmod{p_\ell}$; the emission of $g$ into $\mathcal{G}_\ell$; the intersection $\mathcal{G} = \bigcap_\ell \mathcal{G}_\ell = \{67\}$ with $|\mathcal{G}| = 1 = O(k)$; the single zero-shift coefficient $\hat{X}[67] = (N/p_\ell)\,z_0$; and a final-verifier PASS that certifies $\hat{x} = x$ exactly. The amplitude-normalization convention is fixed by the stage operator of \Cref{subsec:sampling-operator}: on a singleton bin holding tone $g$, the decimated-DFT value carries the aliasing prefactor $p_\ell/N$, so $|X_{p_\ell}^{(\sigma)}[r]| = (p_\ell/N)\,|X[g]|$ at every shift $\sigma$ and stage $\ell$, and the recovered coefficient is $X[g] = (N/p_\ell)\, X_{p_\ell}^{(0)}[r]$ (the zero-shift read-out). The numerical values below are computed with the PLUS-sign analysis phase $e^{+j 2\pi \sigma f/N}$ of \Cref{subsec:sampling-operator} for the true coefficient $X[67] = 1$. Screen convention for this trace: shifts $\{0, 31, 63\}$; formal screen is the exact ratio/root screen of \Cref{def:exact-screen} (occupied equal-magnitude bin, $D_{\max} = 0$, root membership, class consistency); label emission follows the total rule of \Cref{def:total-recovery} executed by the costed procedure of \Cref{def:label-decoding-model}; measure-zero two-tone screen-passing phantoms are possible under this shift set (\Cref{prop:two-tone-phantom}), though none arises in this single-tone trace.

\noindent\emph{Stage 1: Parameter Selection.}
\begin{itemize}
  \item Use $L = 4$ stages with the exact-product transform length $N = \prod_{\ell=1}^{4} p_\ell = 11 \cdot 13 \cdot 17 \cdot 19 = 46{,}189$ under engineered divisor compatibility (the generic logarithmic stage-count bound $\lceil \log_8 N \rceil = 6$ is loose here)
  \item Select primes $\{p_1, p_2, p_3, p_4\} = \{11, 13, 17, 19\}$ (in $[8, 19]$; range widened from $[k, 2k] = [8, 16]$ to $[k, 3k]$ to obtain 4 primes, per \Cref{lem:prime-existence})
  \item Set co-prime shifts $\{\sigma_1, \sigma_2, \sigma_3\} = \{0, 31, 63\}$ with $\gcd(\Delta_1, \Delta_2) = \gcd(31, 32) = 1$ for the $\Delta_1, \Delta_2$ increment condition
\end{itemize}

\noindent\emph{Stages 2 to 4: Per-Stage Global-Label Recovery for $f = 67$.}

For each stage $\ell$ we read the screened bin $r_\ell = 67 \bmod p_\ell$ at the three shifts $\{0,31,63\}$, form the per-step ratios $u_{31} = z_{31}/z_0$ and $u_{32} = z_{63}/z_{31}$, and recover a single global label $g$ from $w = u_{32}\,u_{31}^{-1} = \omega^{g}$ (\Cref{lem:global-label-recovery}, $\omega = e^{+j2\pi/N}$). Because the phase read returns the global index, the recovered $w$ is identical at every stage: $\arg u_{31} = 2\pi\cdot 31\cdot 67/N = 16.1883^\circ$, $\arg u_{32} = 2\pi\cdot 32\cdot 67/N = 16.7105^\circ$, so $\arg w = 2\pi\cdot 67/N = 0.5222^\circ$ and $g = \log_\omega w = 67$ at each stage. The stages differ only in the common bin magnitude $|z_\sigma| = p_\ell/N$ and in the class check $g \equiv r_\ell \pmod{p_\ell}$, the latter being the per-stage acceptance gate that admits $g = 67$ into $\mathcal{G}_\ell$.

\medskip
\noindent\textbf{\emph{Stage 1 ($p_1 = 11$):}}
\begin{itemize}
  \item Screened bin: $r_1 = 67 \bmod 11 = 1$
  \item Observations ($|z_\sigma| = p_1/N = 11/46189 = 2.3815 \times 10^{-4}$): $z_0 = 2.3815 \times 10^{-4}$; $z_{31} = (2.2871 + 0.6640\,j)\times 10^{-4}$; $z_{63} = (1.9996 + 1.2935\,j)\times 10^{-4}$, i.e.\ $z_\sigma = (p_1/N)\,e^{+j2\pi\sigma\cdot 67/N}$
  \item B\'ezout phase read: $u_{31} = z_{31}/z_0 = e^{+j\,16.1883^\circ}$, $u_{32} = z_{63}/z_{31} = e^{+j\,16.7105^\circ}$, $w = u_{32}\,u_{31}^{-1} = e^{+j\,0.5222^\circ}$
  \item Global label: $g = \log_\omega w = 67$; class check $g \bmod 11 = 1 = r_1$ \checkmark $\Rightarrow$ emit $g = 67$
  \item Stage label set: $\mathcal{G}_1 = \{67\}$
\end{itemize}

\noindent\textbf{\emph{Stage 2 ($p_2 = 13$):}}
\begin{itemize}
  \item Screened bin: $r_2 = 67 \bmod 13 = 2$
  \item Observations ($|z_\sigma| = p_2/N = 13/46189 = 2.8145 \times 10^{-4}$): $z_0 = 2.8145 \times 10^{-4}$; $z_{31} = (2.7029 + 0.7847\,j)\times 10^{-4}$; $z_{63} = (2.3632 + 1.5287\,j)\times 10^{-4}$
  \item B\'ezout phase read: $u_{31} = e^{+j\,16.1883^\circ}$, $u_{32} = e^{+j\,16.7105^\circ}$, $w = e^{+j\,0.5222^\circ}$ (identical to stage 1)
  \item Global label: $g = \log_\omega w = 67$; class check $g \bmod 13 = 2 = r_2$ \checkmark $\Rightarrow$ emit $g = 67$
  \item Stage label set: $\mathcal{G}_2 = \{67\}$
\end{itemize}

\Needspace*{5\baselineskip}
\noindent\textbf{\emph{Stage 3 ($p_3 = 17$):}}
\begin{itemize}
  \item Screened bin: $r_3 = 67 \bmod 17 = 16$
  \item Observations ($|z_\sigma| = p_3/N = 17/46189 = 3.6805 \times 10^{-4}$): $z_0 = 3.6805 \times 10^{-4}$; $z_{31} = (3.5346 + 1.0261\,j)\times 10^{-4}$; $z_{63} = (3.0903 + 1.9991\,j)\times 10^{-4}$
  \item B\'ezout phase read: $u_{31} = e^{+j\,16.1883^\circ}$, $u_{32} = e^{+j\,16.7105^\circ}$, $w = e^{+j\,0.5222^\circ}$
  \item Global label: $g = \log_\omega w = 67$; class check $g \bmod 17 = 16 = r_3$ \checkmark $\Rightarrow$ emit $g = 67$
  \item Stage label set: $\mathcal{G}_3 = \{67\}$
\end{itemize}

\noindent\textbf{\emph{Stage 4 ($p_4 = 19$):}}
\begin{itemize}
  \item Screened bin: $r_4 = 67 \bmod 19 = 10$
  \item Observations ($|z_\sigma| = p_4/N = 19/46189 = 4.1135 \times 10^{-4}$): $z_0 = 4.1135 \times 10^{-4}$; $z_{31} = (3.9504 + 1.1468\,j)\times 10^{-4}$; $z_{63} = (3.4539 + 2.2343\,j)\times 10^{-4}$
  \item B\'ezout phase read: $u_{31} = e^{+j\,16.1883^\circ}$, $u_{32} = e^{+j\,16.7105^\circ}$, $w = e^{+j\,0.5222^\circ}$
  \item Global label: $g = \log_\omega w = 67$; class check $g \bmod 19 = 10 = r_4$ \checkmark $\Rightarrow$ emit $g = 67$
  \item Stage label set: $\mathcal{G}_4 = \{67\}$
\end{itemize}

\noindent\emph{Intersection.} The per-stage label sets are combined by intersection (\Cref{thm:global-label-count}), never by a Cartesian product:
\[
  \mathcal{G} = \mathcal{G}_1 \cap \mathcal{G}_2 \cap \mathcal{G}_3 \cap \mathcal{G}_4 = \{67\}, \qquad |\mathcal{G}| = 1 = O(k).
\]
Each stage emits one label per screened bin and the stages are intersected, so the count is the proved $O(k)$ bound of \Cref{thm:global-label-count}; no residue tuple $(1,2,16,10)$ is formed and the multiplicative $2^L$-type rectangle of \Cref{rem:no-product-blowup} cannot arise.

\noindent\emph{Coefficient recovery.} The single surviving candidate $g = 67$ is coefficient-estimated from the zero-shift observation at any stage (\Cref{lem:global-label-recovery}(4)):
\[
  \hat{X}[67] = (N/p_\ell)\, z_0 = (N/p_\ell)\,X_{p_\ell}^{(0)}[r_\ell] = 1.0 + 0.0\,j,
\]
a single zero-shift estimate (the $\sigma=0$ analysis phase is unity, so no de-rotation is required). The four stages give the identical value to machine precision; this is an exactness identity, not a median or average over the $3L$ observations.

\noindent\emph{Final verification.} The sparse candidate $\hat{X}$ supported on $\mathcal{G} = \{67\}$ is handed to the sound final verifier of \Cref{def:verifier}, sized to $k + |\mathcal{G}|$ consecutive residual samples (\Cref{def:verifier}). The residual $r = x - \hat{x}$ is identically zero on those samples, so the verifier returns PASS, certifying $\hat{x} = x$ exactly (\Cref{thm:verifier-soundness}). Had a phantom label survived the intersection and been used in this candidate, the residual would be nonzero, the verifier would reject the candidate, and the input would route to the dense $O(N \log N)$ fallback (\Cref{rem:phantom-labels,thm:hybrid-correctness}).

\section{Cost Models: Further Discussion}
\label{app:cost-models}

This appendix collects the discussion behind the two cost models of \Cref{def:label-decoding-model}: the per-bin costed decoding procedure in step-by-step form, the full disclosure of what model (b) is and is not (with the bit-level implementability discussion and the Indyk, Kapralov \& Price concession, including its source provenance), and the practical decoding consequences for a fielded implementation, moved here from \S\ref{sec:computational}. Nothing in this appendix alters the models: both cost theorems (\Cref{thm:sparse-path-complexity,thm:cost-comparison-model}) are stated against, and read from, the compact definition in the body.

\begin{table*}[!t]
\centering
\caption{Illustrative Design Targets for the Adaptive-View Extension}
\label{tab:adaptive-views}
\small
\begin{tabular}{cccccc}
\toprule
$\theta_{\min}$ & Views & $\theta$ Used & $p = e^{-1/\theta}$ & First-Pass Yield & Rounds \\
\midrule
6 & 3 & 6 & 0.846 & $p^3 \approx 60\%$ & 2-3 \\
3 & 3 & 3 & 0.716 & $p^3 \approx 37\%$ & 2-3 \\
2 & 4 & 2 & 0.607 & $p^3(4-3p) \approx 49\%$ & 2-3 \\
1 & 4 & 3 (bumped) & 0.716 & $p^3(4-3p) \approx 68\%$ & 1-2 \\
\bottomrule
\end{tabular}
\end{table*}

\subsection{The Costed Decoding Procedure in Full}

Model (a) of \Cref{def:label-decoding-model} grants three unit-cost oracles (exact $\arg(\cdot)$, exact root-power evaluation, and exact nearest-integer rounding) and charges the per-bin label read at $O(1)$. These unit-cost primitives are an oracle assumption about root-index decoding: they assert only the per-read charge assigned to the three exact reads, and make no claim that unit-cost exact $\arg$, root-power, or rounding operations are implementable at the bit level. The rounding oracle is what licenses the $O(1)$ charge for the index conversion in step~1 below; without it, locating the nearest of the $N$ admissible indices costs $O(\log N)$ exact comparisons.

Per candidate-singleton of stage $\ell$ (class $r$), the label read $g = \log_\omega w$ of \Cref{def:total-recovery} is executed as the costed procedure:
\begin{enumerate}
  \item (Guess.) Compute the rounded phase index $g_0 = \operatorname{round}\!\bigl(N \arg(w)/2\pi\bigr) \bmod N$; the rounding is a candidate generator only, and no correctness rests on it. Cost: $O(1)$ (one unit-cost $\arg$ evaluation and one unit-cost nearest-integer rounding, both oracle primitives of this model).
  \item (Exact root-membership verification.) Verify $\omega^{g_0} = w$ by exact comparison. If $w = \omega^{g}$ for some integer $g \in [0,N)$, exact $\arg$-extraction and exact rounding give $g_0 = g$, so the comparison succeeds precisely when $w$ is an $N$th root of unity, and then $g_0$ is its unique index; this decides condition~1 of \Cref{def:total-recovery}. Cost: $O(1)$.
  \item (Exact residue verification.) Verify the class consistency $g_0 \equiv r \pmod{p_\ell}$ (condition~2 of \Cref{def:total-recovery}) by exact integer arithmetic, $O(1)$; a label that persists across the stage chain additionally incurs the $O(L)$ per-candidate intersection arithmetic already charged in \Cref{thm:sparse-path-complexity}.
\end{enumerate}
The bin emits the label $g = g_0$ only when both verifications pass exactly, and otherwise emits no label; the procedure therefore implements the emit conditions of \Cref{def:total-recovery}, in their full if-and-only-if force, without changing their semantics, exactness resting on the exact comparisons and never on the $\arg$-quantization. In model (b), with the rounding and root-evaluation oracles withdrawn, each of the two removed primitives is replaced at $O(\log N)$ per bin (\Cref{def:label-decoding-model}(b)). The resulting candidate-construction total, $O(k \log^2\! N / \log k)$, is stated and proved as \Cref{thm:cost-comparison-model}.

\subsection{The Comparison Model: Disclosure and the IKP Concession}

\emph{What model (b) is and is not.} The single primitive it removes is the root-index oracle: unit-cost rounding and unit-cost root evaluation. It is not a bit-complexity model, and no claim of bit-level implementability is made for it: exact real arithmetic, exact comparison, and above all unit-cost exact $\arg(\cdot)$ are all retained, and this paper supplies no finite input representation, no precision or word-size bound, and no implementation of exact-angle comparison. Model (b) is therefore strictly weaker than model (a) and strictly stronger than any bit-level machine; the honest reading of the gap between \Cref{thm:sparse-path-complexity} and \Cref{thm:cost-comparison-model} is the price of the root-index read alone, not the price of descending to bits. A genuine bit-complexity account of either model is left open (\S\ref{sec:limitations}). One comparison is worth stating against this paper: Indyk, Kapralov \& Price~\cite{ikp2014soda} likewise assume unit-cost access to a precomputed object, the values of their flat-window filter and its transform, and then discharge that assumption, computing the values on the fly to polynomially small precision without affecting their overall running time; beyond that discharged assumption their machine model is left unstated (\Cref{tab:comparison}). No such discharge is offered here for the root-index read, which is why it is declared as a model rather than removed.

We record these named models as a portable convention (per-bin online decoding cost, with the preprocessing and storage posture declared explicitly) for the companion deterministic CRT sparse-FFT papers~\cite{flouro2026keyed,flouro2026safety}.

\subsection{Root-Index Decoding in Practice}

\noindent In model (a) (\Cref{def:label-decoding-model}), exact $\arg(\cdot)$, exact root powers $\omega^{m}$, and exact nearest-integer rounding are unit-cost, so the read is $O(1)$ per bin: the rounded phase index $g_0 = \operatorname{round}(N \arg(w)/2\pi)$ is a guess, and the exact comparison $\omega^{g_0} = w$ together with $g_0 \equiv r \pmod{p_\ell}$ decides emission. This is where the headline $O(k \log N)$ of \Cref{thm:sparse-path-complexity} comes from, and it is a modeling assumption about decoding: no claim is made that these primitives are available at unit cost at the bit level.

\noindent In the comparison real-RAM model (\Cref{def:label-decoding-model}(b)), the two removed primitives are paid for explicitly: locating $g_0$ among the $N$ admissible indices takes $O(\log N)$ exact comparisons by binary search, and forming $\omega^{g_0}$ takes $O(\log N)$ exact multiplications by repeated squaring, so the per-bin read is $O(\log N)$ and the construction total rises to the $O(k \log^2 N / \log k)$ of \Cref{thm:cost-comparison-model}. A fielded implementation sits in this second model unless it can supply the unit-cost primitives, and this is the honest reading of the cost gap between the two theorems.

\noindent Two implementation postures are excluded by design. A precomputed table of the $N$ roots would make the read a lookup, but at $\Theta(N)$ preprocessing and storage, defeating the sublinear budget; \Cref{def:label-decoding-model} therefore declares no preprocessing and $O(1)$ storage beyond the samples. And a floating-point $\arg$ read cannot replace the exact comparison: the rounded index is a candidate generator only, and every emission decision rests on the exact root-membership and class tests, so no correctness claim of this paper depends on the quantization of $\arg$.

\section{Keyed Multi-View Extension (Design History and Future Work)}
\label{app:keyed-multiview}

This appendix records the keyed three-view (multi-view) extension and its adaptive-modulus parameter schedule as design history and future work, not as results. The material is not part of the core engine: it is required neither for the stage/shift sparse-path theorem (\Cref{thm:sparse-path-complexity}) nor for correctness, which comes from the sound final verifier (\Cref{thm:verifier-soundness}). The deterministic candidate count on the screened path is likewise the proved $O(k)$ bound of \Cref{thm:global-label-count}, obtained from stages and shifts rather than views.

\medskip
\noindent\textbf{Withdrawn statements.} An earlier version of this paper presented four formal statements in this appendix: a keyed multi-tone rejection theorem, a deterministic collision-rejection lemma built on a keyed primary-view selector, a generic per-residue pairing-uniqueness theorem, and an adaptive moduli-scaling proposition. Their hypotheses were never formally defined: the keyed gain patterns and the key-matching predicate were never specified, the primary-view selector was never given a mathematical definition (no total order was constructed), and the pairing-uniqueness argument assumed a per-residue property that no preceding result establishes. The moduli-scaling proposition argued its four-view case from the inequality $(\theta k)^3 \le (\theta k)^4$, which cannot lower-bound every triple product. All four are therefore withdrawn as unproven design directions, not results; the paper claims nothing on their basis.

\medskip
\noindent\textbf{What remains sound.} Two fragments of the withdrawn material are elementary and correct, and are retained as observations. Triple injectivity: for pairwise coprime view moduli $M_1, M_2, M_3$ with $M_1 M_2 M_3 \ge N$, each on-grid frequency $f \in [0, N)$ determines a unique residue triple $(f \bmod M_1,\, f \bmod M_2,\, f \bmod M_3)$; this is the Chinese Remainder Theorem and uses no keyed machinery. Simultaneous collision exclusion: consequently, two distinct active frequencies cannot collide in all three views at once, since equal residues in every view would force equality in $[0, N)$. Neither fragment yields per-view injectivity on a gated active set, and neither constrains multi-tone bucket behavior under keyed gains; those were exactly the withdrawn claims.

\medskip
\noindent\textbf{Why this is design history rather than a live path.} A separate manuscript studies whether increasing the view count can restore deterministic soundness. Because that result is not used here, we do not state it formally: nothing in the operative chain of this paper rests on it, and this appendix records the extension as design history only, with no soundness claim attached to any view count.

\medskip
\noindent\textbf{Retained design narrative: adaptive moduli and views.} The implementation idea was to select $V$ pairwise coprime moduli as the smallest $V$ distinct primes $M_i \ge \theta \cdot k$, with the view count $V$ and the scaling factor $\theta$ chosen from the signal length: compute
\[
\theta_{\min} = \left\lceil \left(\frac{N}{k^3}\right)^{1/3} \right\rceil
\]
and set $(V, \theta) = (3, \theta_{\min})$ if $\theta_{\min} \ge 3$; $(V, \theta) = (4, 2)$ if $\theta_{\min} = 2$; and $(V, \theta) = (4, 3)$ if $\theta_{\min} = 1$. The three-view instantiation is covered by the case-specific check $(\theta_{\min} k)^3 \ge N$, which gives $M_1 M_2 M_3 \ge N$ and hence triple injectivity. A four-view variant intended to decode from any 3 clean views would instead need every triple product to satisfy $\min_{i<j<\ell} M_i M_j M_\ell \ge N$, a condition to be verified per instance; no general argument for it is claimed here.

This schedule sits outside the engineered divisor-compatibility envelope of \Cref{thm:sampling-bluestein} (the moduli need not divide $N$), so none of the paper's formal cost or correctness statements apply to it. The entries of \Cref{tab:adaptive-views} are derived from the per-view singleton-occupancy heuristic $p = e^{-1/\theta}$ under a uniform-residue assumption; they are illustrative design targets, not measurements, and carry no guarantee. The figures account only for nominal per-view occupancy and explicitly exclude cross-stage association cost, final verification cost, dense-fallback frequency, and implementation constants. The $\theta_{\min} = 1$ case bumps to $\theta = 3$ (rather than using $\theta = 2$) as a design choice that increases $p$.

\medskip
\noindent\textbf{Worked parameter choices (design targets only).} A fixed $\theta = 6$ with 3 views guarantees coverage of $[0, N)$ for every $N \le (6k)^3 = 216k^3$, a conservative range: prime moduli $M_i \ge 6k$ make the true product $M_1 M_2 M_3 \ge (6k)^3$, and coverage persists up to that true product. At $k = 8$ the smallest primes $\ge 48$ are $53, 59, 61$, whose product $190{,}747$ exceeds $48^3 = 110{,}592$, so for example $N = 120{,}000$ lies above $(6k)^3$ yet is still covered. An $N$-dependent choice of $\theta$ and $V$ removes this ceiling at the design-parameter level (\Cref{tab:adaptive-views}):
\begin{itemize}
\item $N = 1{,}024$, $k = 8$: $\theta_{\min} = 2 \to$ 4 views, primes $M_1{=}17, M_2{=}19, M_3{=}23, M_4{=}29$ (triple product $\ge 7{,}429$)
\item $N = 16{,}384$, $k = 16$: $\theta_{\min} = 2 \to$ 4 views, primes $M_1{=}37, M_2{=}41, M_3{=}43, M_4{=}47$ (triple product $\ge 65{,}231$)
\item $N = 10^6$, $k = 50$ (beyond the regime of \Cref{rem:engineering-regime}; illustration only): $\theta_{\min} = 2 \to$ 4 views, primes near 100 (triple product $\ge 10^6$)
\end{itemize}

\bibliographystyle{IEEEtran}
\bibliography{OkLogN}

\end{document}